\renewcommand{\ALG@beginalgorithmic}{\small}
\crefname{equation}{}{}
\newtheorem{question}{Question}
\Crefname{question}{Question}{Questions}
\theoremstyle{acmdefinition}
\Crefname{condition}{Condition}{Conditions}
\newcommand{\reop}{\operatorname{Re}}
\newcommand{\imop}{\operatorname{Im}}
\newcommand{\midv}{\,\middle\vert\,}
\newcommand{\CI}{\mathrel{\perp\mspace{-10mu}\perp}}
\newcommand{\Obs}{\operatorname{Obs}}
\newcommand{\Sub}{\mathbf{Sub}}
\newcommand{\Obj}{\mathbf{Obj}}
\newcommand{\Objc}{\mathbf{Obj}_{\textup{c}}}
\newcommand{\Objq}{\mathbf{Obj}_{\textup{q}}}
\newcommand{\Rt}{\mathbf{Rt}}
\newcommand{\Rtc}{\mathbf{Rt}_{\textup{c}}}
\newcommand{\Rtq}{\mathbf{Rt}_{\textup{q}}}
\newcommand{\Atr}{\mathbf{Attr}}
\newcommand{\Rule}{\mathbf{Rule}}
\newcommand{\Req}{\mathbf{Req}}
\newcommand{\Auth}{\mathit{Auth}}
\newcommand{\PU}{\mathit{Post}}
\newcommand{\Int}{\mathbf{Int}}
\newcommand{\Macc}{M_{\textup{acc}}}
\newcommand{\Mcc}{M_{\textup{c}}}
\newcommand{\Mq}{M_{\textup{q}}}
\newcommand{\Me}{M_{\textup{e}}}
\newcommand{\SUBSYS}{\mathsf{SUBSYS}}
\newcommand{\GRP}{\mathsf{GRP}}
\newcommand{\ENT}{\mathsf{ENT}}
\newcommand{\Co}{\mathbb C}
\newcommand{\N}{\mathbb N}
\newcommand{\bbone}{\mathbb 1}
\newcommand{\Id}{\bbone}
\renewcommand{\Pr}{\mathop{\bf Pr\/}}                   
\newcommand{\E}{\mathop{\bf E\/}}
\newcommand{\Ex}{\mathop{\bf E\/}}
\DeclarePairedDelimiter\parens{\lparen}{\rparen}
\DeclarePairedDelimiter\abs{\lvert}{\rvert}
\DeclarePairedDelimiter\norm{\lVert}{\rVert}
\DeclarePairedDelimiter\floor{\lfloor}{\rfloor}
\DeclarePairedDelimiter\braces{\lbrace}{\rbrace}
\DeclarePairedDelimiter\bracks{\lbrack}{\rbrack}
\newcommand{\calA}{\mathcal{A}}
\newcommand{\calE}{\mathcal{E}}
\newcommand{\calH}{\mathcal{H}}
\newcommand{\calP}{\mathcal{P}}
\newcommand{\calS}{\mathcal{S}}
\newcommand{\calX}{\mathcal{X}}
\newcommand{\calY}{\mathcal{Y}}
\begin{document}

%%
%% The "title" command has an optional parameter,
%% allowing the author to define a "short title" to be used in page headers.
\title{Access Control Threatened by Quantum Entanglement}

%%
%% The "author" command and its associated commands are used to define
%% the authors and their affiliations.
%% Of note is the shared affiliation of the first two authors, and the
%% "authornote" and "authornotemark" commands
%% used to denote shared contribution to the research.
\author{Zhicheng Zhang}
\orcid{0000-0002-7436-0426}
\affiliation{%
  \institution{University of Technology Sydney}
  \city{Sydney}
  \country{Australia}
}
\email{zhicheng.zhang@student.uts.edu.au}

\author{Mingsheng Ying}
\orcid{0000-0003-4847-702X}
\affiliation{%
  \institution{University of Technology Sydney}
  \city{Sydney}
  \country{Australia}
}
\email{mingsheng.ying@uts.edu.au}

%%
%% By default, the full list of authors will be used in the page
%% headers. Often, this list is too long, and will overlap
%% other information printed in the page headers. This command allows
%% the author to define a more concise list
%% of authors' names for this purpose.
\renewcommand{\shortauthors}{}

%%
%% The abstract is a short summary of the work to be presented in the
%% article.
\begin{abstract}
    Access control is a cornerstone of computer security that prevents unauthorised access to resources.
	In this paper, we study access control in quantum computer systems.
	We present the first explicit scenario of a \textit{security breach} when a classically secure access control system 
	is straightforwardly adapted to the quantum setting.
	The breach is ultimately due to that quantum mechanics allows the phenomenon of entanglement
	and violates Mermin inequality,
	a multi-party variant of the celebrated Bell inequality.
	This reveals a \textit{threat from quantum entanglement} to access control
	if existing computer systems integrate with quantum computing.
	To \textit{protect} against such threat, we propose several new models of quantum access control, 
	and rigorously analyse their security, flexibility and efficiency.
\end{abstract}

%%
%% The code below is generated by the tool at http://dl.acm.org/ccs.cfm.
%% Please copy and paste the code instead of the example below.
%%
\begin{CCSXML}
<ccs2012>
   <concept>
       <concept_id>10002978.10003006</concept_id>
       <concept_desc>Security and privacy~Systems security</concept_desc>
       <concept_significance>500</concept_significance>
       </concept>
   <concept>
       <concept_id>10003752.10003753.10003758</concept_id>
       <concept_desc>Theory of computation~Quantum computation theory</concept_desc>
       <concept_significance>500</concept_significance>
       </concept>
 </ccs2012>
\end{CCSXML}

\ccsdesc[500]{Security and privacy~Systems security}
\ccsdesc[500]{Theory of computation~Quantum computation theory}

%%
%% Keywords. The author(s) should pick words that accurately describe
%% the work being presented. Separate the keywords with commas.
\keywords{Access control, quantum computer systems, quantum entanglement, security breach, protection, operating system security}
%% A "teaser" image appears between the author and affiliation
%% information and the body of the document, and typically spans the
%% page.

%\received{20 February 2007}
%\received[revised]{12 March 2009}
%\received[accepted]{5 June 2009}

%%
%% This command processes the author and affiliation and title
%% information and builds the first part of the formatted document.

% PUT PAGE NUMBER, REMEMBER TO DELETE
\settopmatter{printfolios=true}

\maketitle

\section{Introduction}
\label{sec:introduction}

A fundamental issue in computer security 
is how to control access to resources in computer systems.
Initially proposed with the seminal concept of explicitly managing \textit{rights} granted to a \textit{subject} to access an \textit{object},
the access matrix model~\cite{Lampson69,Lampson74,GD71,Denning71,HRU76} has served as the standard core model of access control.
Over time, according to different security requirements, 
it has evolved into various sophisticated access control models,
such as discretionary~\cite{Saltzer74,DRKJ85}, mandatory~\cite{BLP73,BLP76,Biba77,CW87,Denning76} and role-based access control~\cite{FK92,SCFY96,FSGKC01,BBF00}, along with their further extensions,
which are now widely deployed in modern computer systems.

On the other hand, the rapid emerging of quantum computing technology has raised increasing attention to 
the security in quantum computer systems.
For example, to protect user privacy against untrusted quantum computing servers,
numerous efforts have been devoted to delegated quantum computation (and further, blind quantum computation)~\cite{Childs05,ABE08,BFK09,DKL12,Morimae12,MF12,GMMR13,MPF13,MF13,MK13,Morimae14,MDK15,FK17,Fitz17}, as well as
quantum computer trusted execution environment~\cite{TXD+23,TXD+b23,TJ24,TDX+24} in the recent years.
Protecting security against hardware and side channel attacks in quantum computers has also attracted much attention~\cite{MDS22,XES23,XCMS23}.
The first attempt to study access control in quantum systems 
was made by~\cite{YFY13} through quantum information flow security.
In the context of quantum internet, specific control of entanglement accessibility was also studied~\cite{GI19}.

Still, a significant question --- whether 
the access control security will be threatened by
integrating quantum computing into existing computer systems --- remains open.
More precisely:
\begin{question}
	\label{qus:main}
		Suppose you are a user of a classical system, which earns your trust by providing a proof  
		that its access control mechanism can protect your private information from being leaked to other users.
		One day, you are notified that the system will be upgraded by integrating new quantum computing services
		and the access control remains unchanged.
		Should you still trust the security of the system?
\end{question}
This question is becoming increasingly crucial as IBM Quantum and other researchers are actively exploring quantum-centric supercomputing as the next generation of classical-quantum hybrids. This approach integrates traditional high-performance computers with quantum computing~\cite{Gambetta22a,Gambetta22b,AAB24,MCG24,PC24}. Definitely, we hope that the hybrid systems remain secure.

However, the answer to \Cref{qus:main} is probably \textit{no}.
The first aim of this paper is to show that a security breach can occur in this case, through an explicit scenario.
This highlights the necessity to develop new models of access control for quantum computer systems, which is the second aim of this paper.
%we reveal the threat from quantum entanglement to access control if existing access control systems integrate with quantum computing.
%Then, we propose several new models of quantum access control to protect against such threat.

\subsection{Contributions}
\label{sub:contributions}

More concretely, the contribution of this work is twofold:
\begin{itemize}
	\item 
		\textit{Reveal of threats from quantum entanglement to access control}
		if existing computer systems integrate with quantum computing (\Cref{sec:scenario_threat_from_quantum_entanglement}).

		For the first time, an explicit scenario of a security breach is presented 
		when a classical secure access control system is straightforwardly adapted to the quantum setting.
		The ultimate cause of this breach is quantum entanglement,
		a fundamental  phenomenon that distinguishes quantum mechanics from classical mechanics.
		A key tool in the proof of insecurity is Mermin inequality~\cite{Mermin90}, a multi-party variant of the celebrated Bell inequality~\cite{Bell64,CHSH69,FC72,ADR82,GHSZ90},
		which will be violated by entanglement even without direct communication.
		Since the entanglement is believed to be the source of quantum advantages~\cite{JL03} for many quantum algorithms~\cite{Shor94,Grover96,Lloyd96,HHL09},
		our scenario highlights the importance of developing models of quantum access control against threats from entanglement.
	\item
		\textit{Design of models of quantum access control}, including
		subsystem control, group control and entanglement control (\Cref{sec:protection_access_control_in_quantum_computing}).

		These models allow explicit control of multi-object quantum operations or entanglement.
		We rigorously analyse their (a) \textbf{security} against the threat in our scenario;
		(b) \textbf{flexibility} regarding the granularity of specifying the	access control;
		and (c) \textbf{efficiency} regarding the space and time complexity for implementation.
\end{itemize}

\section{Background}
\label{sec:background}

\subsection{Access Control}
\label{sub:access_control}

Let us start with the framework of access control considered in this paper,
which adopts ideas and concepts from the modern access (usage) control framework UCON~\cite{SP03,PS04,ZPSP05}.

An access control system involves the following components.
\begin{itemize}
	\item 
		A set $\Sub$ of \textit{subjects},
		a set $\Obj$ of \textit{objects},
		and a set $\Rt$ of \textit{rights}.

		A subject can access an object by exercising a right.
		Examples of subjects include users, processes and applications.
		Examples of objects include files, directories, registers, pages and segments.
		Examples of rights include \texttt{read}, \texttt{write} and \texttt{execute}.

		In this paper, we restrict our subjects to be users, objects to be (classical and quantum) registers,
		and rights to be abilities to perform certain operations on registers.
        Unless explicitly specified, classical registers are initialised to $0$,
        and quantum registers are initialised to $\ket{0}$.
	\item
		A set $\Atr$ of \textit{attributes}.

		An attribute is a (partial) function with domain $\Sub$, $\Obj$ or $\Sub\times \Obj$.
		Attributes can be used by the system to enforce access control rules.

		Standard attributes in the literature~\cite{PS04,HFK13} are only functions with domain $\Sub$ and $\Obj$,
		known as subject attributes and object attributes, respectively.
		Here, we slightly extend this notion for convenience of presentation.
	\item
		A set $\Rule$ of \textit{rules}.

		A rule describes how the system handles an access \textit{request} of the form $\parens*{s,o,r}\in \Sub\times \Obj\times \Rt$,
		which means that subject $s$ requests to exercise right $r$ on object $o$.
		
		In this paper, we focus on \textit{authorisation rule} that describes whether to grant or deny an access request,
		and \textit{post-update rule} that describes how to update attributes after authorising a request.
		They will be explained in detail later.
\end{itemize}
We use a $5$-tuple $\calA=\parens*{\Sub,\Obj,\Rt,\Atr,\Rule}$ to denote an access control system,
and $\Req=\Sub\times \Obj\times \Rt$ to denote the set of requests.
In context without ambiguity, we simply say system instead of access control system,
and request instead of access request.

%Mention properties of the reference monitor.

The most basic rule in access control is the authorisation rule.
Upon receiving a request, the system will determine whether to grant or deny the access
according to a function $\Auth$.

\begin{definition}[Authorisation]
	An authorisation rule is a function $\Auth:\Req\rightarrow \braces*{\mathit{true},\mathit{false}}$.
	%An action $(s,o,r)\in \Act$ is authorised by access matrix $M$ iff $M\parens*{s,o,r}=1$.
\end{definition}

A widely used attribute is the access matrix, initially proposed in the seminal paper~\cite{Lampson74}
and later refined by~\cite{GD71,Denning71,HRU76}.

\begin{definition}[Access matrix]
	\label{def:access-matrix}
	An access matrix is a function $\Macc:\Sub\times \Obj\rightarrow \calP\parens*{\Rt}$.
\end{definition}

The simplest authorisation rule based on the access matrix is defining
$\Auth\parens*{s,o,r}\equiv r\in \Macc\bracks*{s,o}$.
In practice, the access matrix is usually sparse 
and can be implemented via access control lists (ACL),
capability lists~\cite{Lampson74}, or other data structures to reduce the space and time complexity~\cite{SS94}.
However, for illustration, we still focus on the access matrix.

%A example of access control system using the access matrix for authorisation is as follows.

Another rule we will use (in particular, in \Cref{sub:control_of_entanglement}) is the post-update rule.
After a request is authorised, 
and before the next request is handled,
several post-update operations can be performed on attributes
according to the partial function $\PU$, for future authorisation decisions.

\begin{definition}[Post-update]
        \label{def:post-update}
	A post-update rule is a partial function $\PU$ such that
	for request $\parens*{s,o,r}\in \Req$ and attribute $f\in \Atr$,
	$\PU\parens*{s,o,r}\parens*{f}=f'$ for some $f'$ of the same function type as $f$.
\end{definition}

Intuitively, after authorising an request $\parens*{s,o,r}$,
rule $\PU$ updates $f$ to $f'$. 
For example, for an attribute $f:\Obj\rightarrow \braces*{0,1}$,
a possible post-update rule can be $\PU\parens*{s,o,r}\parens*{f}\equiv f'_u$,
where $f'_u\bracks*{u}=1-f\bracks*{u}$ and $f'_u\bracks*{o}=f\bracks*{o}$ for $o\neq u$.
This $\PU$ means whenever a request $\parens*{s,o,r}$ is authorised,
$f\bracks*{u}$ is updated to be $1-f\bracks*{u}$ and other $f\bracks*{o}$ remain unchanged.

%\begin{definition}[Action and authorisation]
%	Given the access matrix $M$,
%	let $\mathit{allow}:\mathbf{Act}\rightarrow \braces*{0,1}$ be an authorisation function
%	that decides whether an action is allowed.
%	At this point, we define 
%	\begin{equation*}
%		\mathit{allow}\parens*{s,c,X}=
%		\begin{cases}
%			1, & c\in M\bracks*{s,X},\\
%			0, & c\notin M\bracks*{s,X}.
%		\end{cases}
%	\end{equation*}
%\end{definition}

\subsection{Execution Model}
\label{sub:execution_model}

Next we describe the execution of an access control system.
Since there are multiple subjects in the system,
the execution is intrinsically concurrent.
%For our purpose, we assume the system takes one atomic action at each time unit.
For our purpose, we assume the requests in the system are atomic.
During an execution, the system receives a sequence of requests from subjects and enforces the access control rules accordingly.
To describe the non-deterministic ordering of requests made by different subjects,
we use the notion of a scheduler (like in e.g., \cite{Puterman14,Rabin80}).

\begin{definition}[Scheduler]
	A scheduler of the system is a function $S:\bigcup_{k=0}^{\infty} \Req^k \to \Sub$.
\end{definition}

Intuitively, given any finite sequence of access requests, 
the scheduler $S$ determines the next subject $s$ to make a request.
Note that a scheduler can be an adversary:
if we want to prove a safety property that something bad (e.g., security breach) never happens in a system,
we need to consider it against all schedulers.

To describe valid sequences of requests under a scheduler, we introduce the notion of a history.

\begin{definition}[History]
	\label{def:history}
	Given a scheduler $S$ of the system,
	a history is a (finite or infinite) sequence of access requests %authorised by the access matrix;
	%specifically controlled by access matrix $M(t)$ is a sequence of actions
	$\alpha=\alpha(0),\alpha(1),\ldots $ such that
	for all $t\in \N$,
	if $\alpha(t)=\parens*{s,o,r}$
	then $s=S\parens*{\alpha\parens*{0},\ldots, \alpha\parens*{t-1}}$.
	Further, a history $\alpha$ is said to be \textit{authorised} if $\Auth\parens*{\alpha\parens*{t}}=\mathit{true}$ for all $t\in \N$.
\end{definition}

%It is easy to see that any prefix of an authorised history is also authorised.

The scheduler $S$ alone does not fully determines the history of an execution.
While it determines the next subject $s$ to make a request $\parens*{s,o,r}$, the object $o$
and the right $r$ in this request are determined by the behaviour of the subject $s$,
which is specified by a program $P_s$ (or any other computational model).
Let us collect all programs $P_s$ for $s\in \Sub$ and the initial state of objects into a program $P$.
Then, we can use $\parens*{S,P}$ to denote an execution of the system.

Each execution $\parens*{S,P}$ generates a history (or a probabilistic distribution over histories,
if $P$ is probabilistic).
The actual generation is determined by the specific programming language and
the explicit semantics of the program and requests.
%and the initial state of objects and attributes in the system.
For example, consider a system with $\Sub=\braces*{s}$ and $\Obj=\braces*{o}$.
Suppose program $P_s\equiv o:=o+1$, and $o$ is initialised to $0$.
In this case, if $\Rt=\braces*{\texttt{read},\texttt{write}}$,
then the history generated could be $\parens*{s,o,\texttt{read}},\parens*{s,o,\texttt{write}}$;
if $\Rt=\braces*{\texttt{inc}}$, where $\texttt{inc}$ means the ability to increment the value of the register by $1$,
then the history generated could be $\parens*{s,o,\texttt{inc}}$.

%In the analysis, often we will fix a scheduler $S$.
%If the subjects in the system are deterministic,
%then the system only yields a deterministic history.
%If the subjects are probabilistic,
%then the system can yield a probability distribution over histories.
%Given a scheduler $S$, how the programs of all subjects
%generate the history (or the distribution over histories) of requests

For simplicity, we do not bother formalising such generation,
because our focus is the access control system.
Nevertheless, we can define the equivalence between two systems 
with respect to authorised histories (see \Cref{def:history}) as follows.

\begin{definition}[Equivalent systems]
	\label{def:equivalent-sys}
	Two systems $\calA$ and $\calA'$ are said to be equivalent,
	denoted by $\calA\simeq \calA'$,
	if for any program $P$ of concern and any scheduler $S$:
	\begin{itemize}
		\item 
			$\parens*{S,P}$ can generate (valid) histories in both $\calA$ and $\calA'$; and
		\item
			The histories generated by $\parens*{S,P}$ in $\calA$ are authorised 
			iff the histories generated by $\parens*{S,P}$ in $\calA'$ are authorised.
	\end{itemize}
\end{definition}

%It is easy to see that the equivalence between two systems implies that 
%they have the same $\Sub$ and $\Obj$.

%In this paper, when we compare two systems,
%the set of programs of concern are independent of $\Rt,\Atr$ and $\Rule$.

An access control model is a family of access control systems.
An important metric to evaluate an access control model is its \textit{flexibility}.
While in general the flexibility cannot be characterised by a quantity,
we can compare the flexibility of two models by the following definition.

\begin{definition}[Flexibility]
	\label{def:flexibility}
	An access control model $\mathsf{M}$ is said to be less flexible than another $\mathsf{M}'$,
	denoted by $\mathsf{M}\leq \mathsf{M}'$,
	if for any system $\calA\in\mathsf{M}$,
	there exists a system $\calA'\in\mathsf{M}'$ such that $\calA\simeq \calA'$.
	Further, $\mathsf{M}$ is said to be strictly less flexible than $\mathsf{M}'$,
	denoted by $\mathsf{M}< \mathsf{M}'$,
	if $\mathsf{M}\leq \mathsf{M}'$ and $\mathsf{M}'\not\leq \mathsf{M}$.
\end{definition}

%An request $\alpha\parens*{t}$ at time $t$ takes the state of the system at time $t$
%to the state at time $t+1$.
%Also, as each subject has local memory,
%We do not bother formalising the state of the system and the effect of an action.
%In our example, these concepts will be explicit and clear.
%In particular, $\alpha\parens*{t}=(u,o)$ represents that at time $t\in \N$,
%the subject $u$ performs an atomic operation $o$.
%At any time $t\in \N$,
%the effect of $\alpha\parens*{t}$ is 
%changing the state 
%controlled by the access matrix $M(t')$,
%changing the state of the system from $\sigma\parens*{t}$ to $\sigma\parens*{t'}$.

%\begin{definition}[Authorised history]
%	Given an access control system,
%	a history $\alpha\parens*{t}$ is authorised if $\forall t\in \N, \Auth\parens*{\alpha\parens*{t}}=\mathit{true}$.
%\end{definition}

\subsection{Quantum Computing}
\label{sub:quantum_computing}

Now we briefly introduce quantum computing.
The readers are referred to \cite{NC10} for a more thorough introduction.

A qubit is the basic unit of information in quantum computing, 
compared to its classical counterpart bit.
The state of a qubit lives in the Hilbert space $\calH_{\mathbf{Bit}}=\Co^2$,
and can be represented by a complex vector $\alpha\ket{0}+\beta\ket{1}$
with $\abs*{\alpha}^2+\abs*{\beta}^2=1$,
a superposition of the computational basis states $\ket{0}$ and $\ket{1}$.
A quantum register consists of a set of qubits.
The state of a quantum register composed of $n$ qubits 
can be presented by $\sum_{x\in \braces*{0,1}^n}\alpha_x\ket{x}$ 
with $\sum_{x\in \braces*{0,1}^n}\abs*{\alpha_x}=1$,
and lives in the Hilbert space $\calH_{\mathbf{Bit}}^{\otimes n}$.
Quantum superposition leads to the phenomenon of \textit{quantum entanglement}:
state $\ket{\psi}$ is entangled iff it cannot be represented as a product $\ket{\psi_1}\otimes\ket{\psi_2}$.
For example, the simplest entangled state is an EPR state $\ket{+}_{AB}=\frac{1}{2}\parens*{\ket{0}_A\ket{0}_B+\ket{1}_A\ket{1}_B}$,
where we use the subscripts $A$ and $B$ to denote two qubits.

In quantum computing, there are two basic types of quantum operations.
The first is unitary gate.
After applying a unitary gate $U$ (with $UU^\dagger =U^\dagger U =\Id$), 
a quantum state $\ket{\psi}$ becomes $U\ket{\psi}$.
Typical one-qubit unitary gates include the three Pauli gates 
$X=\begin{bsmallmatrix}0&1\\ 1&0\end{bsmallmatrix},Y=\begin{bsmallmatrix}0&-i\\i&0\end{bsmallmatrix},
Z=\begin{bsmallmatrix}1&0\\0&-1\end{bsmallmatrix}$,
the Hadamard $H=\frac{1}{\sqrt{2}}\begin{bsmallmatrix}1&1\\1&-1\end{bsmallmatrix}$ gate, 
the $S=\begin{bsmallmatrix}1&0\\0&i\end{bsmallmatrix}$ gate and the $T=\begin{bsmallmatrix}1&0\\0&e^{-i\pi/4}\end{bsmallmatrix}$ gate.
Typical two-qubit unitary gates include the $\mathit{CNOT}=\ket{0}\!\bra{0}\otimes \Id+\ket{1}\!\bra{1}\otimes X$ gate.
$\mathit{SWAP}=\sum_{x,y}\ket{xy}\!\bra{yx}$ gate is also useful.

The second type of quantum operations is measurement.
A measurement can be specified by a set of Kraus operators $M=\braces*{M_m}_{m}$ with $\sum_{m}M_m^\dagger M_m=\Id$.
After applying the measurement $M$, a quantum state $\ket{\psi}$ becomes 
$M_m\ket{\psi}/\norm{M_m\ket{\psi}}$ and yields classical outcome $m$ with probability $\norm*{M_m\ket{\psi}}^2$.
Typical measurements include the computational basis measurement with $M_m=\ket{m}\!\bra{m}$.
A measurement is \textit{complete} if the range of $m$ is equal to the dimension of the state being measured.

\section{Scenario: Threat from Quantum Entanglement}
\label{sec:scenario_threat_from_quantum_entanglement}

In this section, to answer \Cref{qus:main} in the introduction,
we reveal a threat from quantum entanglement
by presenting an explicit scenario of a security breach when a classically secure access control system is straightforwardly adapted to the quantum setting. As computer security usually concerns the worst case, the threat shows the inadequacy of existing access control models for quantum computer security.
In \Cref{sub:problem_setting},
a classical access control system consisting of multiple users 
is specified using the notations in \Cref{sub:access_control}.
This system is proven to be secure in the classical case in \Cref{sub:security_protected_in_the_classical_case}.
Then, we prove it becomes no longer secure after it is straightforwardly adapted to the quantum case in \Cref{sub:security_breach_in_the_quantum_case}.

\subsection{Problem Setting}
\label{sub:problem_setting}

%The following problem setting formalises the question raised previously in \Cref{sec:introduction}.
%Consider a system of $n+2$ users $v,u,w_1,\ldots,w_n$.
%Suppose that you are the user $u$ who is concerned about the security of your private information when you use the system.
%%$u$ is a new user who is concerned about its security;
%$v$ is a system user whose behaviour is fixed,
%and $w_1,\ldots,w_n$ are other users whose behaviours are arbitrary.
%
%In the following, we describe a system in which the security is provable in the classical setting,
%but the threat from quantum entanglement can be explicitly seen.

	%\label{ex:main-example}
Let us consider a system $\calS=\parens*{\Sub,\Obj,\Rt,\Atr,\Rule}$
with
\begin{itemize}
    \item 
    $\Sub=\braces*{u,v,w_1,\ldots,w_n}$,
    \item
    $\Obj=\braces*{A,B,C_1,\ldots,C_n,\Macc}$,
    \item 
    $\Rt=\braces*{\texttt{read},\texttt{write},\texttt{flip},\texttt{all}}$,
    \item 
    $\Atr=\braces*{\Macc,L}$, and
    \item
    $\Rule=\braces*{\Auth}$.
\end{itemize}
Here, $L:\Sub \rightarrow \Int$ with $\Int$ being the set of (bounded) integers, and $\Auth\parens*{s,o,r}\equiv r\in \Macc\bracks*{s,o}$.

The ingredients of this system are explained as follows.
\begin{itemize}
    \item 
        In $\Sub$: $u,v,w_1,\ldots,w_n$ are all users.
    \item
        In $\Obj$: $A,B$ are bit registers and $C_1,\ldots, C_n$ are integer registers.
        Slightly abusing the notation, $\Macc$ represents an integer register\footnote{
        Here, using an integer register to store the whole matrix $\Macc$ is solely for simplifying the presentation of results in \Cref{sec:scenario_threat_from_quantum_entanglement}.
        In practice and later in \Cref{sec:protection_access_control_in_quantum_computing},
        we actually use multiple register (or memory locations) to store a matrix (that represents an attribute),
        where each register (or location) can store an entry of the matrix.
        } storing the access matrix $\Macc$.
        %All objects except the access matrix $M$ are initialised to $0$ at $t=0$.
    \item
        %$\mathbf{Rt}=\braces*{\texttt{read},\texttt{write},\texttt{flip}}$ consists of operations on registers.
        In $\Rt$: $\texttt{read}$ and $\texttt{write}$ correspond to standard read and write operations.
        Exercising $\texttt{flip}$ means changing every bit $0$ to $1$ and $1$ to $0$ in a register.
        The right $\texttt{all}$ means full access, allowing to perform any operations.
    \item
        The $\Atr$ consists of only two elements:
        (i) the access matrix $\Macc$ in \Cref{def:access-matrix}; and
        (ii) an attribute $L:\Sub\rightarrow \Int$.
        Here, for each user $s\in \Sub$,
        $L\bracks*{s}$ denotes the local memory of $s$,
        used to store temporary results for exercising rights $\texttt{read}$ and $\texttt{write}$.\footnote{
            In the classical literature, the local memory is often not explicitly stated as an attribute.
            In this paper, we include $L$ as an attribute for the following two reasons:
            $L$ is useful in the statement and analysis of system security (see \Cref{thm:classical-security});
            and whether $L$ is classical or quantum in a system with quantum objects needs to be explicitly specified 
            (see \Cref{sub:security_breach_in_the_quantum_case,sec:protection_access_control_in_quantum_computing}).
        }
        Only $s$ can access $L\bracks*{s}$.
        It should be noticed that $L$ is not in $\Obj$ and thus not guarded by the access control.
        %It is important to note the implicit assumption that every subject has a local memory 
        %(composed of registers):
        %when a \texttt{read} is executed, the result will be stored into the local memory;
        %when a \texttt{write} is executed, prior results stored in the local memory can be used.
        %These local memories are not listed in $\mathbf{Obj}$ for simplicity.
        %Each subject has full access to its local memory, 
        %on which operations are not controlled by the access matrix and hence assumed to take no time
        %(compared to actions in $\Req$).
        %There are some subtleties about local memories when we generalise the access matrix to the quantum setting.
\end{itemize}

The behaviour of $v$ is fixed and shown as a program in \Cref{fig:userv}.
We should notice that it is actually a probabilistic program,
as in Line $2$, $v$ samples from a random distribution.
Consequently, the security we prove to be protected in this system later in \Cref{sub:security_protected_in_the_classical_case}
is also probabilistic.
%where we use use $\texttt{*}=\Rt$ to denote full access to an object.
We explain what accesses are allowed when $\Macc=M_0,M_1,M_2$ in \Cref{fig:userv}, respectively:
\begin{itemize}
    \item 
        $\Macc=M_0$:
        user $u$ can write one bit of secret information into $A$.
        Other users $w_1,\ldots,w_n$ can access $C_1,\ldots,C_n$,
        through which they can communicate and devise some strategy in an attempt to learn the secret of $u$ later.
    \item
        $\Macc=M_1$:
        user $v$ can read the secret of $u$ from $A$ and access $B$, $C_1,\ldots,C_n$.
        For each $j\in [n]$, user $w_j$ can only access $C_j$ and flip $B$.
        These $w_j$ cannot communicate with each other,
        but they can exploit any pre-determined strategy.
    \item
        $\Macc=M_2$:
        for each $j\in [n]$,
        $w_j$ can access $C_j$ and read $B$.
        These $w_j$ still cannot communicate with each other.
\end{itemize}

\begin{figure}
    \centering
      \includegraphics[width=0.4\textwidth]{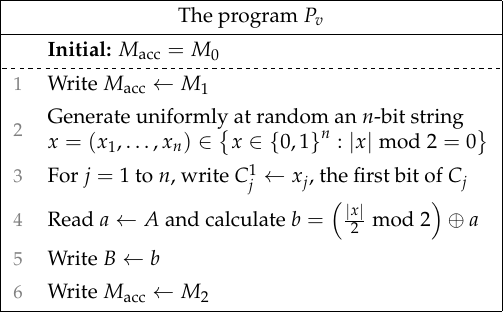}
    \caption{
        %The left column shows the fixed (informal) program of user $v$, and the right column records the time
        %(empty if unconcerned) of when $v$ make requests. 
        The program $P_v$ that describes the behaviour of user $v$.
        Here, matrices $M_0$, $M_1$ and $M_2$ are
    shown in~\Cref{fig:acc-matrix-1,fig:acc-matrix-2,fig:acc-matrix-3}, respectively.}
    \label{fig:userv}
\end{figure}

\begin{figure}[t]
    \centering
      \includegraphics[width=0.35\textwidth]{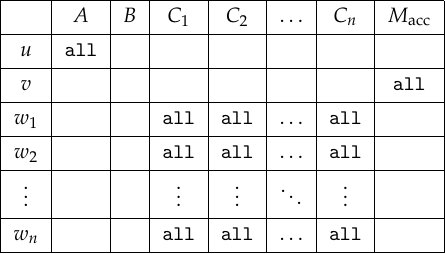}
    \caption{Matrix $M_0$. }
    \label{fig:acc-matrix-1}
\end{figure}

\begin{figure}[t]
    \centering
      \includegraphics[width=0.385\textwidth]{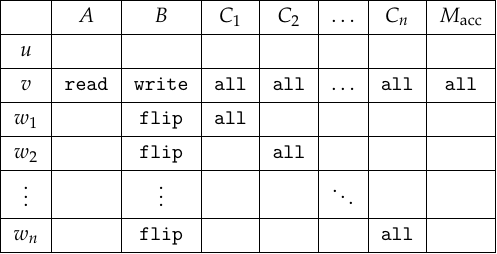}
    \caption{Matrix $M_1$. }
    \label{fig:acc-matrix-2}
\end{figure}

\begin{figure}[t]
    \centering
      \includegraphics[width=0.36\textwidth]{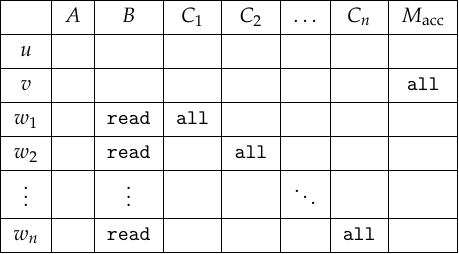}
    \caption{Matrix $M_2$. }
    \label{fig:acc-matrix-3}
\end{figure}

Finally, to correspond with \Cref{qus:main}, 
we can think of $u$ as the user concerned about the security,
$v$ as a system user with trusted and fixed behaviour,
and $w_1,\ldots,w_n$ as other users of the system.
%Finally, for the system in \Cref{ex:main-example},
Our security policy is to prevent the secret information of user $u$
from leaking to other users $w_1,\ldots,w_n$.

%In \Cref{sub:security_protected_in_the_classical_case},
%we will show such security is protected in the classical case (with probability close to $1$);
%and in \Cref{sub:security_breach_in_the_quantum_case},
%we will show there is a security breach when the system is directly lifted to the quantum case.

%The behaviours of all subjects are controlled by the (dynamic) access matrix.

\subsection{Security Protected in the Classical Case}
\label{sub:security_protected_in_the_classical_case}

If the whole system described in \Cref{sub:problem_setting} is classical, 
then we can rigorously prove that the amount of information from $u$ leaked to any other user $w_j$ is exponentially small in $n$.
This proof can assures user $u$ that $u$ can safely write private information into the system,
without (significantly) leaking it to other users $w_1,\ldots,w_n$.
%Therefore, the new user $u$ can trust the security of the system.
As a notation convention, for a register $X$, we use $X\parens*{t}$ to represent its value at time $t$.

\begin{theorem}[Security protected in the classical case]
	\label{thm:classical-security}
	Let $n\geq 5\in \N$.
	If all objects in the system described in \Cref{sub:problem_setting} are classical,
	then the secret information of user $u$ can only leak with negligible probability.
	That is, for any execution $\parens*{S,P}$ with $P_v$ described in \Cref{fig:userv},
	any time $t_u,t_w\in \N$ and any $j\in [n]$, the mutual information
	\begin{equation}
		\label{eq:sec-thm-main}
		I\parens*{A(t_u);\Obs\parens*{w_j,t_w}}\leq 2^{-(n-7)/2},
	\end{equation}
	where
	$\Obs\parens*{w_j,t}:=\braces*{o\in \mathbf{Obj}:\textup{\texttt{read}}\in \Macc\bracks*{w_j,o}\parens*{t}}\cup \braces*{L\bracks*{w_j}(t)}$
	is what $w_j$ can observe at time $t$.
\end{theorem}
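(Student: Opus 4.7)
The plan is to reduce the security question to a classical multiparty nonlocal game, invoke Mermin's inequality to bound the local players' success probability, and then convert that bound on distinguishing advantage into a bound on mutual information. First I would partition the execution into the three phases induced by $v$'s fixed program successively setting $\Macc = M_0, M_1, M_2$. Direct inspection of these matrices shows that no $w_j$ ever holds $\texttt{read}$ on $A$, so all information about $A(t_u)$ that can enter $\Obs(w_j, t_w)$ must be routed through operations of $v$ on $B$ and $C_1, \ldots, C_n$. Specifically, during phase $M_0$ the $w_k$'s may freely share information through the $C_k$'s to agree on a joint strategy and a common random string $r$. During phase $M_1$ each $w_k$ can only read its own $C_k$ and may only modify $B$ via $\texttt{flip}$, so the cumulative effect on $B$ is $b_0 \oplus \bigoplus_k f_k(r, x_k)$, where $b_0$ is $v$'s initialisation of $B$, $x_k$ is what $v$ writes into $C_k$ (a function of the read value $a$ and $v$'s sampled randomness), and $f_k$ is a local deterministic function. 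During phase $M_2$ each $w_k$ can only read $B$ and its own $C_k$. Thus $\Obs(w_j, t_w)$ is a deterministic function of $(r, x_j, B_{\mathrm{final}})$.

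Next, I would argue that the sampling in $P_v$ realises a canonical Mermin referee: given $a$ and its internal randomness, $v$ deterministically computes inputs $(x_1, \ldots, x_n)$ from a prescribed joint distribution and an initial $b_0$ so that the winning condition of the associated Mermin game, namely $b_0 \oplus \bigoplus_k f_k(r, x_k) = g(a, x_1, \ldots, x_n)$ for an appropriate target function $g$, is precisely what decides whether $B_{\mathrm{final}}$ distinguishes $a=0$ from $a=1$. The classical Mermin inequality~\cite{Mermin90} then gives, for any choice of local functions $\{f_k\}$ and any fixed $r$, a bias of at most $2^{-\lceil n/2 \rceil}$ on this winning condition, and averaging over the shared randomness $r$ preserves the bound by convexity. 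Translating the bias into a statistical distance between the conditional distributions $P_{\Obs(w_j, t_w) \mid A(t_u) = 0}$ and $P_{\Obs(w_j, t_w) \mid A(t_u) = 1}$ gives a total variation bound $d_{\mathrm{TV}} \leq 2^{-(n-c)/2}$ for a small absolute constant $c$.

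Finally, I would convert this distinguishing advantage into a mutual-information bound via the inequality $I(A(t_u); \Obs(w_j, t_w)) \leq d_{\mathrm{TV}}$, which follows from Taylor-expanding the binary entropy around $1/2$ together with the identity $\E_Y[|\Pr(A(t_u) = 0 \mid Y) - 1/2|] = d_{\mathrm{TV}}/2$. Combining the two steps yields the claimed bound $I(A(t_u); \Obs(w_j, t_w)) \leq 2^{-(n-7)/2}$ for $n \geq 5$, with the constant $7$ absorbing the losses from the Mermin step and from the TV-to-MI conversion. The main obstacle is the second step: one has to verify that the program in \Cref{fig:userv} truly instantiates a canonical Mermin game, and in particular that all $M_0$-phase communication among the $w_k$'s can be absorbed into a single shared random string $r$ without loss of generality and that $v$'s sampling matches the Mermin input distribution up to the constant $c$. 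Handling the universal quantifiers over the scheduler $S$ and the times $t_u, t_w$ then reduces to observing that the worst case is $t_u$ just after $u$'s write and $t_w$ at the end of the execution, since any earlier time only shrinks the available observation.
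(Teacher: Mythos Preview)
Your overall plan---reduce to a Mermin-type local game and convert the resulting bias bound into a mutual-information bound---is exactly the paper's strategy. Two technical points deserve attention, and both are handled explicitly in the paper's full proof.

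First, since $w_j$'s observation includes its own input $x_j$ (via $C_j$), the relevant bias must be taken \emph{conditioned on $x_j$}. Your $n$-party Mermin bound controls $\bigl|\E_{x}[\text{bias}]\bigr|$ for each fixed $r$, whereas the TV distance you need is $\E_{r,x_j}\bigl[|\text{bias conditioned on }(r,x_j)|\bigr]$, and Jensen's inequality goes the wrong way here. The fix, which the paper carries out, is to apply the $(n-1)$-party Mermin inequality for each fixed pair $(r,x_j)$ (in the paper's notation, each fixed $(\lambda_1,x_1)$); this yields a \emph{pointwise} bound on the conditional bias, at the cost of one extra factor of $\sqrt{2}$ that is absorbed into the constant.

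Second, your TV-to-MI conversion invokes the identity $\E_Y\bigl[\lvert\Pr(A=0\mid Y)-\tfrac12\rvert\bigr]=d_{\mathrm{TV}}/2$, which holds only when the prior on $A$ is uniform; but the theorem quantifies over all $P_u$ and hence over arbitrary priors on $A(t_u)$. The paper sidesteps this by using the pointwise bias bound from the previous step to show that the posterior-to-prior ratio $\Pr[A=a\mid \Obs]/\Pr[A=a]$ is at most $1+2^{-(n-7)/2}$ uniformly in $a$ and in the observed value, and then applies $\log(1+z)\le z$ directly inside the definition of mutual information. This route works for any prior without further case analysis.

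With these two adjustments your sketch becomes the paper's proof; the paper additionally packages the conditional-independence bookkeeping in a probabilistic-graphical-model formalism, but that is presentation rather than a different idea.
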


Intuitively, even for a small system with approximately $100$ users,
any user $w_j$ can only learn about $10^{-14}$ bits of secret information from $u$,
an amount that is practically negligible.
The proof of \Cref{thm:classical-security} essentially relies the following variant of Mermin inequality~\cite{Mermin90}.

\begin{lemma}[A variant of Mermin inequality~\cite{Mermin90}]
	\label{lmm:Mermin}
	Let $n\in \N$ be a fixed natural number.
	Let $\calX_{b}:=\braces*{x\in \braces*{0,1}^n: \abs*{x}\bmod 2 =b}$, where $b\in \braces*{0,1}$.
	Let $\calY=\braces*{0,1}^n$.
	For any fixed $b\in \braces*{0,1}$,
	consider random variable $X=X_1,\ldots,X_n$ chosen uniformly at random from $\calX_{b}$,
	any random variable $Y=Y_1,\ldots,Y_n$ in $\calY$, and any random variable $\Lambda=\Lambda_1,\ldots,\Lambda_n$ independent of $X$ such that 
	\begin{equation*}
		\Pr\bracks*{Y=y\mid X=x,\Lambda=\lambda}=\prod_{j=1}^n \Pr\bracks*{Y_j=x_j\midv X_j=x_j,\Lambda_j=\lambda_j},
	\end{equation*}
	Then we have
	\begin{equation}
		\abs*{\E\bracks*{\parens*{-1}^{\abs*{X}/2+\abs*{Y}+b/2}}}\leq 2^{-n/2+1}.
		\label{eq:Mermin-ex}
	\end{equation}
\end{lemma}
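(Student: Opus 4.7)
The plan is to reduce the probabilistic statement to a Fourier-analytic identity on the Boolean cube, exploiting the product structure in the conditional law of $Y$ and the classical bound of $\sqrt{2}$ per coordinate familiar from local hidden-variable arguments.

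First, I would absorb the randomness of $Y$ by defining $f_j(a,\lambda_j) := \E[(-1)^{Y_j} \mid X_j = a, \Lambda_j = \lambda_j] \in [-1,1]$. The factorisation assumption then gives $\E[(-1)^{|Y|} \mid X = x, \Lambda = \lambda] = \prod_j f_j(x_j, \lambda_j)$. Using $\Lambda \CI X$ together with $|\calX_b| = 2^{n-1}$, the quantity to bound becomes
\begin{equation*}
    \E_\Lambda\!\left[2^{-(n-1)} \sum_{x \in \calX_b} (-1)^{(|x|+b)/2} \prod_j f_j(x_j, \Lambda_j)\right]\!,
\end{equation*}
so it suffices to bound the inner sum uniformly over $\lambda$.

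The key step is to notice that although $(-1)^{|x|/2}$ is only defined on the even-parity class, on $\calX_b$ it agrees with the globally multiplicative quantity $i^{|x|+b} = i^b \prod_j i^{x_j}$. Combining this with the parity indicator $\bbone[x \in \calX_b] = \tfrac{1}{2}\bigl(1 + (-1)^b \prod_j (-1)^{x_j}\bigr)$ and setting $z_j := f_j(0,\lambda_j) + i\, f_j(1,\lambda_j) \in \Co$, the restricted sum collapses into a complex product:
\begin{equation*}
    \sum_{x \in \calX_b} (-1)^{(|x|+b)/2} \prod_j f_j(x_j, \lambda_j) = \frac{i^b}{2}\left(\prod_j z_j + (-1)^b \prod_j \overline{z_j}\right)\!.
\end{equation*}
Since $|z_j|^2 = f_j(0,\lambda_j)^2 + f_j(1,\lambda_j)^2 \leq 2$, the modulus of the right-hand side is at most $\prod_j |z_j| \leq 2^{n/2}$. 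Dividing by $2^{n-1}$ and taking expectation over $\Lambda$ then yields the claimed $2^{-n/2+1}$.

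The main obstacle is this complexification step: the exponent $|x|/2$ has no multiplicative structure over coordinates, so a naive Fourier expansion restricted to the parity class $\calX_b$ seems to lose the product form. Lifting $-1 = i^2$ converts the awkward real exponent into a genuine product $\prod_j i^{x_j}$, after which the $\sqrt{2}$-per-site amplitude bound --- exactly the classical hidden-variable ceiling underlying Mermin's inequality --- closes the argument.
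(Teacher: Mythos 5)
Your proposal is correct and follows essentially the same route as the paper's proof: your $z_j = f_j(0,\lambda_j) + i\,f_j(1,\lambda_j)$ and the expression $\tfrac{i^b}{2}\bigl(\prod_j z_j + (-1)^b \prod_j \overline{z_j}\bigr)$ are exactly the paper's $F_b$ (its $\reop$/$-\imop$ of $\prod_j z_j$), with the same per-site bound $\abs*{z_j}\leq \sqrt{2}$ and the same conditioning on $\Lambda=\lambda$ followed by the uniform-on-$\calX_b$ normalisation $2^{-(n-1)}$. Your explicit complexification via $i^{\abs*{x}+b}$ and the parity indicator just spells out the step the paper dispatches with "by calculation".
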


The original Mermin inequality in~\cite{Mermin90} is the special case of $b=0$ in \Cref{lmm:Mermin}.
Mermin inequality extends the celebrated Bell inequality~\cite{Bell64,CHSH69,FC72,ADR82,GHSZ90} to the $n$-party case
and reveals the fundamental difference between classical and quantum mechanics.
%{\color{blue}the above variant only slightly differs from the original one.}

For readability, 
we only provide a proof sketch of \Cref{thm:classical-security} below.
The full proof is rather tedious (though complicated) and deferred to \Cref{sub:proof_of_classical-security}.

\begin{proof}[Proof sketch of \Cref{thm:classical-security}]
	Intuitively, within the system described in \Cref{sub:problem_setting},
	the ``best possible'' strategy for users $w_j$ to learn the secret information of $u$ is
	learning the value $\frac{\abs*{x}}{2}\bmod 2$ in \Cref{fig:userv}
	and then taking the $\oplus$ operation with $b$ in \Cref{fig:userv} to exactly recover $a$.
	However, the behaviours of all $w_j$ are constrained by the access matrix $\Macc$,
	and this strategy turns out to only work with negligible probability,
	essentially due to the variant of Mermin inequality in \Cref{lmm:Mermin}.
	%so $w_j$ need to collaborate (e.g., using pre-shared randomness) and exploit their restricted rights.

	Now we explain how to formalise the above intuition.
	Consider any execution $\parens*{S,P}$.
	%Specifically, since the program $P$ being considered is probabilistic, the value of any register can be seen as a random variable.
	By analysing how $\Macc$ constrains information flow,
	proving \Cref{eq:sec-thm-main} can be first reduced to proving the special case of $t_u=t_1$ and $t_w\geq t_2+1$,
	where $t_1$ and $t_2$ are time points after the write requests 
	in Lines 1 and 6 of $P_v$ (see \Cref{fig:userv}) are issued, respectively.
	Denote $C_j,L\bracks*{w_j}$ by $D_j$.
	Using the symmetry of $\Macc$ (with respect to different $w_j$),
	we can further reduce our goal to proving 
	\begin{equation}
		\label{eq:mutual-info-close}
		\frac{\Pr\bracks*{A\parens*{t_1}=a\mid B\parens*{t_w}=b,D_1\parens*{t_w}=d}}
		{\Pr\bracks*{A\parens*{t_1}=a}}\approx 1
	\end{equation}
	for any bit $a,b$ and integer $d$.
	Here, the degree of the approximation $\approx$ is related to the RHS of \Cref{eq:sec-thm-main}.

	The remaining analysis largely relies on the concept of conditional independence and techniques in probabilistic graphical models.
	We first identify several time points and random variables of concern.
	For example, for each $j\in [n]$, let $t_{v,j}$ be the time point after the write request in Line 3 of $P_v$ is issued.
	Then, $C_j^1\parens*{t_{v,j}}$ is equal to the value $x_j$ chosen by $v$ in $P_v$.
	Similarly, we can find another random variable $B\parens*{t_v}$ equal to the value $b$ written by $v$ in $P_v$,
	where $t_v$ corresponds to Line 5.
	Next, we can analyse relations between these random variables,
	based on the program $P_v$, matrices $M_0,M_1,M_2$ and temporal ordering of requests.
	These relations are visualised as a graph in \Cref{fig:graphical}, deferred to \Cref{sub:proof_of_classical-security}.
	From the graph, we can obtain conditional independence relations.
	They are used in a tedious but complicated analysis to break down~\Cref{eq:mutual-info-close},
	through decomposition of joint probability distributions, into terms closer to the form in Mermin inequality in \Cref{lmm:Mermin}.
	In particular, we need to use \Cref{lmm:Mermin} for the $\parens*{n-1}$-party case
	(instead of $n$-party, technically due to $\Obs\parens*{w_j,t_w}=B\parens*{t_w},D_j\parens*{t_w}$).
	Finally, we can obtain an upper bound $2^{-(n-7)/2}$ on the degree of approximation in \Cref{eq:mutual-info-close},
	and the conclusion follows.
\end{proof}

\subsection{Security Breach in the Quantum Case}
\label{sub:security_breach_in_the_quantum_case}

Now let us consider the case when registers $C_1,\ldots,C_n$ in the system described in \Cref{sub:problem_setting}
become quantum registers.
This could happen, as indicated in \Cref{qus:main},
when the system upgrades by integrating new quantum computing services.
We need to consider how to properly lift\footnote{In this paper, the terms ``adapt'' and ``lift'' will be used interchangeably.}
this system in \Cref{sub:problem_setting} to the quantum setting.
%Possible rights on a quantum register include applying unitary gates and measurements,
%and in general quantum operations.
We do not bother considering how to lift \texttt{read}, \texttt{write} and \texttt{flip} to the quantum case.
Instead, let us focus on how to lift the right \texttt{all}
(representing full access to a register),
as this suffices to reveal the key problem.
%First, the set $\mathbf{Rt}$ of rights needs to be generalised.
%Let $\Rt_{\textup{c}}=\braces*{\texttt{read},\texttt{write},\texttt{flip}}$ be the set of rights for classical registers
%as in \Cref{ex:main-example}.

At first glance, one might try to interpret a request $\parens*{s,X,\texttt{all}}$ in the quantum setting
as: user $s$ can perform any quantum operation $\calE$ on quantum register $X$.
However, this interpretation forbids any quantum entanglement between objects in $\Obj$.
Since entanglement is believed to be the source of quantum advantages (e.g., \cite{JL03})
for many quantum algorithms~\cite{Shor94,Grover96,Lloyd96,HHL09},
%the power of quantum computing is believed to be significantly rely on quantum entanglement,
such lifting of \texttt{all} is definitely an unsatisfactory choice.

The remaining natural lifting is to interpret:
\begin{itemize}
	\item 
		(LF) Request $\parens*{s,X,\texttt{all}}$ means
		user $s$ can perform any quantum operation $\calE$
		on the composite system of quantum register $X$ and the local memory $L\bracks*{s}$ of $s$.
\end{itemize}
This lifting (LF) implicitly assumes that the local memories of subjects also become quantum;
that is, $L:\Obj\rightarrow \calH_{\Int}$, where $\calH_{\Int}$ is the Hilbert space lifted from $\Int$.
In this case, quantum entanglement can be generated between quantum registers in $\Obj$.
For example, in the system described in \Cref{sub:problem_setting}, 
when $\Macc=M_0$,
user $w_1$ can generate an EPR state $\frac{1}{\sqrt{2}}\parens*{\ket{0}_{C_1}\ket{1}_{C_2}+\ket{1}_{C_1}\ket{1}_{C_2}}$ 
in $C_1$ and $C_2$ (technically, their first qubits),
by first performing a Hadamard $H$ gate on $C_1$,
followed by a $\mathit{CNOT}$ gate on $C_1$ and $L\bracks*{w_1}$,
and finally a $\mathit{SWAP}$ gate between $C_2$ and $L\bracks*{w_1}$.
However, this lifting also turns out to be an unsatisfactory choice,
because it can actually lead to a \textit{security breach}.
In particular, for the system described in \Cref{sub:problem_setting},
the security guaranteed by \Cref{thm:classical-security} will be broken in the quantum case,
as stated in the following theorem.

\begin{theorem}[Security breach in the quantum case]
	\label{thm:quantum-security-breach}
	If $C_1,\ldots,C_n$ in the system described in \Cref{sub:problem_setting} become quantum registers
	and we adopt the lifting (LF),
	%we use the second generalisation of $\Rt$,
	then the secret information of user $u$ can be leaked with certainty in the worst case.
	Specifically, there exists an execution $\parens*{S,P}$ with $P_v$ described in \Cref{fig:userv} such that the mutual information 
	\begin{equation*}
		I\parens*{A\parens*{t_1};\Obs\parens*{w_1,t_2}}=1,
	\end{equation*}
	where $t_1,t_2$ are time points after the write requests in Line $1$ and $6$ in \Cref{fig:userv} are issued, respectively.
	Here, $\Obs\parens*{\cdot,\cdot}$ is defined in \Cref{thm:classical-security}.
\end{theorem}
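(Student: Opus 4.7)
The plan is to exhibit a concrete execution $(S,P)$ in which $w_1,\ldots,w_n$ collectively run the quantum strategy that saturates \Cref{lmm:Mermin}, thereby computing $|x|/2 \bmod 2$ from $v$'s random vector $x$ and inverting the one-time-pad-like encoding $b=a\oplus(|x|/2\bmod 2)$ that $v$ writes into $B$ at Line 5 of $P_v$. I organise the scheduler $S$ into three phases corresponding to $\Macc=M_0,M_1,M_2$.

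\textbf{Phase $M_0$ (GHZ preparation).} After $u$ writes $a$ into $A$, $w_1$ uses its $\texttt{all}$-access to $C_1,\ldots,C_n$ together with the lifting (LF) to prepare the GHZ state $\frac{1}{\sqrt 2}(\ket{0}^{\otimes n}+\ket{1}^{\otimes n})$ on $(C_1,\ldots,C_n)$: it applies a Hadamard on $C_1$ and, for $k=2,\ldots,n$, routes a $\mathit{CNOT}$ from $C_1$ to $C_k$ through its quantum local memory $L[w_1]$ with a $\mathit{SWAP}$, exactly as in the example given in \Cref{sub:security_breach_in_the_quantum_case}. Each $w_j$ then issues $(w_j,C_j,\texttt{all})$ to swap its share out of $C_j$ into $L[w_j]$, leaving the $C_j$'s in $\ket{0}$ and the GHZ state distributed across $L[w_1]\otimes\cdots\otimes L[w_n]$.

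\textbf{Phase $M_1$ (Mermin measurements and flips).} After $v$ writes each $x_j$ into $C_j$ (Line 3) and writes $b$ into $B$ (Line 5), the scheduler lets each $w_j$ read $x_j$ from $C_j$, locally measure its qubit of the GHZ state in $L[w_j]$ in the Pauli-$X$ basis if $x_j=0$ or the Pauli-$Y$ basis if $x_j=1$, encode the $\pm 1$ outcome as $m_j\in\{0,1\}$, and issue $(w_j,B,\texttt{flip})$ iff $m_j=1$. Setting $R_j=X^{1-x_j}Y^{x_j}$, the identities $X\ket{c}=\ket{1-c}$, $Y\ket{0}=i\ket{1}$ and $Y\ket{1}=-i\ket{0}$ give
\begin{equation*}
    \bigotimes_{j=1}^n R_j\cdot\tfrac{1}{\sqrt 2}\bigl(\ket{0}^{\otimes n}+\ket{1}^{\otimes n}\bigr)=\tfrac{1}{\sqrt 2}\bigl(i^{|x|}\ket{1}^{\otimes n}+(-i)^{|x|}\ket{0}^{\otimes n}\bigr)=(-1)^{|x|/2}\cdot\tfrac{1}{\sqrt 2}\bigl(\ket{0}^{\otimes n}+\ket{1}^{\otimes n}\bigr)
\end{equation*}
whenever $|x|$ is even, so $\bigoplus_{j=1}^n m_j=|x|/2\bmod 2$ holds with probability one. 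Since $\texttt{flip}$ is self-inverse and the individual flip requests commute, $B$ ends up holding $b\oplus(|x|/2\bmod 2)=a$.

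\textbf{Phase $M_2$ (readout).} After $v$ executes Line 6, the scheduler lets $w_1$ issue $(w_1,B,\texttt{read})$, which is authorised under $M_2$; at time $t_2$, $L[w_1]$ therefore contains $A(t_1)$, so $\Obs(w_1,t_2)$ determines $A(t_1)$ and $I(A(t_1);\Obs(w_1,t_2))=1$. The main conceptual content of the proof is the GHZ identity displayed above, which exhibits the quantum violation of \Cref{lmm:Mermin}; the remaining work is a routine check that the $\texttt{all}$ requests in Phase $M_0$, the $\texttt{read}$ and $\texttt{flip}$ requests in Phase $M_1$, and the $\texttt{read}$ request in Phase $M_2$ are all permitted by the matrices $M_0$, $M_1$ and $M_2$, respectively, together with the observation that the adversarial scheduler is free to insert each $w_j$'s actions into the interstices of $P_v$ (in particular, to delay $w_j$'s $\texttt{flip}$ until after $v$'s Line 5 write).
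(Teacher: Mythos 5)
Your proposal is correct and follows essentially the same route as the paper: the adversarial users prepare a GHZ state under $M_0$ using the quantum local memory afforded by (LF), implement the Mermin-game strategy (Pauli-$X$/$Y$ measurements selected by $x_j$) under $M_1$ to recover $\abs{x}/2 \bmod 2$ via the parity of \texttt{flip} requests, and read $B$ under $M_2$. The only (immaterial) difference is that you park the GHZ shares in the local memories $L[w_j]$ rather than in the second qubits $C_j^2$ as the paper does, and you verify the deterministic parity via the stabiliser identity $\bigotimes_j R_j$ rather than the paper's explicit computation of the post-rotation state $\ket{\psi_E}$.
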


It is important to note that the security breach in \Cref{thm:quantum-security-breach} is \textit{not} due to additional communication channel created by entanglement,
as the access matrix $\Macc$ of the system does not change. 
Indeed, it is well-known that entanglement cannot enable information transmission between users without direct communication.
Instead, the insecurity proof relies on how entanglement violates Mermin inequality~\cite{Mermin90}. This also implies the threat we reveal has a quantum nature and is not restricted to the specific system considered here. 

\begin{proof}[Proof of \Cref{thm:quantum-security-breach}]
    Note that 
    \begin{equation*}
        \Obs\parens*{w_1,t_2}=B\parens*{t_2},C_j\parens*{t_2},L\bracks*{w_1}\parens*{t_2}.
    \end{equation*}
	It suffices to show there exists an execution $\parens*{S,P}$ in which all user $w_j$ can cooperate 
	such that $\Pr\bracks*{B\parens*{t_2}=A\parens*{t_1}}=1$.
	The program $P$ (in particular, $P_{w_j}$) we construct exactly follows
	the quantum strategy for Mermin $n$-player game~\cite{Mermin90,BBT05},
	which leads to a violation of Mermin inequality in the quantum setting.

	%In \Cref{fig:userw}, the behaviour of each $w_j$ is programmed as $P_{w_j}$.
	Let us first construct the program $P$.
	The program $P_{w_j}$ that describes the behaviour of each $w_j$ is shown in \Cref{fig:userw}.
	Note that when $\Macc= M_1$, %for $t\in [t_1,t_2-1]$, $\Macc(t)\bracks*{w_1,C_j}=\texttt{*}$ for $j\in [n]$,
	from the lifting (LF),
	Line $1$ in \Cref{fig:userw} can be executed by 
	(a) first swapping the content of $C_j$ for each $j\in [n]$ into the local memory $L\bracks*{w_1}$;
	(b) next preparing the state $\ket{\textup{GHZ}(n)}$ in the local memory $L\bracks*{w_1}$; and
	(c) swapping back the content of $L\bracks*{w_1}$ to $C_j$ for each $j\in [n]$, which moves the GHZ state to $C^2$.
	Without loss of generality, we set $P_u$ to consist of a single write $A\gets a$,
	where $a\in \braces*{0,1}$ is the secret information of $u$.

	\begin{figure}
		\centering
		  \includegraphics[width=0.47\textwidth]{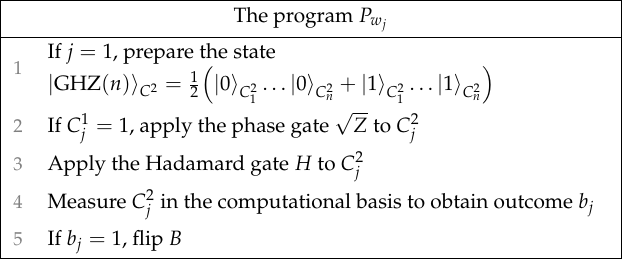}
		\caption{The program $P_{w_j}$ that describes the behaviour of each user $w_j$,
			in an attempt to learn the secret information of user $u$.
		Here, $C_j^k$ represents the $k^{\textup{th}}$ qubit of $C_j$,
		and $C^2=C_1^2, \ldots, C_{n}^2$.}
		\label{fig:userw}
	\end{figure}

	Next we construct the scheduler $S$.
	We take $t_1=2$ and $t_2=8n+5$.
	$S$ is defined such that 
	for $t\in \N$, $S\parens*{\alpha\parens*{0},\ldots,\alpha\parens*{t-1}}=s(t)$,
	where $s(t)$ is defined below.
	For each $s(t)$, we also describe its corresponding behaviour at time $t$.
	%We assume that $v$ and $w_j$ follow their corresponding programs in \Cref{fig:userv,fig:userw}, respectively,
	\begin{itemize}
		\item
			$s(0)=u$: $u$ writes one bit of secret information into $A$.
		\item
			$s(1)=v$: $v$ executes Line $1$ in \Cref{fig:userv} to modify $\Macc$.
		\item
			$s(2)=\ldots = s(2n+1)= w_1$:
			$w_1$ executes Line $1$ in \Cref{fig:userw}.
		\item
			$s(2n+2)=\ldots = s(3n+3) = v$: $v$ executes Lines $2$--$5$ in \Cref{fig:userv}.
		\item
			For $k=0$ to $4$, and $j\in [n]$, $s\parens*{(k+3)n+j+3}=w_j$:
			$w_j$ executes Lines $2$--$5$ in \Cref{fig:userw}.
		\item
			$s(8n+4)=v$: $v$ executes Line $6$ in \Cref{fig:userv} to modify $\Macc$.
		\item
			$s(8n+5)=w_1$: $w_1$ reads the value in $B$.
	\end{itemize}
	In the above, we implicitly fix how to generate requests from the program $P$
	(see also the remark about history generation after \Cref{def:history}).
	The time points above (e.g., $2n+1$, $3n+3$) are chosen regarding this specific generation.
	For example, $w_1$ can executes Line 2 in \Cref{fig:userw} 
	through two requests $\parens*{w_1,C_j^1,\texttt{read}},\parens*{w_1, C_j^2,\texttt{all}}$,
	at time $t=3n+4$ and $t=3n+5$.
	%What really matters is the temporal ordering between different requests.

	Now we verify that the execution $\parens*{S,P}$ constructed above yields $\Pr\bracks*{B(t_2)=A(t_1)}=1$.
	Note that in our system, only $C^2$ will be in quantum superposition.
	Actions on $C^1$ are actually classical, so $C^1$ can be still regarded as a classical random variable, for simplicity of presentation.
	Let $E:=\abs*{C^1(3n+4)}/2$ and 
	\begin{equation*}
		F:=\abs*{\braces*{t\in [3n+4,8n+3]:\alpha\parens*{t}=\parens*{w_j,B,\texttt{flip}},j\in [n]}}\bmod 2.
	\end{equation*}
	By the programs $P_v$ in \Cref{fig:userv} and $P_{w_j}$ in \Cref{fig:userw},
	we have $B\parens*{t_2}=E\oplus F\oplus A\parens*{t_1}$.

	Now it suffices to show that $\Pr\bracks*{E=F}=1$.
	For $b\in \braces*{0,1}$, define
	\begin{equation*}
		\ket{\psi_b}:=\frac{1}{2}H^{\otimes n}\parens*{\ket{0}^{\otimes n}+\parens*{-1}^{b}\ket{1}^{\otimes n}}
		=\frac{1}{2^{(n-1)/2}}\sum_{\abs*{y}\bmod 2=b} \ket{y}.
	\end{equation*}
	It is easy to see that the state of $C^2(6n+4)$ (before each $w_j$ executes Line $4$ in \Cref{fig:userw}) is $\ket{\psi_E}$.
	Thus, we can calculate
	\begin{equation*}
		\Pr\bracks*{F=b\midv E=b}=\sum_{\abs*{y}\bmod 2= b} \abs*{\braket{y|\psi_b}}^2=1.
	\end{equation*}
	The conclusion immediately follows.
\end{proof}

\section{Protection: Access Control in Quantum Computing}
\label{sec:protection_access_control_in_quantum_computing}

Through the explicit scenario in the last section,
we have seen that if the access control system is not properly adapted to the quantum setting,
the security can be threatened.
As indicated by the proofs of \Cref{thm:classical-security,thm:quantum-security-breach},
while the system described in \Cref{sub:problem_setting} is specific,
we have identified that the threat \textit{intrinsically} stems from quantum entanglement,
which is indispensable to quantum computing. 
In this section, we study how to handle such threat from entanglement.

%We first point out an issue with the previous two generalisations of access matrix in \Cref{sec:scenario_threat_from_quantum_entanglement}:
%when we change to a quantum system,
%we implicitly assume that the local memory of each subject also becomes a quantum memory.
%This turns out to be a bad decision,
%as local quantum memories allow uncontrollable (by the access matrix) multi-object operations,
%resulting in the security breach as shown in \Cref{thm:quantum-security-breach}.
%Hence, it will be better to restrict all local memories to be classical,
%and explicitly control multi-object operations by introducing virtual objects that are subsets of real objects.
%To this end, we extend the definition of access matrix.

In classical access control, usually an access request $\parens*{s,o,r}$ only involves a single object $o$,
which is sufficient in most practical scenarios.
However, quantum operations on multiple objects (registers) can generate entanglement between them even when they were initially in a separable state.
These quantum operations should be explicitly controlled
to protect the security of quantum systems.
For this purpose, we extend the set $\Obj$ to include every quantum subsystem 
consisting of multiple quantum objects as a virtual object, as suggested in \cite{YFY13}.
More precisely, suppose that $\Objc$ and $\Objq$ are the sets of real classical and quantum objects, respectively.
Then the set of objects in the system considered in this section is $\Obj=\Objc\cup \calP_+\parens*{\Objq}$,
where $\calP_+\parens*{\cdot}$ stands for the set of all non-empty subsets.

%Since we now explicitly deal with multi-object quantum operations,
Meanwhile, in this section, we restrict the local memories of subjects to be classical;
i.e., we only consider $L:\Sub\rightarrow \Int$ (instead of $L:\Sub\rightarrow \calH_{\Int}$).
As shown in \Cref{thm:quantum-security-breach},
allowing local memories to be quantum is likely to introduce uncontrollable quantum entanglement 
that may lead to security breach.
Note that avoiding implicit local quantum memory is equivalent to managing all quantum objects explicitly in the access control, and thus does not affect the computational power of the system being protected.

Consequently, in a quantum access control system, we have $\Rt=\Rtc\cup \Rtq$,
where $\Rtc$ and $\Rtq$ consist of abilities to perform operations on classical registers
and quantum subsystems, respectively.
Note that if $s\in \Sub$ performs a quantum measurement on quantum registers,
the classical outcomes produced will be stored into the local memory $L\bracks*{s}$.

We summarise these conventions in the following definition for clarity.

\begin{definition}[Core model of quantum access control]
	\label{def:core-model}
	The components in the core model of quantum access control are specified as follows.
	\begin{itemize}
		\item 
			$\Sub$ is a set of users.
			$\Obj=\Objc\cup \calP_+\parens*{\Objq}$,
			where $\Objc$ and $\Objq$ are sets of classical and quantum registers.
		\item
			The local memories $L:\Sub\rightarrow \Int$ of subjects are classical.
		\item
			The classical part of the access control is guarded by the access matrix
			$\Mcc:\Sub\times \Objc\rightarrow \calP\parens*{\Rtc}$.
	\end{itemize}
\end{definition}

All models of quantum access control to be studied in this section are refinements of the core model in~\Cref{def:core-model}.
To handle threats from quantum entanglement,
we introduce two types of models.
In \Cref{sub:control_of_quantum_operations},
we consider explicitly controlling quantum operations on subsystems of multiple quantum registers;
in \Cref{sub:control_of_entanglement},
we consider explicitly controlling the resource of quantum entanglement.

To evaluate and compare these models, 
we consider the following three metrics for an access control model, following~\cite{HFK06,HK12}:
\begin{enumerate}
	\item
		\textbf{Security}, in this paper, concerns whether the model can properly manage quantum entanglement between objects
        and therefore protect against threats from entanglement.
        In particular,
		if a model is secure, then the system described in \Cref{sub:problem_setting} can be lifted to such model 
        while retaining the security guarantee in \Cref{thm:classical-security}.
	\item 
		\textbf{Flexibility} is related to the granularity of specifying the access control,
		and thus how well the model can support the principle of least privilege~\cite{SS75}.
		In this paper, we compare the flexibility of different models by \Cref{def:flexibility}.
		%Intuitively, a model is more flexible than another if any system in the latter can be equivalently 
		%(with respect to \Cref{def:equivalent-sys}) specified in the former.
	\item
		\textbf{Efficiency} measures the space complexity for implementing the model,
		and the time complexity for handling an access request.
\end{enumerate}

All of the proposed models are secure, but their flexibility and efficiency vary.
In practice, the choice of which model to use depends
on the \textit{specific requirements} about the flexibility and efficiency.
One can also consider a hybrid of these models.
For visualisation, in \Cref{tab:compare-sec-eff}, 
we compare the security and efficiency of different models introduced in the following subsections,
and in \Cref{fig:compare-flex} we compare the flexibility.
%They will be explained in detail in the following sections.

%\newcommand{\tikzxmark}{%
%\tikz[scale=0.25] {
%    \draw[line width=1,line cap=round,red!90!black] (0,0) to [bend left=6] (1,1);
%    \draw[line width=1,line cap=round,red!90!black] (0.2,0.95) to [bend right=3] (0.8,0.05);
%}}
%\newcommand{\tikzcmark}{%
%\tikz[scale=0.25] {
%    \draw[line width=1,line cap=round,green!70!black] (0.25,0) to [bend left=10] (1,1);
%    \draw[line width=1.2,line cap=round,green!70!black] (0,0.35) to [bend right=1] (0.23,0);
%}}

\begin{figure}
	\centering
    \includegraphics[width=0.47\textwidth]{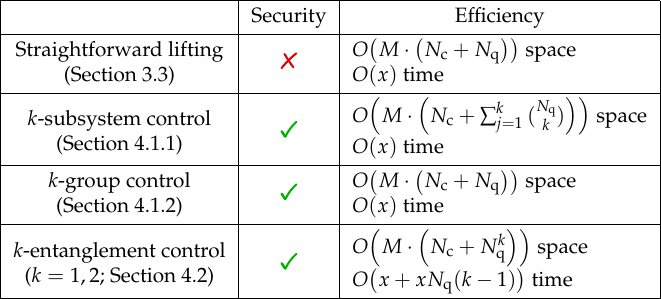}
	\caption{Comparison of Security and Efficiency of different quantum access control models in \Cref{sec:protection_access_control_in_quantum_computing}.
	Here, the security is against threats from entanglement revealed in \Cref{sec:scenario_threat_from_quantum_entanglement}.
	The efficiency is about the space complexity for the access control and the time complexity to handle an access request.
	We assume $\abs*{\Sub}=M$, $\abs*{\Objc}=N_{\textup{c}}$, $\abs*{\Objq}=N_{\textup{q}}$,
	and the request has length $x$.
	}
	\label{tab:compare-sec-eff}
\end{figure}

\begin{figure}
	\centering
        \includegraphics[width=0.4\textwidth]{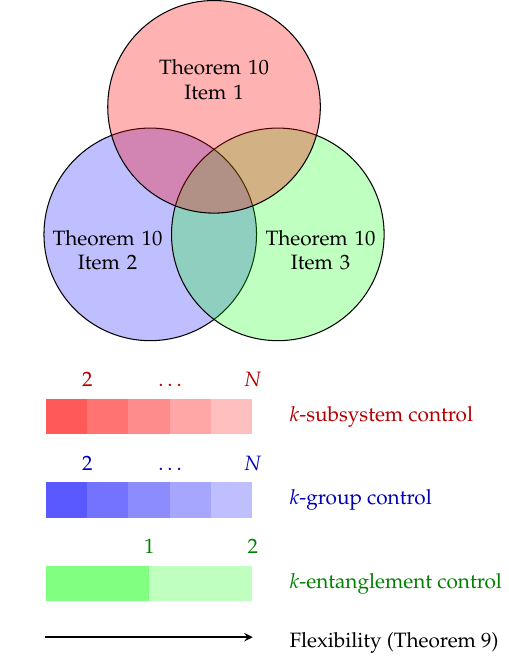}
	\caption{Comparison of flexibility of different quantum access control models.}
	\label{fig:compare-flex}
\end{figure}

%\begin{definition}[Access matrix for quantum subsystems]
%	Suppose that $\mathbf{Obj}=\mathbf{Obj}_c\cup \mathbf{Obj}_q$,
%	where $\mathbf{Obj}_c$ and $\mathbf{Obj}_q$ stand for the sets of classical and quantum objects, respectively.
%	An access matrix for quantum subsystems has the form $M:\mathbf{Sub}\times \mathbf{Obj}'\rightarrow \calP\parens*{\mathbf{Rt}}$,
%	where $\mathbf{Obj}'=\mathbf{Obj}_c\cup \mathbf{Obj}_q'$ and 
%	$\mathbf{Obj}_q'$ is determined by the specific security model
%	with elements falling in $\calP\parens*{\mathbf{Obj}_q}$.
%\end{definition}

\subsection{Control of Quantum Operations}
\label{sub:control_of_quantum_operations}

%Direct access control on quantum operations has been studied in \cite{YFY13}.
%Here, we provide a more rigorous investigation and discuss different variants.

%These generalisations have levels of granularity from the coarsest to the finest.

%\begin{example}[No multi-object operations]
%	\label{ex:no-multi}
%	$\Sub$: users,
%	$\Obj$: classical registers and quantum subsystems,
%	$\Rt=\Rt_c \cup \braces*{\texttt{op}_{\calE}:\calE\in \OP\parens*{\calH}}$,
%	$\Atr=\braces*{\Mcc,\Mq,L}$,
%	$\Pol=\braces*{\Auth}$,
%	where $\Mcc: \Sub\times \Obj_c\rightarrow \calP\parens*{\Rt}$,
%	$\Mq:\Sub\times \Obj_q\rightarrow \calP\parens*{\Rt}$,
%	$L:\Obj \rightarrow \Int$ and 
%	\begin{equation*}
%		\mathit{Auth}\parens*{s,o,r}\Leftarrow \parens*{o\in \Obj_c \wedge r\in \Mcc\bracks*{s,o}}\vee \parens*{o=\braces*{X}\wedge X\in \Obj_q\wedge r\in \Mq\bracks*{s,o}}.
%	\end{equation*}
%
%	The above system is just another way to express the first generalisation in \Cref{sub:analysis_security_broken_in_the_quantum_case}.
%	It forbids quantum operations on multiple objects and therefore the quantum entanglement between different objects.
%\end{example}

%It is possible in the example in Section 1, 
%at time $t\in [0,t_1)$, all users $w_j$ need to run the Shor's algorithm and create large entanglement.
%How to deal with this case?

\subsubsection{Subsystem Control}
\label{sub:subsystem_control}

Subsystem control has been initially studied in~\cite{YFY13}.
The original observation in~\cite{YFY13} is that having full access to a composite subsystem of quantum registers $A$ and $B$
is not the same as the combination of separate accesses to $A$ and to $B$.
Thus, they proposed to regard every quantum subsystem of multiple quantum registers as a virtual object, 
as mentioned at the beginning of \Cref{sec:protection_access_control_in_quantum_computing}.
In our terminology, they define the authorisation rule via an access matrix $M:\Sub\times \Obj\rightarrow \calP\parens*{\Rt}$,
where $\Obj=\Objc\cup \calP_+\parens*{\Objq}$ is as defined in the core model~\Cref{def:core-model}.
In the following, we slightly extend this idea to $k$-subsystem control,
which offers a better trade-off between flexibility and efficiency.

\begin{definition}[$k$-subsystem control]
	\label{def:subsys-control} 
	Suppose that $1\leq k\leq \abs*{\Obj_q}$.
	The $k$-subsystem control model, denoted by $\SUBSYS^k$, extends \Cref{def:core-model} by letting $\Atr=\braces*{\Mcc,\Mq,L}$,
	$\Rule=\braces*{\Auth}$,
	%where $\Mcc: \Sub\times \Objc\rightarrow \calP\parens*{\Rt}$,
$\Mq:\Sub\times \calP_{\leq k}\parens*{\Objq}\rightarrow \calP\parens*{\Rtq}$,
	and \begin{equation*}
		\begin{split}
			\Auth\parens*{s,o,r}\equiv \ & p_{\textup{c}}\wedge p_{\textup{q}},\ \text{where:}\\
			p_{\textup{c}}\equiv \ & o\in \Objc \rightarrow  r\in \Mcc\bracks*{s,o},\\
			p_{\textup{q}}\equiv \ &o\in \calP_+\parens*{\Objq}\rightarrow \abs*{o}\leq k\wedge r\in \Mq\bracks*{s,o}.
		\end{split}
	\end{equation*}
	Here, $\calP_{\leq k}\parens*{\cdot}$ denotes the set of non-empty subsets of cardinality $\leq k$.
\end{definition}

Intuitively, in the authorisation rule, $p_{\textup{c}}$ says that if $o$ is a classical register, 
then we check if $r\in\Mcc\bracks*{s,o}$;
and $p_{\textup{q}}$ says that if $o$ is a quantum subsystem involving $\leq k$ registers,
then we check if $r\in \Mq\bracks*{s,o}$.
Compared to~\cite{YFY13} (equivalent to setting $k=\abs*{\Objq}$),
\Cref{def:subsys-control} only authorises requests involving subsystem of size $\leq k$,
which achieves better efficiency by reducing the space complexity of storing the attribute $\Mq$,
as will be explicitly shown later in \Cref{thm:eff-subsystem}.

Typical choices of $k$ include $k=2$ and $k=\abs*{\Objq}$.
Note that the case $k=1$ forbids any entanglement between quantum registers,
recovering our first attempt to lift the right $\texttt{all}$ in \Cref{sub:security_breach_in_the_quantum_case}.

The $k$-subsystem control model provides the most direct control over
quantum operations performed on multiple quantum registers,
and therefore offers protection against threats from quantum entanglement (as illustrated in \Cref{sec:scenario_threat_from_quantum_entanglement}).
The security of this model is formalised in the following theorem.

\begin{theorem}[Security of $k$-subsystem control]
	\label{thm:security-k-subsys}
	For $2\leq k \leq \abs*{\Objq}$, the system described in \Cref{sub:problem_setting} can be lifted to a system with $k$-subsystem control
	such that the security guarantee in \Cref{thm:classical-security} is retained.
\end{theorem}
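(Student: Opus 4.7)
The plan is to exhibit an explicit lifting of the system from \Cref{sub:problem_setting} into the $k$-subsystem model and then argue that the classical proof of \Cref{thm:classical-security} carries over essentially verbatim. The lifting is minimal: take $\Objc=\{A,B,\Macc\}$ and $\Objq=\{C_1,\ldots,C_n\}$; inherit $\Mcc$ from the original access matrix restricted to the classical registers; and define $\Mq$ only on singleton quantum subsystems by letting $\Mq\bracks*{s,\{C_j\}}$ equal the quantum counterpart of the rights granted by the classical entry $\Macc\bracks*{s,C_j}$, with $\texttt{all}$ interpreted as ``any quantum operation on the single register $C_j$''. For every $o\in\calP_{\leq k}(\Objq)$ with $\abs*{o}\geq 2$, set $\Mq\bracks*{s,o}=\emptyset$, so the authorisation rule of \Cref{def:subsys-control} denies every multi-register quantum request. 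Since $2\leq k\leq\abs*{\Objq}$, every singleton subsystem lies in $\calP_{\leq k}(\Objq)$ and this is a well-formed instance of $\SUBSYS^k$.

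The crucial structural property of this lifting is that no user is ever authorised to apply a quantum operation acting on two distinct quantum registers simultaneously, and by the core model (\Cref{def:core-model}) every local memory $L\bracks*{s}$ is classical. Starting from the all-$\ket{0}$ initial state, I would show by induction on the authorised history that the joint state of $C_1,\ldots,C_n$ is always a product state $\rho_{C_1}(t)\otimes\cdots\otimes\rho_{C_n}(t)$: single-register unitaries act on a single tensor factor, and measurements feed classical outcomes into classical local memories without creating any cross-register quantum correlations. This rules out precisely the mechanism exploited in \Cref{thm:quantum-security-breach} --- the GHZ preparation across the $C_j$'s via swaps through a quantum $L\bracks*{w_1}$ --- while still permitting every classical information flow that the original $\Macc$ allowed.

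Given this product-state invariant, the lifted quantum execution admits a classical simulation: every single-register quantum operation on $C_j$ can be replaced by sampling from the induced distribution of measurement outcomes using only local randomness, and since the tensor factors evolve independently the joint distribution of the classical observables $A(t)$, $B(t)$, $L\bracks*{w_j}(t)$ and the computational-basis contents of the $C_j$ matches that of some classical randomised program $P'$ running in the original classical system $\calS$. Because \Cref{thm:classical-security} bounds the mutual information $I(A(t_u);\Obs(w_j,t_w))$ uniformly over \emph{all} programs in $\calS$, the same $2^{-(n-7)/2}$ bound is inherited by the lifted system, which is exactly the retained security guarantee.

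The main obstacle I expect is making this classical simulation rigorous enough that the conditional-independence relations underpinning the proof of \Cref{thm:classical-security} --- in particular those encoded by the graphical model in \Cref{fig:graphical} --- remain valid after the lift. Concretely, one must verify that random variables such as the $C_j^1(t_{v,j})$ that play the role of the Mermin-inequality inputs have the same joint distribution with $A(t_u)$ and with each $\Obs(w_j,t_w)$ as they do classically. This should reduce to combining the product structure of the state with a Naimark-style dilation of each local measurement into local classical randomness, but the bookkeeping across the many time points in \Cref{fig:userv} and the interaction with post-update operations on $\Macc$ will require careful treatment.
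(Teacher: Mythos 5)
Your proposal is correct and follows essentially the same route as the paper: construct an explicit lifting in which $\Mq$ denies every multi-register request touching distinct $C_j$'s, observe that no entanglement can then be generated among $C_1,\ldots,C_n$, and conclude that the classical bound of \Cref{thm:classical-security} is retained (the paper's version additionally grants some two-register rights involving auxiliary registers $D_1,\ldots,D_5$ purely to illustrate flexibility, which does not affect the argument). Your product-state invariant and classical-simulation step actually spells out the reduction that the paper compresses into the single sentence ``it suffices to verify that no entanglement is allowed to be generated among $C_1,\ldots,C_n$,'' so if anything your write-up is more complete on that point.
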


It is worth noting that although the security in \Cref{thm:security-k-subsys} 
(and in subsequent theorems about other models)
is stated with respect to the specific system described in \Cref{sub:problem_setting},
the access control model itself can be employed to protect against \textit{any threat from quantum entanglement}.
This is because, within the model, entanglement can be explicitly forbidden through specification.

\begin{proof} [Proof of \Cref{thm:security-k-subsys}]
	We only prove the theorem for $k=2$.
	The proof for other $k$ is similar and thus omitted.
	For better illustration of the flexibility of the $k$-subsystem control model,
	let us assume several additional quantum registers, 
	say $\Objq=\braces*{C_1,\ldots, C_n,D_1,\ldots, D_5}$;
	and we only show one possible way of lifting to this model.
	To prove that the lifted system retains the security guarantee in \Cref{thm:classical-security},
	it suffices to verify that no entanglement is allowed to be generated among $C_1,\ldots, C_n$.

	The lifted system has $\Objc=\braces*{A,B,\Mcc,\Mq}$.
	We construct the lifting as follows.
	\begin{itemize}
		\item 
			Let $\Mcc\bracks*{v,\Mcc}=\Mcc\bracks*{v,\Mq}=\braces*{\texttt{all}}$,
			meaning that $v$ can modify the attributes $\Mcc$ and $\Mq$ like 
			that it can modify $\Macc$ in \Cref{fig:userv}.
		\item
			For $X\in \braces*{A,B}$,
			we define $\Mcc\bracks*{s,X}=\Macc\bracks*{s,X}$.
			For $X\in \braces*{C_1,\ldots,C_n}$, let $\Mq\bracks*{s,\braces*{X}}= \Macc\bracks*{s,X}$.
			For $X\in \braces*{D_1,\ldots,D_5}$, let $\Mq\bracks*{s,\braces*{X}}=\braces*{\texttt{all}}$.
			We also modify Line $1$ and $6$ of $P_v$ in \Cref{fig:userv} to write $\Mcc\bracks*{s,X}$ and $\Mq\bracks*{s,\braces*{X}}$
			instead of $\Macc\bracks*{s,X}$.
		\item
			Let $\Mq\bracks*{w_1,\braces*{C_1,D_1}}=\Mq\bracks*{w_2,\braces*{C_2,D_2}}=
			\Mq\bracks*{w_3,\braces*{D_3,D_4}}=\Mq\bracks*{w_3,\braces*{D_4,D_5}}=\braces*{\texttt{all}}$.
		\item
			Those $\Mcc\bracks*{s,o}$ and $\Mq\bracks*{s,o}$ unspecified above are defined to be $\emptyset$.
			In particular, we have $\Mq\bracks*{w_j,\braces*{C_l,C_r}}=\emptyset$ for $l\neq r$,
			implying that quantum entanglement cannot be generated among $C_1,\ldots,C_n$.
	\end{itemize}
	Note that the above lifting only forbids entanglement generated among $C_1,\ldots,C_n$,
	but allows entanglement generated between $C_1$ and $D_1$,
	$C_2$ and $D_2$, $D_3$ and $D_4$, and $D_4$ and $D_5$.
	For illustration, we visualise each subsystem on which quantum operations are allowed in~\Cref{fig:SubSecEx}.

	\begin{figure}
		\centering
		\includegraphics[width =0.39\textwidth]{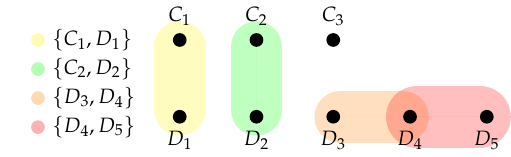}
		\caption{Illustration of allowed quantum operations on multiple registers
			in a system in the $2$-subsystem control model (see the proof of \Cref{thm:security-k-subsys}; take $n=3$).
		Each $2$-subsystem on which some user has access right \texttt{all} is colored.}
		\label{fig:SubSecEx}
	\end{figure}
\end{proof}

Now we analyse the efficiency of $k$-subsystem control.
Remember that the efficiency concerns the space and time complexities.
Here and throughout this paper,
the space complexity of implementing an access control model is
measured by the number of classical memory locations (each capable of storing a bounded integer)
required to store all the attributes.
The time complexity for handling an access request is measured by the number of elementary operations 
(including arithmetic, logical and memory access operations) in the standard word RAM model.

\begin{theorem}[Efficiency of $k$-subsystem control]
	\label{thm:eff-subsystem}
	Suppose that $\abs*{\Sub}=M$, $\abs*{\Objc}=N_{\textup{c}}$ and $\abs*{\Objq}=N_{\textup{q}}$,
	then the $k$-subsystem control model uses $O\parens*{M\cdot \parens*{N_{\textup{c}}+\sum_{j=1}^k \binom{N_{\textup{q}}}{k}}}$ space
	for access control,
	and it takes $O\parens*{x}$ time to authorise an access request of length $x$.
\end{theorem}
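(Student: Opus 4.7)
The plan is to tally the storage of each attribute separately, and then to exhibit a data structure that decides the authorisation rule in time linear in the request length.

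For the space bound, I would observe that since $\Rtc$ and $\Rtq$ are fixed finite sets of constant size, every entry of $\Mcc$ or $\Mq$ occupies $O(1)$ memory locations. Hence $\Mcc:\Sub\times\Objc\rightarrow\calP(\Rtc)$ uses $O(M\cdot N_{\textup{c}})$ locations; the local memory $L:\Sub\rightarrow\Int$ uses $O(M)$ locations (which will be absorbed into the other terms); and $\Mq:\Sub\times\calP_{\leq k}(\Objq)\rightarrow\calP(\Rtq)$ uses $O\parens*{M\cdot \sum_{j=1}^{k}\binom{N_{\textup{q}}}{j}}$ locations, because the number of non-empty subsets of $\Objq$ of cardinality at most $k$ is exactly $\sum_{j=1}^{k}\binom{N_{\textup{q}}}{j}$. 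Adding these three contributions yields the claimed space bound.

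For the time bound, I would represent $\Mcc$ and $\Mq$ by hash tables (or tries) keyed by pairs $(s,o)$. On input a request $\parens*{s,o,r}$ encoded in $x$ bits, the authorisation procedure first reads the tag of $o$ in $O(x)$ time to distinguish the cases $o\in\Objc$ and $o\in\calP_+(\Objq)$; in the quantum branch, the same scan also yields $\abs*{o}$ for testing $\abs*{o}\leq k$. Evaluating $p_{\textup{c}}$ or $p_{\textup{q}}$ then reduces to one table lookup on a key of length $O(x)$ followed by a membership test in a constant-sized right set, both of which run in $O(x)$ time in the word RAM model.

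The main obstacle is to make the $O(x)$ per-request claim rigorous, since it depends on committing to a concrete implementation of the lookup: I expect to invoke either a trie whose depth matches the key length, or perfect hashing with $O(x/w)$ probes of machine words (where $w$ is the word size). Either choice delivers $O(x)$ time per access, which is the standard yardstick used in efficiency analyses of attribute-based access control, so no additional ideas beyond this bookkeeping should be needed.
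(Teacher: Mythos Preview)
Your proposal is correct and follows essentially the same approach as the paper: count the storage for $\Mcc$, $\Mq$, and $L$ separately to obtain the space bound, and argue that evaluating $\Auth$ amounts to scanning the request (in particular computing $\abs*{o}$) plus a single attribute lookup, giving $O(x)$ time. The paper's own proof is in fact terser than yours---it simply cites the matrix dimensions and says the time ``scales as the length of the request''---so your additional remarks about tries or perfect hashing are more implementation detail than the paper itself supplies, but they are compatible with and do not diverge from the paper's argument.
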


Compared to the original idea in~\cite{YFY13}, 
our \Cref{thm:eff-subsystem}, together with \Cref{thm:flex-hierarchy} later in \Cref{sub:comparison_of_flexibility},
demonstrates a trade-off between flexibility and efficiency.
In particular, taking smaller $k$ in the $k$-subsystem control model
leads to greater efficiency but reduced flexibility (see \Cref{thm:flex-hierarchy}). 
For example, focusing on the dependence on $N_{\textup{q}}$,
then for $k=2$, the space complexity is $O\parens*{N_{\textup{q}}^2}$.
However, for $k=N_{\textup{q}}$, the case originally suggested by~\cite{YFY13},
the space complexity is $O\parens*{2^{N_{\textup{q}}}}$,
which is exponentially large.
%may be infeasible if one wishes to enforce qubit-wise access control even on existing quantum computers.

It is also worth mentioning that the space or time complexity in \Cref{thm:eff-subsystem}
and subsequent theorems about other models is regarding the worst case.
We do not bother considering more efficient data structures (like ACL)~\cite{SS94} to store the attributes,
and leave this for future works (see also \Cref{sec:discussion}).

\begin{proof} [Proof of \Cref{thm:eff-subsystem}]
    The space complexity for implementing $k$-subsystem control 
    is dominated by that for storing the attributes $\Mcc$ and $\Mq$ in \Cref{def:subsys-control}.
    The matrix representation of $\Mcc$ has $\abs*{\Sub}$ rows and $\abs*{\Objc}$ columns,
    while that of $\Mq$ has $\abs*{\Sub}$ rows and $\abs*{\calP_{\leq k}\parens*{\Objq}}=\sum_{j=1}^k \binom{N_{\textup{q}}}{k}$ columns. 
    %whose complete matrix representation has $\abs*{\Sub}$ rows and $\abs*{\Obj}$ columns.
   % A simple counting argument shows that $\abs*{\calP_{\leq k}\parens*{\Objq}}$.
    
    The time complexity for handling an access request $(s,o,r)\in \Req$
    is dominated by, according to the authorisation rule in \Cref{def:subsys-control},
    reading the whole request and
    checking the size of the subsystem $o\subseteq \Objq$,
    which scales as the length of the request. 
\end{proof}

\subsubsection{Group Control}
\label{sub:group_control}

In \Cref{sub:subsystem_control}, $k$-subsystem control provides direct control of quantum operations
on subsystem of size $\leq k$. 
However, the space complexity for implementing $k$-subsystem control (even for the smallest nontrivial $k=2$)
could be formidable when the number $N_{\textup{q}}$ of quantum objects is large.
In practical classical systems, 
the number of objects can be in the tens of millions~\cite{HFK06}.
While it may take a long time to build quantum computers at such a scale,
we can still consider models with lower space requirements,
such as the following $k$-group control model.

\begin{definition}[$k$-group control]
	\label{def:group-control}
	Suppose that $1\leq k\leq \abs*{\Obj_q}$.
	The $k$-group control model, denoted by $\GRP^k$,
	extends \Cref{def:core-model} by setting  $\Atr=\braces*{\Mcc,\Mq,G,L}$,
	$\Rule=\braces*{\Auth}$,
	%where $\Mcc: \Sub\times \Objc\rightarrow \calP\parens*{\Rt}$,
	 $\Mq:\Sub\times \Objq\rightarrow \calP\parens*{\Rtq}$,
	$G:\Objq \rightarrow [k]$,
	and 
	\begin{equation*}
		\begin{split}
			\Auth\parens*{s,o,r}\equiv \ & p_{\textup{c}}\wedge p_{\textup{q}},\ \mbox{where}:\\
			p_{\textup{c}}\equiv \ &o\in \Objc \rightarrow r\in \Mcc\bracks*{s,o},\\
			p_{\textup{q}}\equiv \ &o\in \calP_+\parens*{\Objq}\rightarrow \parens*{\forall X,Y\in o: G\bracks*{X}=G\bracks*{Y}} \wedge\\
            & 
            \parens*{\forall X\in o: r\in \Mq\bracks*{s,X}}.
		\end{split}
	\end{equation*}
\end{definition}

Intuitively, the attribute $G$ assigns a group label to every object.
In the authorisation rule, $p_{\textup{c}}$ is standard;
and $p_{\textup{q}}$ says that if $o$ is a quantum subsystem,
then the request is authorised only if  all quantum registers in $o$ has the same group label, and the right $r$ appears in $\Mq\bracks*{s,X}$
for any quantum register $X\in o$. 
Note that the attribute $\Mq$ in \Cref{def:group-control} is different from that in \Cref{def:subsys-control}:
$\Mq$ in the $k$-group control model has a smaller domain.

Note that \Cref{def:group-control} can be slightly modified (by introducing a group label $0$)
to define an abstraction of the entangling zone,
which is employed in some architectures of quantum hardware~\cite{BEG+23}.
In this case, two-qubit quantum operations can only be performed on qubits in the entangling zone.

The $k$-group control model also provides explicit control over quantum operations
performed on multiple quantum registers,
through the attribute $G$ that assigns group labels.
The security of this model is formalised as follows. 

\begin{theorem}[Security of $k$-group control] 
    \label{thm:security-k-group}
	Let $n$ be as defined in \Cref{sub:problem_setting}.
	For $n+1\leq k\leq \abs*{\Objq}$,
	the system described in \Cref{sub:problem_setting} can be lifted to a system in $\GRP^k$
	such that the security guarantee in \Cref{thm:classical-security} is retained.
\end{theorem}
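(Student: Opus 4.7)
The plan is to design a lifting in which entanglement between any two of the registers $C_1,\ldots,C_n$ is structurally forbidden by the group attribute $G$ of \Cref{def:group-control}. I would take $\Objq=\{C_1,\ldots,C_n,D\}$ for some auxiliary quantum register $D$ (which is consistent with the hypothesis $|\Objq|\geq k\geq n+1$) and $\Objc=\{A,B,\Mcc,\Mq,G\}$. The group assignment is $G[C_j]=j$ for $j\in[n]$ and $G[D]=n+1$, so that each $C_j$ lives in its own singleton group. Then I would mirror the original access matrix exactly in the same spirit as the proof of \Cref{thm:security-k-subsys}: grant $v$ the right $\texttt{all}$ on $\Mcc,\Mq,G$ so that it can rewrite these attributes at the points where $P_v$ previously rewrote $\Macc$; set $\Mcc[s,X]=\Macc[s,X]$ for $X\in\{A,B\}$; set $\Mq[s,X]=\Macc[s,X]$ for $X\in\{C_1,\ldots,C_n\}$; and leave every other entry empty.

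Next, I would verify that under this assignment the predicate $p_{\textup{q}}$ of \Cref{def:group-control} denies every request $(s,o,r)$ with $\{C_i,C_j\}\subseteq o$ for $i\neq j$, because the condition $G[X]=G[Y]$ fails when $X=C_i$ and $Y=C_j$. All single-register quantum operations on the individual $C_j$'s remain authorised, which is everything that $P_v$ and the programs executed by the $w_j$ in \Cref{sub:problem_setting} actually need in order to reproduce the classical behaviour.

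Then I would argue that the security analysis of \Cref{thm:classical-security} transfers with only cosmetic changes. Because only single-register operations ever touch the $C_j$'s, and because the local memories $L[s]$ are classical under the core model of \Cref{def:core-model}, the reduced state of $(C_1,\ldots,C_n)$ stays at every time step a product of classical mixtures of computational-basis states. The joint distribution of the observable variables therefore satisfies exactly the conditional independence relations depicted in \Cref{fig:graphical} for the classical case, and the variant of Mermin inequality in \Cref{lmm:Mermin} applies in precisely the same way, yielding the same bound $2^{-(n-7)/2}$.

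The main obstacle is this last reduction step. In principle, a colluding collection of $w_j$'s could attempt to build a covert channel between different $C_j$'s by interleaving sequential single-register measurements with writes to the classical local memories $L[w_j]$. Since each $w_j$ has no access to $w_{j'}$'s local memory, and since any single-register quantum operation acting on an input diagonal in the computational basis is simulable classically, every such channel collapses to one already covered by the classical proof; formalising this simulation invariant cleanly, so that \Cref{thm:classical-security} can be invoked as a black box rather than re-proved, is the step that deserves the most care.
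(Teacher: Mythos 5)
Your proposal is correct and follows essentially the same route as the paper's proof: assign distinct group labels to $C_1,\ldots,C_n$ so that the predicate $p_{\textup{q}}$ of \Cref{def:group-control} denies any multi-register request mixing two different $C_i$'s, mirror $\Macc$ into $\Mcc$ and $\Mq$, and conclude that with no entanglement generable among the $C_j$ the bound of \Cref{thm:classical-security} carries over (the paper's only cosmetic difference is that it uses five auxiliary registers and groups some $C_j$ with some $D_l$ to illustrate flexibility). Your extra discussion of why the reduction to the classical analysis is sound — the state of $(C_1,\ldots,C_n)$ remains a product across $j$, so \Cref{lmm:Mermin} still governs the colluding $w_j$'s — is more careful than the paper, which simply asserts that forbidding entanglement among the $C_j$ suffices.
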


\begin{proof}
	We only prove the theorem for $k=n+1$.
	The proof for other $k$ is similar and thus omitted.
	Like in the proof of \Cref{thm:security-k-subsys},
	let us assume several additional quantum registers, 
	say $\Objq=\braces*{C_1,\ldots, C_n,D_1,\ldots, D_5}$;
	and we only show one possible way of lifting to this model.
	To prove that the lifted system retains the security guarantee in \Cref{thm:classical-security},
	it suffices to verify that no entanglement is allowed to be generated among $C_1,\ldots, C_n$.

	The lifted system has $\Objc = \braces*{A,B,\Mcc,\Mq,G}$.
	We construct the lifting as follows.
	\begin{itemize}
		\item 
			Let $\Mcc\bracks*{v,\Mcc}=\Mcc\bracks*{v,\Mq}=\braces*{\texttt{all}}$.
		\item
			For $X\in \braces*{A,B}$,
			we define $\Mcc\bracks*{s,X}=\Macc\bracks*{s,X}$.
			For $X\in \braces*{C_1,\ldots,C_n}$, let $\Mq\bracks*{s,X}= \Macc\bracks*{s,X}$.
			For $X\in \braces*{D_1,\ldots,D_5}$, let $\Mq\bracks*{s,X}=\braces*{\texttt{all}}$.
			We also modify Line $1$ and $6$ of $P_v$ in \Cref{fig:userv} to write $\Mcc\bracks*{s,X}$ and $\Mq\bracks*{s,X}$
			instead of $\Macc\bracks*{s,X}$.
		\item
			Let $G\bracks*{C_1}=G\bracks*{D_1}=1$,
			$G\bracks*{C_2}=G\bracks*{D_2}=2$,
			$G\bracks*{C_j}=j$ for $j>2$,
			and $G\bracks*{D_2}=G\bracks*{D_3}=G\bracks*{D_4}=n+1$.
	\end{itemize}
	By \Cref{def:group-control}, the above lifting forbids entanglement generated among $C_1,\ldots, C_n$,
	but allows entanglement generated between $C_1$ and $D_1$,
	$C_2$ and $D_2$, and among $D_3$, $D_4$ and $D_5$.
	For illustration, we visualise each group within which quantum operations are allowed in~\Cref{fig:GCloSecEx}.

	\begin{figure}
		\centering
		\includegraphics[width =0.38\textwidth]{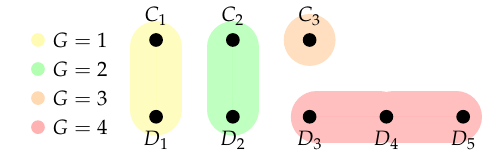}
		\caption{Illustration of allowed quantum operations on multiple quantum registers
		in a system in the $n+1$-group control model (see the proof of \Cref{thm:security-k-group}; take $n=3$).}
		\label{fig:GCloSecEx}
	\end{figure}
\end{proof}

Now we analyse the efficiency of the $k$-group control model.
Focusing on the dependence on $N_{\textup{q}}$,
the space complexity is $O(N_{\textup{q}})$,
which is much smaller than that of the $k$-subsystem control model.

\begin{theorem}[Efficiency of $k$-group control]
	Suppose that $\abs*{\Sub}=M$, $\abs*{\Objc}=N_{\textup{c}}$ and $\abs*{\Objq}=N_{\textup{q}}$,
	then the $k$-group control model uses $O\parens*{M\cdot \parens*{N_{\textup{c}}+N_{\textup{q}}}}$ space for access control,
	and it takes $O\parens*{x}$ time to handle an access request of length $x$.
\end{theorem}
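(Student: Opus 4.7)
The plan is to mirror the structure of the proof of \Cref{thm:eff-subsystem}, substituting the subsystem-indexed attribute used there with the register-indexed $\Mq$ together with the group-label function $G$ from \Cref{def:group-control}. Concretely, the argument splits into a space-accounting part and a time-accounting part.

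For the space bound, I would enumerate the persistent attributes relevant to access-control decisions, namely $\Mcc$, $\Mq$ and $G$ (the per-subject local memory $L$ is standardly excluded from access-control overhead, as in the subsystem case). Counting directly: the matrix representation of $\Mcc$ has $M\cdot N_{\textup{c}}$ entries; the matrix representation of $\Mq$ has $M\cdot N_{\textup{q}}$ entries --- the crucial point being that in $k$-group control, the domain is $\Sub\times \Objq$ rather than $\Sub\times \calP_{\leq k}\parens*{\Objq}$, so the $\binom{N_{\textup{q}}}{k}$ blow-up of \Cref{thm:eff-subsystem} disappears; and $G$ is a single $N_{\textup{q}}$-entry array of labels drawn from $[k]$. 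Treating each rights-set and each group label as $O(1)$ memory locations in the word RAM model, summing gives $O\parens*{M\cdot \parens*{N_{\textup{c}}+N_{\textup{q}}}}$.

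For the time bound, I would split by the two clauses $p_{\textup{c}}$ and $p_{\textup{q}}$ of the authorisation rule in \Cref{def:group-control}. Parsing the request costs $O(x)$. In the classical branch $p_{\textup{c}}$, verifying $r\in\Mcc\bracks*{s,o}$ reduces to a single constant-time table lookup and a membership test in a bounded rights set. In the quantum branch $p_{\textup{q}}$, the procedure is: pick an arbitrary reference register $X_0\in o$ and read $G\bracks*{X_0}$, then iterate through each $X\in o$ checking both $G\bracks*{X}=G\bracks*{X_0}$ and $r\in\Mq\bracks*{s,X}$. Since each register identifier must contribute at least a constant amount to the encoded request, we have $\abs*{o}=O(x)$, so this loop runs in $O(x)$ time and dominates.

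The only genuine subtlety --- and hence the main (if mild) obstacle --- is justifying that $\abs*{o}$ is bounded by $x$, which requires appealing to the encoding convention that each register identifier occupies at least one symbol in the request. Once this is noted, every remaining step is routine bookkeeping under the word RAM model, and the theorem follows by direct inspection of \Cref{def:group-control}.
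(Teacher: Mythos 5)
Your proposal is correct and follows essentially the same route as the paper's own proof: space is dominated by the matrix representations of $\Mcc$ (with $M\cdot N_{\textup{c}}$ entries) and $\Mq$ (with $M\cdot N_{\textup{q}}$ entries, thanks to the register-indexed domain), and time is dominated by picking a reference register in $o$ and scanning the rest to compare group labels, which is $O(x)$ since $\abs*{o}=O(x)$. Your explicit accounting for the attribute $G$ and the remark on why $\abs*{o}$ is bounded by the request length are minor additions the paper leaves implicit, but they do not change the argument.
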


\begin{proof}
    Similar to the proof of \Cref{thm:eff-subsystem},
    the space complexity is dominated
    by that for storing the attributes $\Mcc$ and $\Mq$ in \Cref{def:group-control}.
    The matrix representation of $\Mcc$ has $\abs*{\Sub}$ rows and $\abs*{\Objc}$ columns,
    while that of $\Mq$ has $\abs*{\Sub}$ rows and $\abs*{\Objq}$ columns.
    
    The time complexity is dominated by, according to the authorisation rule in \Cref{def:group-control},
    checking if all $X\in o$ have the same group label.
    This can be done by (a) picking an $X\in o$;
    (b) scanning other $Y\in o$; (c) checking if $G\bracks*{X}=G\bracks*{Y}$. 
    The conclusion immediately follows.
\end{proof}

\subsection{Control of Entanglement}
\label{sub:control_of_entanglement}

The subsystem control and group control models in \Cref{sub:control_of_quantum_operations}
offer explicit control over quantum operations on multiple quantum registers that can generate entanglement.
However, within these models,
it is not possible to explicitly control entanglement as a resource:
for example, we cannot make a specification to 
``forbid any entanglement to exist between quantum registers $A$ and $B$''
after entanglement has been established between $A$ and $B$,
because no information about existing entanglements is recorded.
Thus, we propose the following model to control the resource of entanglement.

\begin{definition}[$1$-entanglement control]
	\label{def:1-ent-control}
	The $1$-entanglement control model, denoted by $\ENT^1$,
	extends \Cref{def:core-model} by letting
	$\Atr=\braces*{\Mcc,\Mq,\Me,D,L}$,
	$\Rule=\braces*{\Auth,\PU}$,
	%where $\Mcc: \Sub\times \Objc\rightarrow \calP\parens*{\Rt}$,
	$\Mq:\Sub\times \Objq\rightarrow \calP\parens*{\Rtq}$,
	$\Me, D:\Objq\rightarrow \braces*{\mathit{true},\mathit{false}}$,
	and 
	\begin{align*}
		\begin{split}
		\Auth\parens*{s,o,r}  \equiv\ & p_{\textup{c}}\wedge p_{\textup{e}}\wedge p_{\textup{q}},\text{ where:}\\
        p_{\textup{c}}\equiv \ &o\in \Objc \rightarrow r\in \Mcc\bracks*{s,o},\\
		p_{\textup{e}}\equiv \ &o=\Me\bracks*{X}\rightarrow \parens*{\neg D\bracks*{X}\wedge \Me\bracks*{X}\rightarrow r= \texttt{read} },\\
	p_{\textup{q}}\equiv \ &o\in \calP_+\parens*{\Objq}\rightarrow \parens*{\forall X\in o: r\in \Mq\bracks*{s,X}}\wedge\\
    &\big(\abs*{o}>1\rightarrow \bigwedge_{X\in o}\Me\bracks*{X}\big),
		\end{split}
		\\
		\begin{split}
			\PU\parens*{s,o,r} \equiv\ & \mathbf{if}\ o\in \calP_+\parens*{\Objq}\ \mathbf{then}\\
									   & \quad \mathbf{if}\ r=\texttt{measure}\ \mathbf{then}\\
									   & \quad\quad \mathbf{for}\ X\in o\ \mathbf{do}\ D\bracks*{X}:=\mathit{true}\ \mathbf{od}\\
									   & \quad\mathbf{else}\ \mathbf{if}\ \abs*{o}>1\ \mathbf{then}\\
									   & \quad\quad \mathbf{for}\ X\in o\ \mathbf{do}\ D\bracks*{X}:=\mathit{false}\ \mathbf{od}\\
									   & \quad \mathbf{fi}\\
									   &\mathbf{fi}
		\end{split}
	\end{align*}
	Here, $\texttt{measure}\in \Rtq$ means the ability to perform a complete measurement (see \Cref{sub:quantum_computing}).
        Recall that $\PU$ denotes the post-update rule (see \Cref{def:post-update}). 
	%Here, $\Bracks*{r}$ represents the semantics of the operation $r$.
\end{definition}

In \Cref{def:1-ent-control}, we introduce two attributes $\Me$ and $D$.
For quantum register $X\in \Objq$, $\Me\bracks*{X}$ represents 
whether $X$ is allowed to be entangled with other quantum registers;
and $D\bracks*{X}$ represents whether $X$ is promised to remain disentangled from other quantum registers.
More precisely, $D\bracks*{X}=\mathit{true}$ means $X$ is promised to be disentangled,
and $D\bracks*{X}=\mathit{false}$ means $X$ can be probably entangled.
The authorisation and post-update rules are explained as follows.
\begin{itemize}
	\item 
		For the authorisation rule, $p_{\textup{c}}$ is standard.
		$p_{\textup{e}}$ is used to prevent the case $D\bracks*{X}=\mathit{false}\wedge \Me\bracks*{X}=\mathit{true}$,
		which means quantum register $X$ is not allowed to but being entangled with other registers.
		So, in $p_{\textup{e}}$, if $D\bracks*{X}=\mathit{false}$ and $\Me\bracks*{X}=\mathit{true}$,
		then the current request can only read but not modify $\Me\bracks*{X}$.
		$p_{\textup{q}}$ states that to exercise right $r$ on a quantum subsystem $o$,
		$r$ needs to be appear in $\Mq\bracks*{X}$ for any $X\in o$;
		and if $o$ involves multiple registers, then every $X\in o$ should be allowed to be entangled.
	\item
		The post-update rule updates the attribute $D$ after an authorised request.
		If the request performs a complete measurement on a quantum subsystem, 
		then every registers within are promised to be disentangled.
		Otherwise, if the subsystem involves multiple registers,
		the registers within can probably be entangled (in the worst case).
\end{itemize}

It is worth pointing out that the attribute $D$ only serves as an \textit{approximated knowledge} of existing entanglements.
As an approximation, it is possible that $D\bracks*{X}=\mathit{false}$ while $X$ is actually disentangled.
In this case, due to the above authorisation rule, before a user tries to modify $\Me\bracks*{X}$ to $\mathit{false}$,
some user in the system must perform a measurement on $X$ to force it to be disentangled, which is redundant.
Nevertheless, we suspect that it is impractical, without tracing the explicit state of quantum registers, 
to have accurate control (instead of approximation) of entanglement.
Meanwhile, tracing the explicit state is often beyond the scope of access control.

Another point worth mentioning for the post-update rule is that
we use complete measurement as a promise for disentanglement.
An open question here is whether there is other weaker condition of promising disentanglement 
other than complete measurement (see also \Cref{sec:discussion}).

We can further refine \Cref{def:1-ent-control} into the following model
that records more information about existing entanglements.

\begin{definition}[$2$-entanglement control]
	\label{def:2-ent-control}
	The $2$-entanglement control model, denoted by $\ENT^2$,
	extends \Cref{def:core-model} as follows.
	Let $\Atr=\braces*{\Mcc,\Mq,\Me,D,L}$,
	$\Rule=\braces*{\Auth,\PU}$,
	%where $\Mcc: \Sub\times \Objc\rightarrow \calP\parens*{\Rt}$,
	where $\Mq:\Sub\times \Objq\rightarrow \calP\parens*{\Rtq}$,
	$\Me, D:\calP_{2}\parens*{\Objq}\rightarrow \braces*{\mathit{true},\mathit{false}}$,
	and 
	\begin{align*}
		\begin{split}
		\Auth\parens*{s,o,r}  \equiv\ & p_{\textup{c}}\wedge p_{\textup{e}}\wedge p_{\textup{q}},\text{ where:}\\
        p_{\textup{c}} \equiv\ &o\in \Objc \rightarrow r\in \Mcc\bracks*{s,o},\\
										  p_{\textup{e}}\equiv \ &o=\Me\bracks*{X,Y}\rightarrow \parens*{\neg D\bracks*{X,Y}\wedge \Me\bracks*{X,Y}\rightarrow r=\texttt{read}},\\
										 p_{\textup{q}}\equiv \ &o\in \calP_+\parens*{\Objq}\rightarrow \parens*{\forall X\in o: r\in \Mq\bracks*{s,X}}\wedge\\
                                         &\big(\abs*{o}>1\rightarrow \bigwedge_{X\neq Y\in o}\Me\bracks*{X,Y}\big)
		\end{split}
		\\
		\begin{split}
			\PU\parens*{s,o,r} \equiv\ & \mathbf{if}\ o\in \calP_+\parens*{\Objq}\ \mathbf{then}\\
									   & \quad \mathbf{if}\ r=\texttt{measure}\ \mathbf{then}\\
									   & \quad\quad \mathbf{for}\ X\in o\wedge Y\in \Objq\ \mathbf{do}\ D\bracks*{X,Y}:=\mathit{true}\ \mathbf{od}\\
									   & \quad \mathbf{else}\ \mathbf{if}\ \abs*{o}>1\ \mathbf{then}\\
									   & \quad\quad \mathbf{for}\ X\neq Y\in o\ \mathbf{do}\ D\bracks*{X,Y}:=\mathit{false}\ \mathbf{od}\\
									   & \quad \mathbf{fi}\\
									   &\mathbf{fi}
		\end{split}
	\end{align*}
	Here, $\calP_2\parens*{\cdot}$ denotes the set of subsets of cardinality $2$.
\end{definition}

Compare to $\ENT^1$ in \Cref{def:1-ent-control},
we extend the attributes $\Me$ and $E$ to be functions on $\calP_{2}\parens*{\Objq}$.
Specifically, $\Me\bracks*{X,Y}$ represents whether $X,Y$ are allowed to be entangled;
and $D\bracks*{X,Y}$ represents whether $X$ is promised to be disentangled from $Y$.

The authorisation and post-update rules in \Cref{def:2-ent-control} are similar to but more fine-grained 
(regarding entanglement between two quantum registers)
than those in \Cref{def:1-ent-control}.
Note that in the post-update rule,
we modify $D\bracks*{X,Y}$ to be $\mathit{true}$ for all $Y\in \Objq$ when $X$ is completely measured.

In the above, we only define $k$-entanglement control for $k=1,2$.
A similar definition for higher $k$ is possible,
but it seems less useful due to the following intuitive reason.
$\ENT^2$ is more flexible than $\ENT^1$ because it records ``whether two quantum registers can be entangled'',
which is more fine-grained than ``whether one quantum register can be entangled with others''.
For example, saying ``$X_1,X_2$ are entangled'' is more fine-grained than 
saying ``$X_1$ is entangled with some register and $X_2$ is also entangled''.
However, when we consider $k=3$, 
it is unclear whether saying ``$X_1,X_2,X_3$ are entangled'' is more fine-grained
than saying ``$X_1,X_2$ are entangled and $X_1,X_3$ are also entangled''.

While the entanglement control greatly differs from models in \Cref{sub:control_of_quantum_operations}, 
it also offers protection against threats from entanglement,
as stated in the following theorem. 

\begin{theorem}[Security of $k$-entanglement control]
	\label{thm:security-1-ent}
	The system described in \Cref{sub:problem_setting} can be lifted to a system in $\ENT^1$ (or $\ENT^2$)
	such that the security guarantee in \Cref{thm:classical-security} is retained.
\end{theorem}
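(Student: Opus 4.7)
The plan is to adapt the constructions used in the proofs of \Cref{thm:security-k-subsys} and \Cref{thm:security-k-group}, but exploiting the entanglement-tracking attributes $\Me$ and $D$ in place of subsystem- or group-level specification. As in those proofs, I would introduce a few auxiliary quantum registers $D_1,\ldots,D_m$ (to illustrate flexibility) and set $\Objc=\{A,B,\Mcc,\Mq,\Me,D\}$. The matrices $\Mcc$ and $\Mq$ mirror $\Macc$ on the classical and quantum parts respectively, with $\Mcc[v,\Mcc]=\Mcc[v,\Mq]=\{\texttt{all}\}$ so that $v$ can execute Lines $1$ and $6$ of $P_v$ (updating $\Mcc,\Mq$ instead of $\Macc$). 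The critical choice for $\ENT^1$ is $\Me[C_j]:=\mathit{false}$ initially, with no user granted write access to $\Me[C_j]$.

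First, I would verify that this lifting forbids any entanglement involving $C_1,\ldots,C_n$. By the clause $p_{\textup{q}}$ of \Cref{def:1-ent-control}, any multi-register request $(s,o,r)$ with $|o|>1$ requires $\Me[X]=\mathit{true}$ for every $X\in o$; hence no such request containing any $C_j$ can be authorised. Consequently each $C_j$ evolves only under single-register operations, and the joint state of $(C_1,\ldots,C_n)$ remains a tensor product throughout the execution.

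Second, the product-state structure makes the measurement outcomes used by the $w_j$ conditionally independent given the classical record (the values written into $C_j^1$ and the local program randomness). This is exactly the independence hypothesis of \Cref{lmm:Mermin}, so the Mermin-inequality argument in the proof of \Cref{thm:classical-security} carries through verbatim and yields the bound $I(A(t_u);\Obs(w_j,t_w))\le 2^{-(n-7)/2}$. For $\ENT^2$, I would instead set $\Me[\{C_l,X\}]:=\mathit{false}$ for every $l\in[n]$ and $X\in \Objq\setminus\{C_l\}$, so that the analogous $p_{\textup{q}}$ clause in \Cref{def:2-ent-control} blocks any multi-register operation touching a $C_j$, and the same reduction applies.

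The main obstacle will be the interplay between $D$, $\PU$, and $p_{\textup{e}}$. In particular, $p_{\textup{e}}$ restricts modifications of $\Me[X]$ when $D[X]$ is false and $\Me[X]$ is true; I need to check that the attribute updates performed by the modified $P_v$ (which only touch $\Mcc$ and $\Mq$) are not accidentally blocked, and that the initialisation of $D$ is consistent with the invariants maintained by $\PU$ after each authorised request. For $\ENT^2$ the additional subtlety is \emph{indirect} entanglement that could arise if two disjoint pairs $\{C_l,D\}$ and $\{C_r,D\}$ were both allowed; forbidding every pair $\{C_l,X\}$ as above sidesteps this but is rather coarse, so a sharper lifting that exploits private ancillas per user would be the natural refinement if more flexibility is desired.
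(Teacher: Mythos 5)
Your construction is correct, but it takes a genuinely different (and in fact simpler) route than the paper's. The paper initialises $\Me\bracks*{C_j}=\mathit{true}$, grants $v$ write access to $\Me$, and prepends to $P_v$ a step that \emph{measures each $C_j$ in the computational basis and then flips $\Me\bracks*{C_j}$ to $\mathit{false}$}; the measurement is there precisely because of the $p_{\textup{e}}$/$\PU$ interaction you flag as ``the main obstacle'' --- to write $\Me\bracks*{C_j}$ while it is still $\mathit{true}$, the authorisation rule demands $D\bracks*{C_j}=\mathit{true}$, which only the complete measurement (via the post-update rule) guarantees. This choice deliberately exercises the dynamic entanglement-control feature that distinguishes $\ENT$ from $\SUBSYS$ and $\GRP$. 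Your lifting instead sets $\Me\bracks*{C_j}=\mathit{false}$ statically and grants nobody write access to it, which makes $p_{\textup{e}}$ vacuous for the $C_j$ (its antecedent $\neg D\bracks*{C_j}\wedge \Me\bracks*{C_j}$ is always false) and makes the $D$/$\PU$ bookkeeping irrelevant --- so the obstacle you anticipate simply dissolves, and no measurement step is needed. Both liftings are legitimate under the paper's informal notion of lifting (its own proofs only exhibit ``one possible way''), and both reduce security to the same observation: no authorised multi-register operation ever touches a $C_j$, so $C_1,\ldots,C_n$ stay in a product state and the factorised-conditional hypothesis of \Cref{lmm:Mermin} holds, letting the bound of \Cref{thm:classical-security} carry over. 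What the paper's version buys is a demonstration of $\ENT$'s dynamic capability (mirroring the $M_0\to M_1$ phase change of the original system); what yours buys is brevity. Your observation about indirect entanglement through a shared ancilla in $\ENT^2$ is a real subtlety that the paper's ``similar and thus omitted'' proof glosses over, and your coarse per-pair prohibition handles it correctly.
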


\begin{proof}
	We only prove the theorem for $\ENT^1$,
	and the proof for $\ENT^2$ is similar.
	Like in the proof of \Cref{thm:security-k-subsys},
	let us assume several additional quantum registers, 
	say $\Objq=\braces*{C_1,\ldots, C_n,D_1,\ldots, D_5}$;
	and we only show one possible way of lifting.

	The lifted system has $\Objc = \braces*{A,B,\Mcc,\Mq,\Me,D}$.
	We construct the lifting as follows.
	\begin{itemize}
		\item 
			Let $\Mcc\bracks*{v,\Mcc}=\Mcc\bracks*{v,\Mq}=\Mcc\bracks*{v,\Me}=\braces*{\texttt{all}}$.
		\item
			For $X\in \braces*{A,B}$,
			we define $\Mcc\bracks*{s,X}=\Macc\bracks*{s,X}$.
			For $X\in \braces*{C_1,\ldots,C_n}$, let $\Mq\bracks*{s,X}= \Macc\bracks*{s,X}$.
			For $X\in \braces*{D_1,\ldots,D_5}$, let $\Mq\bracks*{s,X}=\braces*{\texttt{all}}$.
			We also modify Line $1$ and $6$ of $P_v$ in \Cref{fig:userv} to write $\Mcc\bracks*{s,X}$ and $\Mq\bracks*{s,X}$
			instead of $\Macc\bracks*{s,X}$.
		\item
			For $j\in [n]$, let $\Me\bracks*{C_j}$ be initialised to $1$ (where by convention we use $1$ to represent $\mathit{true}$ and $0$ to represent $\mathit{false}$).
			We add the following line before Line $1$ of $P_v$ in \Cref{fig:userv}:
			For $j\in [n]$, measure $C_j$ in the computational basis and flip $\Me\bracks*{C_j}$.
			This new line forbids future entanglement among $C_1,\ldots, C_n$.
	\end{itemize}
	Before $v$ modifies each $\Me\bracks*{C_j}$ to $0$,
	according to the authorisation rule in \Cref{def:1-ent-control},
	$D\bracks*{C_j}$ has to be $1$,
	meaning that $C_j$ is promised to be disentangled from other quantum registers.
	Meanwhile, $\Me\bracks*{D_l}=1$, so each $D_l$ is allowed to be entangled with other quantum registers.
\end{proof}

Finally, let us analyse the efficiency of the $k$-entanglement control model. 

\begin{theorem}[Efficiency of $k$-entanglement control]
	Suppose that $\abs*{\Sub}=M$, $\abs*{\Objc}=N_{\textup{c}}$ and $\abs*{\Objq}=N_{\textup{q}}$,
	then the $k$-entanglement control model uses $O\parens*{M\cdot \parens*{N_{\textup{c}}+N_{\textup{q}}^k}}$ space for access control for $k=1,2$,
	and it takes $O\parens*{x+xN_{\textup{q}}\parens*{k-1}}$ time to handle an access request of length $x$.
\end{theorem}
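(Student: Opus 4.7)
The plan is to argue the space and time bounds separately, mirroring the strategy used for the analogous efficiency theorems of $k$-subsystem control (\Cref{thm:eff-subsystem}) and $k$-group control earlier in this section.

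For the space bound, I would first observe that among the attributes in $\Atr=\braces*{\Mcc,\Mq,\Me,D,L}$, the scratch memory $L$ is per-subject and absorbed in the worst-case count, so the storage is dominated by the four matrices. The representation of $\Mcc$ requires $\abs*{\Sub}\cdot\abs*{\Objc}=M\cdot N_{\textup{c}}$ cells and that of $\Mq$ requires $\abs*{\Sub}\cdot\abs*{\Objq}=M\cdot N_{\textup{q}}$ cells. By \Cref{def:1-ent-control,def:2-ent-control}, $\Me$ and $D$ are subject-independent; their domain is $\Objq$ for $k=1$ and $\calP_2\parens*{\Objq}$ for $k=2$, contributing $O\parens*{N_{\textup{q}}^k}$. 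Summing yields $O\parens*{M N_{\textup{c}}+M N_{\textup{q}}+N_{\textup{q}}^k}$, which collapses to the claimed $O\parens*{M\cdot\parens*{N_{\textup{c}}+N_{\textup{q}}^k}}$ after absorbing $M N_{\textup{q}}$ into $M N_{\textup{q}}^k$ and $N_{\textup{q}}^k$ into $M N_{\textup{q}}^k$ using $M,N_{\textup{q}}\geq 1$.

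For the time bound on handling a request $\parens*{s,o,r}$ of length $x$, I would walk through each conjunct of $\Auth$ and then $\PU$. Parsing the request and evaluating $p_{\textup{c}}$ and $p_{\textup{e}}$ each cost $O\parens*{x}$, because after the initial scan of the input only constant-time table lookups into the matrices are required. The clause $p_{\textup{q}}$ supplies the dominant cost: checking $\forall X\in o:\ r\in\Mq\bracks*{s,X}$ takes $O\parens*{\abs*{o}}=O\parens*{x}$, and for $k=2$ the extra conjunct $\bigwedge_{X\neq Y\in o}\Me\bracks*{X,Y}$ iterates $\binom{\abs*{o}}{2}$ pairs, which I would bound using $\abs*{o}\leq x$ and $\abs*{o}\leq N_{\textup{q}}$ to obtain $\abs*{o}^2\leq x\cdot N_{\textup{q}}$. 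For $\PU$, the $k=1$ update over $X\in o$ costs $O\parens*{x}$, whereas for $k=2$ a measurement sets $D\bracks*{X,Y}:=\mathit{true}$ over all $X\in o$ and $Y\in\Objq$, giving $O\parens*{\abs*{o}\cdot N_{\textup{q}}}=O\parens*{x\cdot N_{\textup{q}}}$. Summing all contributions produces the stated $O\parens*{x+x N_{\textup{q}}\parens*{k-1}}$.

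The analysis is essentially bookkeeping, so I do not anticipate a serious conceptual obstacle. The one subtlety worth flagging is that in the $k=2$ time bound the quadratic-in-$\abs*{o}$ cost of the pairwise entanglement-permission check must be expressed as $x\cdot N_{\textup{q}}$ rather than $x^2$ or $N_{\textup{q}}^2$; this uses the joint inequalities $\abs*{o}\leq x$ and $\abs*{o}\leq N_{\textup{q}}$ simultaneously, and it is precisely this step (together with the post-update cost in the measurement case) that produces the factor $N_{\textup{q}}\parens*{k-1}$ in the stated complexity rather than $x\parens*{k-1}$ or $N_{\textup{q}}^{k-1}$.
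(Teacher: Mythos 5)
Your proposal is correct and follows essentially the same route as the paper's proof: the space bound comes from tabulating the sizes of $\Mcc$, $\Mq$, $\Me$ and $D$, and the time bound from the $O\parens*{x\cdot N_{\textup{q}}}$ cost of iterating over $X\in o$ and $Y\in\Objq$ in the $k=2$ case (the paper attributes the dominant cost to the post-update for-loop, while you additionally account for the pairwise $\Me\bracks*{X,Y}$ check in $p_{\textup{q}}$, which yields the same bound). No gaps.
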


\begin{proof}
    The space complexity is dominated by that for storing the attributes $\Mcc$, $\Mq$, $\Me$ and $D$ 
    in \Cref{def:1-ent-control,def:2-ent-control}.
    The matrix representation of $\Mcc$ has $\abs*{\Sub}$ rows and $\abs*{\Objc}$ columns,
    while that of $\Mq$ has $\abs*{\Sub}$ rows and $\abs*{\Objq}$ columns.
    $\Me$ and $D$ have $\abs*{\calP_{\leq k}\parens*{\Objq}}=O\parens*{N_{\textup{q}}^k}$ rows and $1$ column, for $k=1,2$.

    The time complexity is dominated by the first for-loop in the post-update rule.
    For $k=1$ (see \Cref{def:1-ent-control}), 
    the loop goes through every $X\in o$ and has time complexity $O\parens*{x}$.
    For $k=2$ (see \Cref{def:2-ent-control}),
    the loop goes through every $X\in o$ and $Y\in \Objq$ and 
    has time complexity $O\parens*{x\cdot N_{\textup{q}}}$.
\end{proof}

\subsection{Comparison of Flexibility}
\label{sub:comparison_of_flexibility}

In this subsection, we compare the flexibility of different models introduced in \Cref{sub:control_of_quantum_operations,sub:control_of_entanglement}.
The results are already visualised in \Cref{fig:compare-flex}.
In practice, one can also consider a hybrid of these models
to achieve a better trade-off between flexibility and efficiency.

Our first theorem shows that for each model in $\braces*{\SUBSYS,\GRP,\ENT}$,
as the parameter $k$ becomes larger, the model becomes more flexible.

\begin{theorem}[Flexibility hierarchy]
	\label{thm:flex-hierarchy}
	For $\mathsf{M}\in \braces*{\SUBSYS, \GRP}$ and any $k\geq 2$,
	or $\mathsf{M}=\ENT$ and $k=2$,
	we have $\mathsf{M}^{k-1} < \mathsf{M}^{k}$.
\end{theorem}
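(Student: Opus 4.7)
The plan is to establish, for each of the three families ($\SUBSYS$, $\GRP$, $\ENT$), both $\mathsf{M}^{k-1} \leq \mathsf{M}^k$ and $\mathsf{M}^k \not\leq \mathsf{M}^{k-1}$; the strict inequality then follows immediately from \Cref{def:flexibility}.

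For the direction $\mathsf{M}^{k-1} \leq \mathsf{M}^k$, I would give a direct embedding. Given $\calA \in \SUBSYS^{k-1}$, define $\calA' \in \SUBSYS^k$ by copying $\Mcc$ and extending $\Mq$ so that $\Mq'\bracks*{s, o} = \Mq\bracks*{s, o}$ for $\abs*{o} \leq k-1$ and $\Mq'\bracks*{s, o} = \emptyset$ for $\abs*{o} = k$; then the authorisation predicate $p_{\textup{q}}$ coincides request by request. For $\GRP^{k-1} \leq \GRP^k$, reuse $G : \Objq \to [k-1] \subseteq [k]$ unchanged. For $\ENT^1 \leq \ENT^2$, set $\Me'\bracks*{X, Y} := \Me\bracks*{X} \wedge \Me\bracks*{Y}$ and $D'\bracks*{X, Y} := D\bracks*{X} \wedge D\bracks*{Y}$ initially, then check that the $\ENT^2$ post-update rule preserves this factorised form under every request where the $\ENT^1$ rule acts.

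For the direction $\mathsf{M}^k \not\leq \mathsf{M}^{k-1}$, I would exhibit, in $\mathsf{M}^k$, a witness system whose authorised histories no $\mathsf{M}^{k-1}$ system can reproduce, together with a program and scheduler that expose the discrepancy. In $\SUBSYS^k$, take $k$ quantum registers and a user $s$ with $\Mq\bracks*{s, o} = \braces*{r}$ for the whole $k$-subsystem $o$; the request $\parens*{s, o, r}$ is authorised in the witness but denied in every $\SUBSYS^{k-1}$ system because $p_{\textup{q}}$ demands $\abs*{o} \leq k-1$. In $\GRP^k$, place $k$ quantum registers $X_1, \ldots, X_k$ in $k$ distinct groups while authorising a common right $r$ on each singleton; by pigeonhole, any $\GRP^{k-1}$ candidate must collapse two registers $X_i, X_j$ into the same group, and the $\Mq$ entries forced by the singleton authorisations then also authorise $\parens*{s, \braces*{X_i, X_j}, r}$, which the witness denies. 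In $\ENT^2$, take $\Objq = \braces*{X, Y, Z}$ with $\Me\bracks*{X, Y} = \Me\bracks*{Y, Z} = \mathit{true}$, $\Me\bracks*{X, Z} = \mathit{false}$, and $D \equiv \mathit{true}$; any $\ENT^1$ candidate authorising operations on $\braces*{X, Y}$ and on $\braces*{Y, Z}$ is forced into $\Me\bracks*{X} = \Me\bracks*{Y} = \Me\bracks*{Z} = \mathit{true}$, which then also authorises $\parens*{s, \braces*{X, Z}, r}$, contradicting the witness.

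The step I expect to be most delicate is the $\ENT^1 \leq \ENT^2$ direction, because $\Me$ is itself a classical object that programs may read and write. Maintaining the invariant $\Me'\bracks*{X, Y} = \Me\bracks*{X} \wedge \Me\bracks*{Y}$ across a full execution requires a careful treatment of how a single write to $\Me\bracks*{X}$ in the $\ENT^1$ system corresponds to the family of writes to $\Me'\bracks*{X, Y}$ for all $Y$ in the $\ENT^2$ system; I would handle this by viewing $\Me$ as one logical attribute whose concrete register layout differs between the two models, so that the sequence of authorised histories matches even though the syntactic requests differ. The remaining cases reduce to essentially syntactic containment arguments and a short pigeonhole observation.
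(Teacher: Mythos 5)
Your overall architecture (an embedding for $\mathsf{M}^{k-1}\leq\mathsf{M}^k$ plus a witness system for $\mathsf{M}^k\not\leq\mathsf{M}^{k-1}$) matches the paper's, and your $\GRP$ pigeonhole and $\ENT^2$ witnesses are essentially the ones used there. However, your $\SUBSYS^k\not\leq\SUBSYS^{k-1}$ argument has a genuine gap. Under \Cref{def:equivalent-sys}, equivalence is judged on the histories a fixed program generates \emph{in each system}, and these need not consist of the same requests: a candidate $\calA'\in\SUBSYS^{k-1}$ may realise the program implementing your $k$-register operation as a sequence of several authorised requests, each touching at most $k-1$ registers. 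Saying that the single request $\parens*{s,o,r}$ with $\abs*{o}=k$ is ``denied because $p_{\textup{q}}$ demands $\abs*{o}\leq k-1$'' therefore does not yet contradict $\calA\simeq\calA'$. The paper closes exactly this loophole: it takes the right to be $\texttt{QFT}_k$ (so no single $(k-1)$-register request can realise it, and the history in $\calA'$ must contain some other authorised request $\parens*{u,o,r}$ with $r\neq\texttt{QFT}_k$), and then runs a \emph{second} execution whose program issues only that smaller operation, which is authorised in $\calA'$ but cannot even produce a valid history in $\calA$ because $r\notin\Rt$. You need such a second execution (or an argument ruling out decomposition) to finish this case; the same remark applies, more mildly, to turning your $\GRP$ and $\ENT$ witnesses into formal violations of \Cref{def:equivalent-sys}.

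A second, more local error: in the embedding $\ENT^1\leq\ENT^2$ you set $D'\bracks*{X,Y}:=D\bracks*{X}\wedge D\bracks*{Y}$, but the correct choice is the disjunction $D'\bracks*{X,Y}:=D\bracks*{X}\vee D\bracks*{Y}$, as in the paper: $X$ and $Y$ are promised mutually disentangled as soon as \emph{one} of them is promised disentangled from everything. With your conjunction, a system whose initial attributes have $D\bracks*{X}=\mathit{true}$, $D\bracks*{Y}=\mathit{false}$ and $\Me\bracks*{X}=\Me\bracks*{Y}=\mathit{true}$ authorises a write to $\Me\bracks*{X}$ in $\ENT^1$ (the antecedent $\neg D\bracks*{X}$ of $p_{\textup{e}}$ fails), whereas the simulating $\ENT^2$ system has $\neg D'\bracks*{X,Y}\wedge\Me'\bracks*{X,Y}=\mathit{true}$ and forces $r=\texttt{read}$, so the corresponding write is denied and the two systems are not equivalent. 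The concern you flag about one write to $\Me\bracks*{X}$ corresponding to a family of writes to $\Me'\bracks*{X,Y}$ is legitimate but is the easier half; the choice of connective in combining the $D$ values is where the construction actually breaks.
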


\begin{proof}
    For illustration, we only prove $\SUBSYS^{k-1} < \SUBSYS^{k}$ here,
    and leave the proofs of $\GRP^{k-1} < \GRP^{k}$
    and $\ENT^1 < \ENT^2$ to \Cref{sub:proof_of_flex_hierarchy}.
        
        \begin{enumerate}
		\item 
			We first prove $\SUBSYS^k\not\leq \SUBSYS^{k-1}$.
			The proof idea is using the existence of quantum operations acting non-trivially on $k$ quantum registers.
			%i.e., $\calO\calP\parens*{\calH}\subsetneq \calO\calP\parens*{\calH'}$ for $\calH\subsetneq \calH'$.
			For concreteness,
			let us consider $\mathit{QFT}_k$,
			the quantum Fourier transform on $k$ qubits,
			and use $\texttt{QFT}_k$ to denote the right to implement a $\mathit{QFT}_k$ quantum circuit.

			Let us consider a system $\calA=\parens*{\Sub, \Obj,\Rt,\Atr, \Rule}\in \SUBSYS^k$,
			where $\Sub=\braces*{u,v}$, $\Objc=\emptyset$, $\Objq=\braces*{X_1,\ldots, X_k}$,
			$\Rtc=\emptyset$ and $\Rtq=\braces*{\texttt{QFT}_k}$.
			Attributes $\Mcc,\Mq$ are initialised as follows.
			Since $\Objc=\emptyset$, we set $\Mcc=\emptyset$.
			Denote subsystem $q=\Objq$.
			For $s\in \Sub,o\subseteq \Objq$:
			\begin{equation}
				\Mq\bracks*{s,o} = 
				\begin{cases}
					\braces*{\texttt{QFT}_k}, & s=u\wedge o=q,\\
					\emptyset, & o.w.
				\end{cases}
				\label{eq:M-k-subsystem-prf}
			\end{equation}

			Assume for contradiction that there exists another system $\calA'=\parens*{\Sub, \Obj', \Rt', \Atr', \Rule'}\in \SUBSYS^{k-1}$
			with $\Mcc',\Mq'\in \Atr'$
			such that $\calA'\simeq \calA$.
			We can further assume that $\Objc'=\emptyset$ and $\Rtc'=\emptyset$,
			because otherwise $\calA$ and $\calA'$ will be obviously inequivalent.
			As a result, $\Mq'$ cannot be dynamically modified.

			Consider an execution $\parens*{S,P}$
			with $P_u\equiv \mathit{QFT}_k\bracks*{q}$
			and $P_v\equiv \bot$,
			where $\bot$ denotes termination without doing anything.
			By our construction of $\calA$,
			the history generated by $\parens*{S,P}$ in $\calA$ is simply $\parens*{u,q,\texttt{QFT}_k}$ and is authorised.

			Meanwhile, a request accessing quantum register $o$ in $\calA'$ is only authorised if $\abs*{o}\leq k-1$,
			according to the authorisation rule in \Cref{def:subsys-control}.
			Since $\mathit{QFT}_k$ non-trivially acts on all $k$ quantum registers,
			the history $\alpha$ generated by $\parens*{S,P}$ in $\calA'$ contains more than one requests.
			The above implies that $\alpha\parens*{0}=\parens*{u,o,r}$,
			%there exists $t\in \N$ such that $\alpha\parens*{t}=\parens*{u,o,r}$
			%$\alpha=\alpha',\parens*{w,o,r},\beta$ for some sequences $\alpha',\beta$ of rights,
			where $o\subseteq q$ is a quantum register with $\abs*{o}\leq k-1$ and $r\neq \texttt{QFT}_k$ 
			is the ability to perform some quantum circuit $U\neq \mathit{QFT}_k$.
			As we assume $\calA\simeq \calA'$, $\alpha$ is also authorised.
			
			%Suppose $\alpha=\parens*{s_1,o_1,r_1},\ldots,\parens*{s_m,o_m,r_m}$,
			%where $s_j\in \Sub'$, $o_j\subseteq q$ with $\abs*{o_j}\leq k-1$,
			%and $r_j\in \Rt'$ are abilities to perform quantum circuits $U_j$ which satisfy $U_m\ldots U_1=\mathit{QFT}_k$.
 
			Now we consider another execution $\parens*{S,P'}$
			with $P_{u}'\equiv U\bracks*{o}$
			and $P_v' \equiv \bot$.
			%where $Q$ is a program corresponding to the prefix $\alpha\parens*{0},\ldots, \alpha\parens*{t-1}$ of history $\alpha$.
			The history generated by $\parens*{S,P}$ in $\calA'$ is $\parens*{u,o,r}$, %$\alpha\parens*{0},\ldots, \alpha\parens*{t}$,
			which is therefore authorised as a prefix of the authorised history $\alpha$.
			However, $\parens*{S,P'}$ cannot generate a valid history in $\calA$ because $r\notin\Rt=\braces*{\texttt{QFT}_k}$.
			Hence, we obtain a contradiction and the conclusion follows.
		\item
			Next, we prove that $\SUBSYS^{k-1}\leq \SUBSYS^{k}$.
			Suppose that $\calA=\parens*{\Sub, \Obj,\Rt,\Atr, \Rule}\in \SUBSYS^{k-1}$ with $\Mcc,\Mq\in \Atr$.
			Then, we can define $\calA'=\parens*{\Sub,\Obj,\Rt, \Atr',\Rule}\in \SUBSYS^{k}$ with $\Mcc',\Mq'\in \Atr$
			such that $\Mcc'=\Mcc$ and for any $s\in \Sub, o\subseteq \Objq$:
			\begin{equation*}
				\Mq'\bracks*{s,o}=
				\begin{cases}
					\Mq\bracks*{s,o}, &  \abs*{o}\leq k-1,\\
					\emptyset, & o.w.
				\end{cases}
			\end{equation*}
			It is easy to see that $\calA\simeq \calA'$ from this construction.
	\end{enumerate}
\end{proof}

Our second theorem presents a comparison between the flexibility of subsystem control, group control and entanglement control. 
Let $\SUBSYS=\bigcup_{k} \SUBSYS^k$,
and define $\GRP$ and $\ENT$ similarly.
Let $\SUBSYS^{< N}= \bigcup_{k} \SUBSYS^k \cap \braces*{\calA:\calA\text{ has }\abs*{\Objq}>k}$
be the set of systems with $k$-subsystem control and $k$ less than the size of $\Objq$.

\begin{theorem}[Comparison of Flexibility]
	\label{thm:flex-compare}
	The flexibility of $\SUBSYS$, $\GRP$, $\ENT$ can be compared as follows.
	\begin{enumerate}
		\item
			\label{thm:subsys-not-leq-grp-ent}
			$\SUBSYS \not\leq \GRP,\ENT$.
		\item
			\label{thm:grp-not-leq-ent-subsys}
			$\GRP \not\leq \ENT,\SUBSYS^{<N}$ and $\GRP\leq \SUBSYS$.
		\item
			\label{thm:ent-not-leq-subsys-grp}
			$\ENT \not\leq \SUBSYS,\GRP$.
	\end{enumerate}
\end{theorem}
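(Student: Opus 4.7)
The plan is to prove each non-inclusion by the scheme used in \Cref{thm:flex-hierarchy}: exhibit a system $\calA$ in one model together with programs whose authorisation statuses cannot all be matched by any equivalent system in the other model. The positive claim $\GRP\leq\SUBSYS$ is handled by an explicit simulation.

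For Part 1, the key observation is that in $\GRP$ the set of subsystems operable by a user is a union of group-cliques (induced by an equivalence relation on $\Objq$), while in $\ENT$ it is downward-closed (any sub-pair of an operable multi-register subsystem is itself operable). To show $\SUBSYS\not\leq\GRP$, take $\calA\in\SUBSYS^2$ with a single user $u$, $\Mq\bracks*{u,\braces*{X_1,X_2}}=\Mq\bracks*{u,\braces*{X_2,X_3}}=\braces*{\texttt{all}}$ and $\Mq\bracks*{u,\braces*{X_1,X_3}}=\emptyset$ (with all singletons also set to $\braces*{\texttt{all}}$), and consider three single-request programs, one per pair. In any equivalent $\calA'\in\GRP$ none of these programs modifies $G$, so the authorisation of each pair-operation reduces to checking $G\bracks*{X_i}=G\bracks*{X_j}$ on the initial state; but $G\bracks*{X_1}=G\bracks*{X_2}$ and $G\bracks*{X_2}=G\bracks*{X_3}$ force $G\bracks*{X_1}=G\bracks*{X_3}$, contradicting the denial of $\braces*{X_1,X_3}$ in $\calA$. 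For $\SUBSYS\not\leq\ENT$, take $\Mq\bracks*{u,\braces*{X_1,X_2,X_3}}=\braces*{\texttt{all}}$ and $\Mq\bracks*{u,o}=\emptyset$ for each pair $o$; any $\Me$-configuration in $\ENT^1$ or $\ENT^2$ making the triple auth forces every sub-pair to be auth as well, contradicting $\calA$.

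For Part 2, $\GRP\leq\SUBSYS$ is witnessed by $\calA'\in\SUBSYS^{\abs*{\Objq}}$ with $\Mq'\bracks*{s,o}=\braces*{r:\forall X,Y\in o.\,G\bracks*{X}=G\bracks*{Y}\wedge\forall X\in o.\,r\in\Mq\bracks*{s,X}}$, each dynamic write to $G$ in $\calA$ being realised by a compound update of the affected entries of $\Mq'$. The claim $\GRP\not\leq\SUBSYS^{<N}$ follows by taking $\calA\in\GRP^1$ with all $N$ quantum registers in one group: the program doing a single operation on $\Objq$ is auth in $\calA$, but in $\SUBSYS^{<N}$ the request has $\abs*{o}=N>k$, so either it is denied or the compilation splits it into smaller requests introducing new rights $r\notin\Rtq$ of $\calA$, breaking equivalence exactly as in \Cref{thm:flex-hierarchy}. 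The main obstacle is $\GRP\not\leq\ENT$. I take $\calA\in\GRP^2$ with $X_1,X_2$ initially in group $1$, user $v$ with write on $G$, and user $u$ with $\texttt{all}$ on each register, and consider programs of the form ``$u$ does CNOT on $\braces*{X_1,X_2}$; $v$ writes $G\bracks*{X_1}:=2$; $u$ does a subsequent operation'', where the subsequent operation is a singleton on $X_1$, a singleton on $X_2$, or a joint operation on $\braces*{X_1,X_2}$. In $\calA$ the singleton variants are fully auth and the joint variant is denied at the last step. For any $\calA'\in\ENT$, the compiled form of $v$'s write must be auth and must revoke joint operations on $\braces*{X_1,X_2}$ in $\calA'$ without affecting singleton operations on $X_1$ or $X_2$. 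In $\ENT$ this leaves only two options: modify $\Me$ (blocked by $p_{\textup{e}}$ since the preceding CNOT has set $D$ to $\mathit{false}$) or remove $u$'s right on $X_1$ or $X_2$ in $\Mq$ (which also denies the corresponding singleton step). Enumerating these cases, which is the hardest step of the whole theorem, contradicts equivalence.

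For Part 3, I exploit the dynamic update of $D$ by quantum operations, which is unique to $\ENT$. Take $\calA\in\ENT^1$ with a single user $u$ having write access on $\Me\bracks*{X_1}$ and $\texttt{all}$ on both $X_1,X_2$. The program $P^{(a)}$ that only writes $\Me\bracks*{X_1}:=\mathit{false}$ is auth in $\calA$ (since $D\bracks*{X_1}$ is initially $\mathit{true}$), but the program $P^{(b)}$ that first applies CNOT on $\braces*{X_1,X_2}$ and then writes $\Me\bracks*{X_1}:=\mathit{false}$ is not (the CNOT flips $D\bracks*{X_1}$ to $\mathit{false}$, after which $p_{\textup{e}}$ denies the write). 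In any $\calA'\in\SUBSYS$ or $\GRP$, whichever attribute the compilation of $u$'s write targets, its authorisation is governed by $\Mcc$ alone and is therefore insensitive to the preceding quantum operation; so $P^{(a)}$ and $P^{(b)}$ share the same second-step authorisation in $\calA'$. Either both are auth (mismatching $P^{(b)}$ in $\calA$) or neither is (mismatching $P^{(a)}$ in $\calA$), yielding $\ENT\not\leq\SUBSYS,\GRP$.
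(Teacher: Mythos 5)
Your proposal is correct and, for most sub-claims, mirrors the paper's own separating-execution arguments: the transitivity-of-$G$ witness for $\SUBSYS\not\leq\GRP$, the size-$N$ request for $\GRP\not\leq\SUBSYS^{<N}$, the intersection construction for $\GRP\leq\SUBSYS$, the ``revoke a joint operation after entanglement has been created'' case analysis for $\GRP\not\leq\ENT$ (either flip $\Me'$, blocked by $p_{\textup{e}}$ absent a measurement, or prune $\Mq'$, detected by a follow-up probe), and the $D$-sensitivity argument for Part 3 all coincide with the proofs in \Cref{sub:proof_of_flex_compare} and the main text up to cosmetic choices of probe programs. The one place you genuinely diverge is $\SUBSYS\not\leq\ENT$: you observe that authorisation of multi-register subsystems in $\ENT$ is downward-closed --- a permitted triple forces every sub-pair to be permitted, since $p_{\textup{q}}$ only conjoins per-register conditions on $\Mq$ and $\Me$ --- whereas a $\SUBSYS$ system can permit $\braces*{X_1,X_2,X_3}$ while denying all pairs. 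This static separation is simpler than the paper's argument, which instead issues a dynamic $\mathit{forbid}$ after entangling and runs a two-case analysis on how $\calA'$ could implement the revocation. Your version buys brevity; the paper's version additionally exercises the interaction between a modifiable $\Mq$ in $\SUBSYS$ and the post-update rule of $\ENT$, but either suffices for the non-inclusion. The only caveat, which applies equally to the paper's own proofs, is that your witness requests must involve genuinely multi-register operations (and a pinned-down right set $\Rtq'$) so that the compiled history in $\calA'$ cannot silently decompose them into smaller requests carrying fresh rights; where this arises you should invoke the same ``new right $r\notin\Rt$'' argument used in the proof of \Cref{thm:flex-hierarchy}.
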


\begin{proof} For illustration, here,  we only prove Item (3), 
%\Cref{thm:ent-not-leq-subsys-grp} in 
%\Cref{thm:flex-compare},
leaving the proofs of other items %\Cref{thm:subsys-not-leq-grp-ent,thm:grp-not-leq-ent-subsys} 
to \Cref{sub:proof_of_flex_compare}. 
%[Proof of \Cref{thm:flex-compare}~\Cref{thm:ent-not-leq-subsys-grp}]
	Let us only prove $\ENT^1\not\leq \SUBSYS$.
	Then, $\ENT\not\leq \GRP$ easily follows from $\GRP \leq \SUBSYS$ in 
	\Cref{thm:flex-compare}~\Cref{thm:grp-not-leq-ent-subsys}.
	%Like the proof of \Cref{thm:subsys-not-leq-grp-ent},
	The proof idea is essentially using the difference between control of quantum operations and control of entanglement.
	In particular, $\ENT$ uses attribute $D$ to record promises of disentanglement,
	which implies that a system in $\ENT$ 
	can make authorisation decision based on more information about existing entanglements.
	In contrast, during the execution, 
	a system in $\SUBSYS$ cannot (even approximately) distinguish whether entanglement has been established or not,
	of which its authorisation rule is independent.

	Let us consider a system $\calA=\parens*{\Sub,\Obj, \Rt,\Atr, \Rule}\in \ENT$,
	where $\Sub=\braces*{u,v}$, $\Objc=\braces*{\Me}$, $\Objq=\braces*{X_1,X_2}$,
	$\Rtc=\braces*{\texttt{read},\texttt{write}}$, and $\Rtq=\braces*{\texttt{CNOT},\texttt{measure}}$.
	Here, $\texttt{CNOT}$ means the ability to perform a $\mathit{CNOT}$ gate,
	and $\texttt{measure}$ means the ability to perform a computational basis measurement.
	$\Me\in \Objc$ implies that attribute $\Me$ can be dynamically modified by users.
	Attributes $\Mcc,\Mq,\Me,D\in \Atr$ in $\calA$ are initialised as follows. For $s\in \Sub, o\in \Objc$:
	\begin{equation*}
		\label{eq:ent-minus-sub-mc}
		\Mcc\bracks*{s,o}=
		\begin{cases}
			\braces*{\texttt{read},\texttt{write}}, & s=u,\\
			\emptyset,  & o.w.\\
		\end{cases}
	\end{equation*}
	For $s\in \Sub,o\subseteq \Objq$:
	$\Mq\bracks*{s,o}=\braces*{\texttt{CNOT,\texttt{measure}}}$,
	$\Me\bracks*{o}  = \mathit{true}$,
	and $D\bracks*{o}  = \mathit{true}$.

	Assume for contradiction that there exists another system
	$\calA'=\parens*{\Sub, \Obj', \Rt',\Atr', \Rule'}\in \SUBSYS$
	with $\Mcc',\Mq'\in \Atr'$ such that $\calA'\simeq \calA$.
	Note that we assume $\calA'$ has the same $\Sub$ as that of $\calA$ 
	because otherwise they will be obviously inequivalent.

	Consider an execution $\parens*{S,P}$ with
	$P_u\equiv \mathit{disent}\parens*{X_1}$
	and $P_v\equiv \bot$, where $\mathit{disent}\parens*{X_1}$ means
	to modify attributes such that quantum register $X_1$ is disentangled from others,
	and $\bot$ denotes termination without doing anything.
	By our construction of $\calA$,
	the history generated by $\parens*{S,P}$ in $\calA$ is 
	$\parens*{u,\Me\bracks*{X_1},\texttt{read}},\parens*{u,\Me\bracks*{X_1},\texttt{write}}$ and is authorised.
	Note that during the execution, the value of $\Me\bracks*{X_1}$ will be modified from $\mathit{true}$ to $\mathit{false}$,
	and the value of $D\bracks*{X_1}$ is always $\mathit{true}$.

	Consider another execution $\parens*{S,P'}$ with
        \begin{equation*}
            P_u'\equiv H\bracks*{X_1};\mathit{CNOT}\bracks*{X_1,X_2};\mathit{disent}\parens*{X_1}
        \end{equation*}
	and $P_v'\equiv \bot$.
	The history generated by $\parens*{S,P'}$ in $\calA$ is 
	\begin{equation*}
		\parens*{u,\braces*{X_1},\texttt{H}},\parens*{u,\braces*{X_1,X_2},\texttt{CNOT}}, \parens*{u,\Me\bracks*{X_1},\texttt{read}}, \parens*{u, \Me\bracks*{X_1}, \texttt{write}}.
	\end{equation*}
	This history is unauthorised because the post-update rule
	in \Cref{def:1-ent-control} modifies $D\bracks*{X_1}$ and $D\bracks*{X_2}$ to $\mathit{false}$ after the second request,
	when the quantum state of $X_1,X_2$ becomes $\frac{1}{\sqrt{2}}\parens*{\ket{0}_{X_1}\ket{0}_{X_2}+\ket{1}_{X_1}\ket{1}_{X_2}}$,
	which is entangled.
	Then, the last request modifying $\Me\bracks*{X_1}$ will be denied by the authorisation rule.

	On the other hand, suppose that the histories generated by $\parens*{S,P}$ and $\parens*{S,P'}$ in $\calA'$
	are $\alpha$ and $\alpha'$, respectively.
	Since we assume $\calA\simeq \calA'$, by \Cref{def:equivalent-sys},
	$\alpha$ is authorised and $\alpha'$ is unauthorised.
	Observe that $\alpha$ is a suffix of $\alpha'$:
	we have $\alpha'=\beta,\alpha$ for some sequence $\beta$ of requests generated from executing $H\bracks*{X_1};\mathit{CNOT}\bracks*{X_1,X_2}$ in $P_u'$.
	This is because the authorisation rule of $\calA'$ (see \Cref{def:subsys-control}) is based on attributes $\Mcc',\Mq'$, 
	which are unchanged by $\beta$.
	Further, this implies $\alpha'$ should be authorised,
	because the prefix $\beta$ does not change $\Mcc',\Mq'$ and will not affect whether the suffix $\alpha$ is authorised.
	Hence, we obtain a contradiction and the conclusion follows.

	%As $u$ only executes $\mathit{disent}\parens*{X_1}$,
	%Suppose that $\alpha$ does not contain redundant requests 
	%(e.g., $\alpha\parens*{0}=\parens*{u,X_1,\texttt{H}}$ and $\alpha\parens*{1}=\parens*{u,X_1,\texttt{H}}$,
	%with \texttt{Had} being the ability to perform a Hadamard gate).
	%Then, for any $t\in \N$, we must have $\alpha\parens*{t}=\parens*{u,o,r}$
	%for some classical object $o\in \Objc$.
	%This is because if $\alpha$ can non-trivially modifies (e.g., $o=X_1$ and $r=\texttt{measure}$)
	%the state of quantum objects, then the history will be inconsistent with the semantics of the program $P$.

	%However, the history generated by $\parens*{S,P'}$ in $\calA'$ 
	%should be $\parens*{u,\braces*{X_1,X_2}, \texttt{CNOT}},\alpha$, 
	%the concatenation of $\parens*{u,\braces*{X_1,X_2}, \texttt{CNOT}}$ and the history generated by $\parens*{S,P}$ in $\calA'$.
	%Due to our previous observation, this history should be authorised.
	%Thus, $\calA\not\simeq \calA'$ and the conclusion follows.
\end{proof}

\section{Related Works}
\label{sec:related_works}

\paragraph{Quantum Access Control}

The work~\cite{YFY13} first studied access control in quantum computing from the perspective of information-flow security. Their observation that rights should be specified for quantum subsystems motivated our \Cref{def:core-model} and the $k=\abs*{\Objq}$ case of \Cref{def:subsys-control}, as mentioned in \Cref{sec:protection_access_control_in_quantum_computing}. However, they did not provide any explicit scenario of access control showing \textit{entanglement can leak secret beyond direct communication}.
In contrast, our \Cref{sec:scenario_threat_from_quantum_entanglement} presents the \textit{first} explicit scenario of how a classically secure access control system becomes insecure when adapted to the quantum setting, with a rigorous proof.
This identification of threat from entanglement enables us to design effective quantum access control models and analyse them in \Cref{sec:protection_access_control_in_quantum_computing}.
Other related work~\cite{GI19} has studied entanglement accessibility in the context of the quantum internet, a different topic from the access control in computer security we address here.

\paragraph{Quantum Operating Systems}

Operating systems are a major area where access control mechanisms have been extensively studied and implemented.
In quantum computing, there have been already numerous efforts devoted to tackle specific issues relevant to operating systems. 
These include task decomposition (due to the scarcity of qubits in existing quantum hardware, and typically via quantum circuit cutting or knitting, e.g., \cite{BSS16,PHOW20,MF21,TTS+21,EMG+22,PS24,LGB+24}), job scheduling (e.g., \cite{RSMC21,LD24,OC24,LSL+25}), multiprogramming (e.g., \cite{DTNQ19,LD21,OSV22,RGH+21,NT23,KSZ+23,LD24}), memory management (e.g., \cite{HPJ+15,MSR+19,DPW20,LWS+23}), and concurrency (e.g., \cite{JL04,GN05,FDY11,YZLF22,FLY22,HSHT21,WZLSXD22,TKM12,AGM17,AGM17,ZY24c}).
Meanwhile, some other works have considered more holistic approaches to designing quantum operating systems~\cite{CWB17,HLAC20,KWH+21,GRTB24,DIV+25}.
It can be expected that quantum access control (considered in this paper) will become more indispensable to the security of quantum and classical-quantum hybrid computer systems 
when various quantum operating systems are deployed in the future.

\paragraph{Security and Bell-Type Inequalities}

The violation of Bell-type inequalities (including the Mermin inequality~\cite{Mermin90} used in this paper),
which essentially reflects the exotic nature of quantum mechanics, 
has been applied in a number of security protocols that utilize quantum properties.
For example, the celebrated E91 protocol proposed in~\cite{Ekert91} modifies the Bell test 
to detect eavesdropping and securely generate private keys for cryptography.
This technique was later greatly extended into a line of works on device-independent quantum cryptography~\cite{MY98,BHK05,PAB+09,RUV13,VV14,MS16b,ARV19,DFR20,DF19}. 
Similar ideas have also been employed in randomness expansion~\cite{Colbeck09,PAM+10,VV12,CY14,MS16b,MS17}
and randomness amplification~\cite{CR12,GMD+13,CSW15,KA20}.
Most of the above works focus on quantum cryptography and 
leverage the quantum entanglement as an advantage for enhancing security.
In contrast, this paper considers the access control security of quantum computer systems,  identifies entanglement as a source of security threats, and proposes new access control models to protect against such threats.

\section{Conclusion}
\label{sec:discussion}

We reveal that the access control security can be threatened if existing computer systems integrate with quantum computing. This is demonstrated by presenting the first explicit scenario of a security breach when a classically secure access control system is straightforwardly adapted to the quantum setting. The threat essentially comes from the phenomenon of quantum entanglement. To address such threat, we propose several new models of quantum access control, including subsystem control, group control and entanglement control. Their security, flexibility and efficiency are rigorously analysed. While all the proposed models are secure against threats from entanglement, their flexibility and efficiency vary. In practice, specific requirements for the latter two factors determine which model is the most suitable for practical uses, and one can also consider a hybrid of these models.

The research reported in this paper is merely the first step toward access control of quantum computers. In the following, we list several topics for future research. 
Firstly, to prevent from security breach from quantum entanglement,
an immediate next step is to integrate new quantum access control mechanisms into the design of future classical-quantum hybrid systems (including quantum-centric supercomputing systems~\cite{Gambetta22a,Gambetta22b,AAB24,MCG24,PC24}).
This involves further refining the quantum access control models proposed in this paper to accommodate the actual requirements of the specific computer system to be protected.
Secondly, as mentioned in \Cref{sub:access_control,sub:subsystem_control},
for simplicity we have not considered how attributes in our proposed models are stored.
Like in the classical case~\cite{SS94}, 
it is worth investigating how to store the attributes using more efficient data structures,
whose design might also leverage the unique properties of quantum systems.
Thirdly, as mentioned in \Cref{sub:control_of_entanglement}, 
in the model $\ENT$ (see \Cref{def:1-ent-control,def:2-ent-control}) for entanglement control,
we focus on a single approach to recording approximated knowledge about existing entanglements.
This approximation is coarse-grain: only complete measurements are regarded as promise of disentanglement,
while any other quantum operations involving multiple quantum registers are assumed to create potential entanglement.
An interesting question is if there are other approaches that offers finer approximations and greater flexibility
(perhaps at the cost of reduced efficiency).

%Brute-force measurement.
%Discuss quantum monogamy: transform proof of separability to proof of entanglement.

%% The next two lines define the bibliography style to be used, and
%% the bibliography file.
\newpage
\bibliographystyle{ACM-Reference-Format}
\bibliography{main}

\newpage

%%
%% If your work has an appendix, this is the place to put it.
\appendix

\section{Proof Details}
\label{sec:proof_details}

\subsection{Proof of \Cref{thm:classical-security}}
\label{sub:proof_of_classical-security}

In this appendix, we present the full proof of \Cref{thm:classical-security}.
The proof uses notations and tools in probabilistic graphical models,
of which background is provided in \Cref{sec:background_on_probabilistic_graphical_models}.

\begin{proof}[Proof of \Cref{thm:classical-security}]
	Let us fix any scheduler $S$ of the system.
	Since we allow the program $P$ to be probabilistic (e.g., $P_v$ in \Cref{fig:userv} is already probabilistic),
	the value of any register can be regarded as a random variable.
	%where the randomness comes from all subjects using random numbers.
	For example, $A(t)$, the value of the register $A$ at time $t$,
	is a random variable.
	Similarly, $\alpha(t)$, the request at time $t$ in the history,
	can also be seen as a random variable.

	Let us first identify several time points $t_1,t_{v,j},t_v,t_2$ with respect to program $P_v$ in \Cref{fig:userv}, by supposing:
	\begin{itemize}
		\item 
			At $t=t_1-1$: $v$ issues the write request in Line $1$
		\item
			At $t=t_{v,j}-1$: $v$ issues the write request for $j$ in Line $3$.
		\item
			At $t=t_v-1$: $v$ issues the write request in Line $5$.
		\item
			At $t=t_2-1$: $v$ issues the write request in Line $6$.
	\end{itemize}

	For convenience, let us denote $D_j:=C_j,L\bracks*{w_j}$.
	Note that from $M_0,M_1,M_2$, we have
	\begin{equation*}
		\Obs\parens*{w_j,t}=
		\begin{cases}
			D_j\parens*{t}, & t\leq t_2,\\
			B\parens*{t},D_j(t), & t\geq t_2+1.
		\end{cases}
	\end{equation*}
	where we denote a set by an ordered list and will use this convention throughout the proof.

	We can restrict $t_u=t_1$ and $t_w\geq t_2+1$ in \Cref{thm:classical-security}.
	This is because if $t_u\neq t_1$ and $A\parens*{t_u}\neq A\parens*{t_1}$, or if $t_w\leq t_2$,
	then there is no information flow from $A\parens*{t_u}$ to $\Obs\parens*{w_j,t_w}$ and thus $A\parens*{t_u}\CI\Obs\parens*{w_j,t_w}$.
	%In this case, the secret information of $u$ never leaks to any $w_j$.
	Moreover, since the access matrix $\Macc$ (taking values in $M_0,M_1,M_2$) is always symmetric for all $w_j$,
	we can only prove for the case $j=1$ without loss of generality.
	%without loss of generality, our goal can be reduced to proving
	Now proving \Cref{thm:classical-security} reduces to proving
	\begin{equation}
		I\parens*{A\parens*{t_1}; B\parens*{t_w},D_1\parens*{t_w}} \leq 2^{-\parens*{n-7}/2}
		\label{eq:classical-goal}
	\end{equation}
	for any $t_w\geq t_2+1$.
	
	We identify and define all random variables of concern in our proof as follows.
	\begin{itemize}
		\item 
			$A(t_1)$ stores the secret information written by $u$ into $A$.
		\item
			Let $X_j:=C_j^1(t_{v,j})$,
			where $C_j^1$ denotes the first bit of $C_j$.
			Let $\Lambda_j:=C_j^{\overline{1}}(t_{v,j}),L\bracks*{w_j}\parens*{t_{v,j}}$,
			where $C_j^{\overline{1}}$ denotes the remaining bits (except for the first bit) of $C_j$.
		\item
			$B(t_{v})$ stores the information written by $v$ into $B$, which also encodes the secret information of $u$. 
		\item
			For each $j\in [n]$, let 
			\begin{equation*}
				Y_j:= \abs*{\braces*{t\in [t_v+1,t_2-1]:\alpha(t)=(w_j, \texttt{flip},B)}}\bmod 2
			\end{equation*}
			denote the parity of the number of \texttt{flip} exercised by $w_j$ on $B$ for $t\in [t_v+1,t_2-1]$.
		\item
			$B(t_2)$ is obtained from $B\parens*{t_v}$ after each $w_j$ exercises a number of \texttt{flip}.
		\item
			$B(t_w)$ and $D_j(t_w)$ contain all information accessible to $w_j$ at time $t_w\geq t_2+1$.
	\end{itemize}

	For convenience, we also define the following notations:
	\begin{itemize}
		\item 
			Let $X:=X_1, \ldots, X_n$. %The same convention applies to $Y$ and $\Lambda$.
		\item
			Let $X_{\overline{j}}:=X_1,\ldots, X_{j-1},X_{j+1},\ldots,X_n$ for $j\in [n]$.
		\item
			Let $X'=X_{\overline{1}}$.
	\end{itemize}
	The above notations apply when $X$ is replaced by $Y$ or $\Lambda$.

	By the program $P_v$ of $v$ described in~\Cref{fig:userv},
	the change of access matrix $\Macc(t)$ in~\Cref{fig:acc-matrix-1,fig:acc-matrix-2,fig:acc-matrix-3},
	and the temporal ordering of requests,
	the relations between concerned random variables can be summarised in the probabilistic graphical model in~\Cref{fig:graphical}.

	\begin{figure*}
		\centering
		\includegraphics[width=0.6\textwidth]{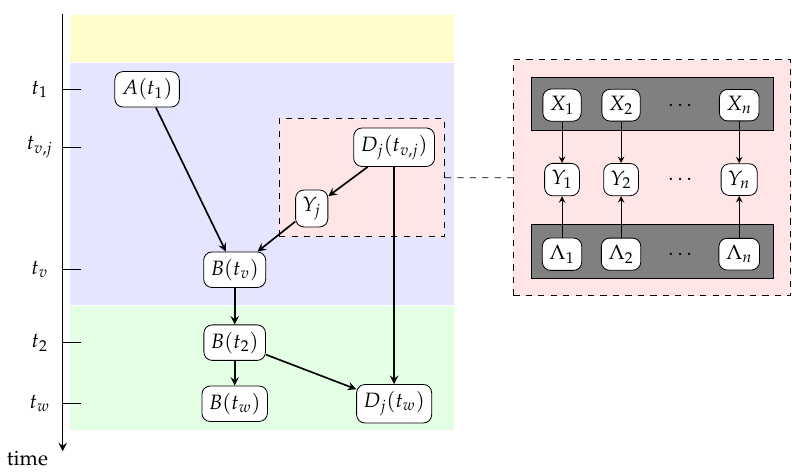}
		\caption{
			Probabilistic graphical model of concerned random variables in the system described in~\Cref{sub:problem_setting}.
                As usual, a directed edge represents a causal relation, and a bidirected edge represents a mutual dependence.
			The LHS depicts the relations between $A\parens*{t_1},D_j\parens*{t_{v,j}},Y_j,B\parens*{t_v},B\parens*{t_2},B\parens*{t_w},D_j\parens*{t_w}$.
			The RHS depicts the relations between $X,Y,\Lambda$, where nodes in each gray area (e.g., $X_1,\ldots,X_n$) are fully connected (by bidirected edges).
		}
		\label{fig:graphical}
	\end{figure*}

	Let us also fix some $a,b\in \braces*{0,1}$ and integer $d$.
	From~\Cref{fig:graphical}, we can decompose the joint probability distribution of these concerned random variables as
	\begin{align}
		\label{eq:joint-original}
        \begin{split}
		&\Pr[A(t_1)=a,B(t_v)=b_1, B(t_2)=b,X=x,\\
        &\qquad\qquad\qquad\qquad\qquad \qquad \Lambda=\lambda,Y=y, D_1(t_w)=d]
        \end{split}\\
		\begin{split}
			=&\Pr\bracks*{A(t_1)=a}\Pr\bracks*{B(t_v)=b_1\midv A(t_1)=a,X=x}\\
            &\quad \Pr\bracks*{X=x,Y=y, \Lambda=\lambda}\Pr\bracks*{B(t_2)=b\midv B(t_v)=b_1,Y=y}\\
			&\qquad\Pr\bracks*{D_1(t_w)=d\midv B(t_2)=b,X=x,\Lambda=\lambda}.
		\end{split}
		\label{eq:joint-pr}
	\end{align}
	Additionally, from~\Cref{fig:graphical}, the following conditional independence relations hold:
	\begin{itemize}
		\item 
			$D_1(t_w)\CI X_{\overline{1}},\Lambda_{\overline{1}}\mid B(t_2),X_1,\Lambda_1$; 
		\item
			$X\CI \Lambda$; and
		\item
			For any $j\in [n]$,
			\begin{equation}
				Y_j\CI X_{\overline{j}},Y_{\overline{j}},\Lambda_{\overline{j}} \mid X_j,\Lambda_j\text{\ \ and\ \ }
				Y_{\overline{j}} \CI X_j,Y_j,\Lambda_j \mid X_{\overline{j}}, \Lambda_{\overline{j}}.
				\label{eq:CI-XY}
			\end{equation}
			%where $X_{\overline{j}}=X_1,\ldots,X_{j-1},X_{j+1},\ldots, X_n$ and $Y_{\overline{j}}, \Lambda_{\overline{j}}$ are similarly defined.
	\end{itemize}
	By the fixed program $P_v$ of user $v$ in \Cref{fig:userv}, we further have:
	\begin{itemize}
		\item 
			$\Pr\bracks*{X=x}=\frac{1}{2^{n-1}}$ for $x\in \braces*{0,1}^n$ with $\abs*{x}\bmod 2 = 0$.
		\item
			$B(t_v)=A(t_1)\oplus \parens*{\frac{\abs*{X}}{2}\bmod 2}$.
			As a result,
            \begin{equation*}
                \Pr\bracks*{B(t_v)=b_1\midv A(t_1)=a,X=x}\neq 0
            \end{equation*}
		  iff $\parens*{-1}^{b_1}=\parens*{-1}^{a+\abs*{x}}$.
		\item 
			$B(t_v)=B(t_2)\oplus \bigoplus_j Y_j$.
			As a result,
            \begin{equation*}
                \Pr\bracks*{B(t_2)=b \midv B(t_v)=b_1,Y=y}\neq 0
            \end{equation*}
			iff $\parens*{-1}^{\abs*{y}}=\parens*{-1}^{b+b_1}$.

			%$\Pr\bracks*{B(t_2)=b\midv \parens*{-1}^{B(t_2)}=\parens*{-1}^{a+\sum_j x_j/2 +y_j},Y=y}$
			%is non-zero only when $\parens*{-1}^{\sum_j x_j/2+y_j}=\parens*{-1}^{a+b}$.
	\end{itemize}

	Combining the above observations,
	\Cref{eq:joint-pr} can be simplified as
	\begin{equation}
		\label{eq:joint-simp}
		\begin{split}
			&\Pr\bracks*{A(t_1)=a}\Id\bracks*{\parens*{-1}^{b_1}=\parens*{-1}^{a+\abs*{x}}} \Pr\bracks*{X=x,Y=y, \Lambda=\lambda}\\
			&\quad\Id\bracks*{\parens*{-1}^{\abs*{y}}=\parens*{-1}^{b+b_1}}\Pr\bracks*{D_1(t_w)=d\midv B(t_2)=b,X_1=x_1,\Lambda_1=\lambda_1}.
		\end{split}
	\end{equation}

	To prove our goal in \Cref{eq:classical-goal}, let us start with calculating the quantity
	\begin{align}
		&\Pr\bracks*{A(t_1)=a, B(t_w)=b,D_1(t_w)=d}\\
        \begin{split}
            =&\sum_{x,y,\lambda,b_1}\Pr\big[A(t_1)=a,B(t_v)=b_1,B(t_w)=b,X=x,\\
            &\qquad\qquad\qquad\qquad\qquad\qquad\Lambda=\lambda, Y=y,D_1(t_w)=d\big]
        \end{split}\\
		\begin{split}
			=&\Pr\bracks*{A(t_1)=a}\sum_{x,y,\lambda}\Id\bracks*{\parens*{-1}^{\abs*{x}/2+\abs*{y}+a+b}=1}\Pr\bracks*{X=x,Y=y, \Lambda=\lambda}\\
			 &\qquad\qquad\qquad\Pr\bracks*{D_1(t_w)=d\midv B(t_2)=b,X_1=x_1,\Lambda_1=\lambda_1},
		\end{split}
		\label{eq:ABC-sum}
	\end{align}
	where in the last equality we replace the joint probability distribution \Cref{eq:joint-original} by \Cref{eq:joint-simp}.
	%Let $X':=X_2,\ldots ,X_n$,
	%$Y':=Y_2,\ldots ,Y_n$
	%and $\Lambda':=\Lambda_2\ldots \Lambda_n$.
	%Moreover, as
	%\begin{align*}
	%	&\Pr\bracks*{X=x}\Pr\bracks*{\Lambda=\lambda}\Pr\bracks*{Y'=y'\midv X'=x',\Lambda'=\lambda'}\Pr\bracks*{Y_1=y_1\midv X_1=x_1,\Lambda_1=\lambda_1}\\
	%	\begin{split}
	%		=&\Pr\bracks*{X'=x'\midv X_1=x_1}\Pr\bracks*{\Lambda'=\lambda'\midv \Lambda_1=\lambda_1}\Pr\bracks*{Y'=y'\midv X'=x',\Lambda'=\lambda'}\\
	%		 &\qquad\qquad \Pr\bracks*{X_1=x_1}\Pr\bracks*{\Lambda_1=\lambda_1}\Pr\bracks*{Y_1=y_1\midv X_1=x_1,\Lambda_1=\lambda_1}
	%	\end{split}
	%	\\
	%	=&\Pr\bracks*{X'=x',Y'=y',\Lambda'=\lambda'\midv X_1=x_1,\Lambda_1=\lambda_1}\Pr\bracks*{X_1=x_1,Y_1=y_1,\Lambda_1=\lambda_1}
	%\end{align*}

	Using the conditions in \Cref{eq:CI-XY} gives the term
	\begin{align*}
		&\Pr\bracks*{X=x,Y=y,\Lambda=\lambda}\\
        \begin{split}
            =&\Pr\bracks*{X'=x',Y'=y',\Lambda'=\lambda'\midv X_1=x_1,Y_1=y_1,\Lambda_1=\lambda_1}\\
		&\qquad\qquad \qquad\qquad\qquad\qquad\Pr\bracks*{X_1=x_1,Y_1=y_1,\Lambda_1=\lambda_1}
        \end{split}
		\\
        \begin{split}
            =&\Pr\bracks*{X'=x',Y'=y',\Lambda'=\lambda'\midv X_1=x_1,\Lambda_1=\lambda_1}\\
		&\qquad\qquad\qquad\qquad\qquad\qquad\Pr\bracks*{X_1=x_1,Y_1=y_1,\Lambda_1=\lambda_1}.
        \end{split}
	\end{align*}
	Consequently, \Cref{eq:ABC-sum} can be rewritten as 
	\begin{align}
		\begin{split}
		&\Pr\bracks*{A(t_1)=a}\sum_{x,y,\lambda}\Id\bracks*{\parens*{-1}^{\abs*{x}/2+\abs*{y}+a+b}=1}\\
        &\qquad\Pr\bracks*{X_1=x_1,Y_1=y_1,\Lambda_1=\lambda_1}\\
        &\qquad\qquad\Pr\bracks*{X'=x',Y'=y',\lambda'=\lambda'\midv X_1=x_1,\Lambda_1=\lambda_1}\\
		&\qquad\qquad\qquad\Pr\bracks*{D_1(t_w)=d\midv B(t_2)=b, X_1=x_1,\Lambda_1=\lambda_1}.
		\end{split}
		%\begin{split}
		%	=&\Pr\bracks*{A(t_1)=a}\sum_{x_1,y_1,\lambda_1}\sum_{x',y',\lambda'}\Id\bracks*{\parens*{-1}^{\abs*{x}/2+\abs*{y}+a+b}=1}\Pr\bracks*{X'=x',Y'=y',\Lambda'=\lambda'\midv X_1=x_1,\Lambda_1=\lambda_1}\\
		%	 &\qquad\qquad\Pr\bracks*{X_1=x_1,Y_1=y_1,\Lambda_1=\lambda_1}\Pr\bracks*{C_1(t_w)=c\midv B(t_2)=b, X_1=x_1,\Lambda_1=\lambda_1}
		%\end{split}
		\label{eq:split-X1}
	\end{align}
	For convenience, let us define
	\begin{align}
        \begin{split}
            &f_a(x_1,y_1,\lambda_1):=
		\sum_{x',y',\lambda'}\Id\bracks*{\parens*{-1}^{\abs*{x}/2+\abs*{y} +a+b}=1}\\
        &\qquad\qquad\Pr\bracks*{X'=x',Y'=y',\Lambda'=\lambda'\midv X_1=x_1,\Lambda_1=\lambda_1},\label{eq:f-def}
        \end{split}
		\\
		&g:=
			\frac{4\Pr\bracks*{D_1(t_w)=d, B(t_2)=b}}
		{\sum_{x_1,\lambda_1}\Pr\bracks*{\Lambda_1=\lambda_1}\Pr\bracks*{D_1(t_w)=d\midv B(t_2)=b, X_1=x_1,\Lambda_1=\lambda_1}}.
		\label{eq:g-def}
	\end{align}
	Then, using the technical \Cref{lmm:f-bound,lmm:g-bound},
	we can rewrite \Cref{eq:split-X1} as
	\begin{align}
		\begin{split}
			&\Pr\bracks*{A(t_1)=a}\sum_{x_1,y_1,\lambda_1} f_a(x_1,y_1,\lambda_1)\Pr\bracks*{X_1=x_1,Y_1=y_1,\Lambda_1=\lambda_1}\\
			&\qquad\qquad\qquad\Pr\bracks*{D_1(t_w)=d\midv B(t_2)=b,X_1=x_1,\Lambda_1=\lambda_1}
		\end{split}
		\label{eq:ABC-rewrite}\\
		\begin{split}
			=& \Pr\bracks*{A\parens*{t_1}=a}\parens*{\frac{1}{2}+\delta}
			 \sum_{x_1,\lambda_1}\Pr\bracks*{X_1=x_1}\Pr\bracks*{\Lambda_1=\lambda_1}\\
			 &\qquad\qquad
			 \Pr\bracks*{D_1(t_w)=d\midv B(t_2)=b,X_1=x_1,\Lambda_1=\lambda_1}
		\end{split}
		\label{eq:replace-f}
		\\
		=& \Pr\bracks*{A\parens*{t_1}=a}\parens*{1+2\delta}\Pr\bracks*{D_1\parens*{t_w}=d,B\parens*{t_2}=b} g^{-1} \label{eq:plug-in-g}\\
		=& \parens*{1+2\delta}\parens*{1+\epsilon}^{-1}\Pr\bracks*{A\parens*{t_1}=a}\Pr\bracks*{D_1\parens*{t_w}=d,B\parens*{t_2}=b}
		\label{eq:replace-g}
	\end{align}
	for some $\abs*{\delta}\leq 2^{-(n-1)/2}$ and $\abs*{\epsilon}\leq 2^{-(n-3)/2}$.
	Here, \Cref{eq:replace-f} comes from \Cref{lmm:f-bound} and $X_1\CI \Lambda_1$;
	\Cref{eq:plug-in-g} comes from $\Pr\bracks*{X_1=x_1}=\frac{1}{2}$;
	and \Cref{eq:replace-g} comes from \Cref{lmm:g-bound}.

	All the above together yield $\Pr\bracks*{A(t_1)=a,B(t_w)=b,D_1(t_w)=d}=\Cref{eq:replace-g}$.
	Now we are ready to compute
	\begin{align*}
		&\frac{\Pr\bracks*{A(t_1)=a\midv B(t_w)=b,D_1(t_w)=d}}{\Pr\bracks*{A(t_1)=a}}\\
		=&\frac{\Pr\bracks*{A(t_1)=a,B(t_w)=b,D_1(t_w)=d}}{\Pr\bracks*{A(t_1)=a}\Pr\bracks*{B(t_w)=b,D_1(t_w)=d}}\\
		=&\parens*{1+2\delta}\parens*{1+\epsilon}^{-1},
	\end{align*}
	which can be upper bounded by $\frac{1+2^{-(n-3)/2}}{1-2^{-(n-3)/2}}\leq 1+2^{-(n-7)/2}$.
	Finally, using the inequality $\log\parens*{1+z}\leq z$ and the definition of mutual information leads to \Cref{eq:classical-goal}.
\end{proof}

In the following are two technical lemmas used in the proof of \Cref{thm:classical-security}.
Intuitively, \Cref{lmm:f-bound} says $f_a(x_1,y_1,\lambda_1)$ is close to $1/2$,
and \Cref{lmm:g-bound} says $g$ is close to $1$.

\begin{lemma}
	\label{lmm:f-bound}
	Let $f_a(x_1,y_1,\lambda_1)$ be defined as in \Cref{eq:f-def}.
	Then, for any $x_1,y_1\in \braces*{0,1}$, 
	\begin{equation}
		\abs*{f_a(x_1,y_1,\lambda_1)-\frac{1}{2}}\leq 2^{-(n-1)/2}.
		\label{eq:f_a-bound}
	\end{equation}
\end{lemma}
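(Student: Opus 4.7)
The plan is to expand the indicator in \Cref{eq:f-def} via the identity $\Id\bracks*{(-1)^{\abs*{x}/2+\abs*{y}+a+b}=1}=\frac{1+(-1)^{\abs*{x}/2+\abs*{y}+a+b}}{2}$, and then invoke the Mermin variant in \Cref{lmm:Mermin} on $n-1$ parties to control the remaining character sum.

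First, since the conditional probabilities $\Pr\bracks*{X'=x',Y'=y',\Lambda'=\lambda'\midv X_1=x_1,\Lambda_1=\lambda_1}$ sum to one, plugging the above expansion into \Cref{eq:f-def} yields
\begin{equation*}
f_a(x_1,y_1,\lambda_1)=\frac{1}{2}+\frac{1}{2}\E\bracks*{(-1)^{\abs*{X}/2+\abs*{Y}+a+b}\midv X_1=x_1,\Lambda_1=\lambda_1}.
\end{equation*}
On the support of the conditional law we have $\abs*{X}=x_1+\abs*{X'}$ is even (since unconditionally $X$ lives on even-weight strings), so $\abs*{X'}\equiv x_1\pmod 2$ and the exponent is a bona fide integer. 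Pulling out the sign $(-1)^{y_1+a+b}$ that depends only on the fixed parameters reduces \Cref{eq:f_a-bound} to showing that
\begin{equation*}
\abs*{\E\bracks*{(-1)^{(x_1+\abs*{X'})/2+\abs*{Y'}}\midv X_1=x_1,\Lambda_1=\lambda_1}}\leq 2^{-(n-1)/2+1}.
\end{equation*}

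Next I would verify that the conditional distribution of $(X',Y',\Lambda')$ given $X_1=x_1,\Lambda_1=\lambda_1$ satisfies the hypotheses of \Cref{lmm:Mermin} with $n$ replaced by $n-1$ and the parity parameter therein set equal to $x_1$. Three observations suffice: (i) since $X$ is unconditionally uniform on even-weight strings, $X'$ is uniform on $\braces*{x'\in\braces*{0,1}^{n-1}:\abs*{x'}\bmod 2=x_1}$ after conditioning; (ii) since $X\CI\Lambda$ unconditionally, $X'\CI\Lambda'$ still holds under the conditional law; (iii) the relation $Y_j\CI X_{\overline j},Y_{\overline j},\Lambda_{\overline j}\mid X_j,\Lambda_j$ from \Cref{eq:CI-XY} forces the conditional law of $Y'$ given $X',\Lambda'$ to factorise as $\prod_{j\neq 1}\Pr\bracks*{Y_j=y_j\midv X_j=x_j,\Lambda_j=\lambda_j}$, matching the product form required by \Cref{lmm:Mermin}.

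Applying \Cref{lmm:Mermin} then delivers the bound $2^{-(n-1)/2+1}$ on the character sum, and multiplying by $\frac{1}{2}$ yields \Cref{eq:f_a-bound}. The main bookkeeping obstacle is the parity case split on $x_1\in\braces*{0,1}$: for $x_1=1$ the exponent $(1+\abs*{X'})/2$ must be reconciled with the $b/2$ term in \Cref{eq:Mermin-ex} through the identity $(-1)^{(1+\abs*{X'})/2}=(-1)^{\abs*{X'}/2+1/2}$ on the odd-weight support, so that the parity parameter of \Cref{lmm:Mermin} is correctly identified as $x_1$ in both cases. Once this is handled uniformly, the resulting bound is independent of $y_1$ and $\lambda_1$, as claimed.
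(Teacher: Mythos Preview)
Your proposal is correct and follows essentially the same route as the paper: both rewrite $f_a$ as $\tfrac{1}{2}$ plus half a character sum, verify that the conditional law of $(X',Y',\Lambda')$ given $X_1=x_1,\Lambda_1=\lambda_1$ meets the hypotheses of \Cref{lmm:Mermin} with $n-1$ parties and parity parameter $x_1$, and then read off the bound. The only cosmetic difference is that the paper introduces auxiliary variables $X'',\Lambda''$ for the conditional law and converts from expectation to probability at the end, whereas you apply the indicator identity $\Id\bracks*{(-1)^z=1}=\tfrac{1+(-1)^z}{2}$ at the start; the substance is identical.
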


\begin{proof}
	Using \Cref{eq:CI-XY} and $X\CI \Lambda$, we have
	\begin{align*}
				&\Pr\bracks*{X'=x', Y'=y',\Lambda'=\lambda'\midv X_1=x_1,\Lambda_1=\lambda_1}\\
		=&\Pr\bracks*{Y'=y'\midv X'=x',\Lambda'=\lambda'}\\
        &\qquad\qquad\qquad\Pr\bracks*{X'=x'\midv X_1=x_1}\Pr\bracks*{\Lambda'=\lambda'\midv \Lambda_1=\lambda_1}.
	\end{align*}
	Let $X'',\Lambda''$ be random variables such that
        \begin{align*}
    	\Pr\bracks*{X''=x'}&=\Pr\bracks*{X'=x'\midv X_1=x_1}\\
    	\Pr\bracks*{\Lambda''=\lambda'}&=\Pr\bracks*{\Lambda'=\lambda'\midv \Lambda_1=\lambda_1}.
        \end{align*}
	It is easy to see that the probability distribution of $X''$ is uniform over the set 
	\begin{equation*}
		\braces*{x'\in \braces*{0,1}^{n-1}: \abs*{x'}\bmod 2 =x_1}.
	\end{equation*}
	In this case, we can rewrite \Cref{eq:f-def} as
	\begin{equation}
		\label{eq:f_a-rewrite}
		f_a(x_1,y_1,\lambda_1)=\Pr_{\Lambda''}\bracks*{\parens*{-1}^{\abs*{X''}/2+\abs*{Y'}+a+b+x_1/2+y_1}=1},
	\end{equation}
	where we use the subscript $\Lambda''$ to indicate this hidden random variable.
	%where $X'$ is uniformly chosen from $\overline{\calX}$.

	Let us write $X''=X_2''\ldots X_n''$
	and the same convention applies to $\Lambda''$.
	Similar to \Cref{eq:CI-XY},
	we have $Y_j\CI X_{\overline{j}}'',Y_{\overline{j}}'',\Lambda_{\overline{j}}''\mid X_j'',\Lambda_j''$,
	and consequently
	\begin{align*}
		\Pr\bracks*{Y'=y'\midv X''=x',\Lambda''=\lambda'}
				&=\prod_{j\geq 2}\Pr\bracks*{Y_j=y_j\mid X''=x',\Lambda''=\lambda'}\\
				&=\prod_{j\geq 2}\Pr\bracks*{Y_j=y_j\midv X_j''=x_j,\Lambda_j''=\lambda_j}.
	\end{align*}

	Hence, the conditions in \Cref{lmm:Mermin} are satisfies.
	By Mermin inequality in \Cref{lmm:Mermin}, we have 
	\begin{equation*}
		\abs*{\E\bracks*{\parens*{-1}^{\abs*{X''}/2+\abs*{Y'}+x_1/2}}}\leq 2^{-(n-1)/2+1}.
	\end{equation*}
	Note that 
	\begin{align*}
		&\Pr\bracks*{\parens*{-1}^{\abs*{X''}/2+\abs*{Y'}+x_1/2}=1}-\Pr\bracks*{\parens*{-1}^{\abs*{X''}/2+\abs*{Y'}+x_1/2}=-1}\\
        =&\E\bracks*{\parens*{-1}^{\abs*{X''}/2+\abs*{Y'}+x_1/2}}
        \end{align*}
        and
        \begin{equation*}
        \Pr\bracks*{\parens*{-1}^{\abs*{X''}/2+\abs*{Y'}+x_1/2}=1}+\Pr\bracks*{\parens*{-1}^{\abs*{X''}/2+\abs*{Y'}+x_1/2}=-1}
        =1.
        \end{equation*}
	Therefore, we can derive for any $a\in \braces*{0,1}$:
	\begin{equation}
		\abs*{\Pr\bracks*{\parens*{-1}^{\abs*{X''}/2+\abs*{Y'}+x_1/2+a+b+y_1}=1}-\frac{1}{2}}\leq 2^{-(n-1)/2},
		\label{eq:Mermin-pr}
	\end{equation}
	and \Cref{eq:f_a-bound} immediately follows from \Cref{eq:f_a-rewrite}.
	%we have $\abs*{f_a(x_1,y_1)-\frac{1}{2}}\leq 2^{-(n-1)/2}$.
\end{proof}

\begin{lemma}
	\label{lmm:g-bound}
	Let $g$ be defined as in \Cref{eq:g-def}.
	Then, we have
	\begin{equation}
		\abs*{g-1}\leq 2^{-(n-3)/2}.
		\label{eq:random-B}
	\end{equation}
\end{lemma}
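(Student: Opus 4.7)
The plan is to view $g$ as a weighted average of the quantities $2\Pr\bracks*{B\parens*{t_2}=b\midv X_1=x_1,\Lambda_1=\lambda_1}$ and then show each of these is within $2^{-(n-3)/2}$ of $1$, uniformly in $\parens*{x_1,\lambda_1}$, by a second application of Mermin's inequality. First I would marginalise the numerator $\Pr\bracks*{D_1\parens*{t_w}=d,B\parens*{t_2}=b}$ over $\parens*{X_1,\Lambda_1}$. Combining the chain rule with $X_1\CI\Lambda_1$ and $\Pr\bracks*{X_1=x_1}=1/2$ turns this numerator into $\tfrac{1}{2}\sum_{x_1,\lambda_1}\Pr\bracks*{\Lambda_1=\lambda_1}\Pr\bracks*{B\parens*{t_2}=b\midv X_1=x_1,\Lambda_1=\lambda_1}\Pr\bracks*{D_1\parens*{t_w}=d\midv B\parens*{t_2}=b,X_1=x_1,\Lambda_1=\lambda_1}$, so substituting into \Cref{eq:g-def} exhibits $g$ as a convex combination of the target quantities, with non-negative weights summing to $1$. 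Hence it suffices to show $\abs*{2\Pr\bracks*{B\parens*{t_2}=b\midv X_1=x_1,\Lambda_1=\lambda_1}-1}\leq 2^{-(n-3)/2}$ uniformly in $x_1,\lambda_1$.

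Next, from $P_v$ I would use the identity $B\parens*{t_2}=A\parens*{t_1}\oplus Z$, where $Z:=\parens*{\abs*{X}/2+\abs*{Y}}\bmod 2$. Since under $M_0$ no $w_j$ can read $A$ and $v$ samples $X$ independently of $A$, the variable $A\parens*{t_1}$ is independent of $\parens*{X,Y,\Lambda}$, and so $\Pr\bracks*{B\parens*{t_2}=b\midv X_1,\Lambda_1}=\sum_a\Pr\bracks*{A\parens*{t_1}=a}\Pr\bracks*{Z=a\oplus b\midv X_1,\Lambda_1}$. The problem therefore reduces to proving $\abs*{\Pr\bracks*{Z=z\midv X_1=x_1,\Lambda_1=\lambda_1}-1/2}\leq 2^{-(n-1)/2}$ for every $z\in\braces*{0,1}$, equivalently $\abs*{\E\bracks*{\parens*{-1}^Z\midv X_1=x_1,\Lambda_1=\lambda_1}}\leq 2^{-(n-3)/2}$.

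For this last bound I would split $\abs*{Y}=Y_1+\abs*{Y'}$ and use the conditional independence $Y_1\CI\parens*{X',Y',\Lambda'}\mid X_1,\Lambda_1$ from \Cref{fig:graphical} to factorise
\begin{equation*}
\E\bracks*{\parens*{-1}^Z\midv X_1,\Lambda_1}=\E\bracks*{\parens*{-1}^{Y_1}\midv X_1,\Lambda_1}\cdot\E\bracks*{\parens*{-1}^{\abs*{X}/2+\abs*{Y'}}\midv X_1,\Lambda_1}.
\end{equation*}
The first factor has absolute value at most $1$. For the second, note that $\abs*{X}/2=\parens*{\abs*{X'}+x_1}/2$ is an integer because $\abs*{X}$ is even; conditioned on $X_1=x_1$, $X'$ is uniform on $\braces*{x':\abs*{x'}\bmod 2=x_1}$; moreover $X'\CI\Lambda'\mid X_1,\Lambda_1$ and $\Pr\bracks*{Y'\midv X',\Lambda'}$ takes the product-of-conditionals form, both of which follow by direct inspection of \Cref{fig:graphical}. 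Applying \Cref{lmm:Mermin} to the $n-1$ parties $\braces*{2,\ldots,n}$ with parameter $b=x_1$ then yields $\abs*{\E\bracks*{\parens*{-1}^{\parens*{\abs*{X'}+x_1}/2+\abs*{Y'}}\midv X_1,\Lambda_1}}\leq 2^{-(n-1)/2+1}=2^{-(n-3)/2}$, closing the argument.

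The main obstacle will be keeping the conditioning bookkeeping disciplined enough to verify that, after conditioning on $\parens*{X_1,\Lambda_1}$, the reduced tuple $\parens*{X',Y',\Lambda'}$ still satisfies every hypothesis of \Cref{lmm:Mermin}. In particular, one must cope with the parity of $\abs*{X'}$: when $x_1=1$, $\abs*{X'}/2$ itself is a half-integer and only the shifted combination $\parens*{\abs*{X'}+x_1}/2$ is integral, which is precisely why the $b/2$ shift in the generalised Mermin inequality (rather than only the $b=0$ special case) is essential here.
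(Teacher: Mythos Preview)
Your proposal is correct and follows essentially the same route as the paper: both arguments express $g$ as a weighted average of $2\Pr\bracks*{B(t_2)=b\mid X_1=x_1,\Lambda_1=\lambda_1}$ and then bound the latter pointwise via the $(n-1)$-party Mermin inequality. The only cosmetic difference is that the paper obtains the pointwise bound by invoking the already-established \Cref{lmm:f-bound} (summing over $a',y_1$), whereas you re-apply \Cref{lmm:Mermin} directly after factoring out $\E\bracks*{(-1)^{Y_1}\mid X_1,\Lambda_1}$; the underlying Mermin application and the resulting constants are identical.
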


\begin{proof}
	Note that 
	\begin{align*}
				&\Pr\bracks*{B(t_2)=b\midv X_1=x_1,\Lambda_1=\lambda_1}\\
		=&\Pr\bracks*{\parens*{-1}^{\abs*{X}/2+\abs*{Y}+A(t_1)+b}=1\midv X_1=x_1,\Lambda_1=\lambda_1}\\
        \begin{split}
            =&\sum_{a',y_1}\Pr\Big[\parens*{-1}^{\abs*{X'}/2+\abs*{Y'}+a'+b+x_1/2+y_1}=1\Big\vert\\ &\qquad\qquad A(t_1)=a', Y_1=y_1, X_1=x_1,\Lambda_1=\lambda_1\Big]\\
            &\qquad\qquad\qquad\Pr\bracks*{A(t_1)=a',Y_1=y_1\midv X_1=x_1,\Lambda_1=\lambda_1}
        \end{split}\\  
		=&\sum_{a',y_1} f_{a'}(x_1,y_1,\lambda_1) \Pr\bracks*{A(t_1)=a',Y_1=y_1\midv X_1=x_1,\Lambda_1=\lambda_1}.
	\end{align*}
	Thus, using \Cref{lmm:f-bound},
	we have 
        \begin{equation*}
            \abs*{\Pr\bracks*{B(t_2)=b\midv X_1=x_1,\Lambda_1=\lambda_1}-\frac{1}{2}}\leq 2^{-(n-1)/2}.
        \end{equation*}

	Next, by $X_1\CI \Lambda_1$, we can write
	\begin{align*}
		&\Pr\bracks*{D_1(t_w)=d, B(t_2)=b}\\
		%=&\sum_{x_1}\Pr\bracks*{C_1(t_w)=c,B(t_2)=b\midv X_1=x_1}\Pr\bracks*{X_1=x_1}\\
		=&\sum_{x_1,\lambda_1}\Pr\bracks*{D_1(t_w)=d\midv B(t_2)=b, X_1=x_1,\Lambda_1=\lambda_1}\\
		 &\qquad\Pr\bracks*{B(t_2)=b\midv X_1=x_1,\Lambda_1=\lambda_1}\Pr\bracks*{X_1=x_1}\Pr\bracks*{\Lambda_1=\lambda_1}.
	\end{align*}
	Combining the above with $\Pr\bracks*{X_1=x_1}=\frac{1}{2}$ and the definition of $g$ in \Cref{eq:g-def},
	our goal \Cref{eq:random-B} easily follows.
\end{proof}

\subsection{Proof of \Cref{lmm:Mermin}}
\label{sub:proof_of_Mermin}

In this subsection we provide a proof of the variant of Mermin inequality in \Cref{lmm:Mermin} for completeness.
The proof idea is almost the same as the one in~\cite{Mermin90}.

\begin{proof}[Proof of \Cref{lmm:Mermin}]
	First note that
	\begin{equation*}
		\E\bracks*{(-1)^{\abs*{X}/2 +\abs*{Y}+b/2}}=\sum_{\lambda} \Pr\bracks*{\Lambda=\lambda} \E\bracks*{\parens*{-1}^{\abs*{X}/2+\abs*{Y}+b/2}\midv \Lambda=\lambda}.
	\end{equation*}
	To prove the target inequality \Cref{eq:Mermin-ex}, it suffices to prove that
	\begin{equation}
		\abs*{\E\bracks*{\parens*{-1}^{\abs*{X}/2+\abs*{Y}+b/2}\midv \Lambda=\lambda}}\leq 2^{-n/2+1}.
		\label{eq:mermin-suffice}
	\end{equation}

	%Note that for any $k\in [n]$,
	%$\Ex\bracks*{(-1)^{Y_k}\midv X_k=x_k}=\Pr\bracks*{Y_k=0\midv X_k=x_k}-\Pr\bracks*{Y_k=1\midv X_k=x_k}$.
	%Let $\overline{x}=x_2\ldots x_k$.
	Let us consider the quantity
	\begin{equation}
		F_b:=
		\begin{cases}
            \begin{aligned}
                &\reop\Big(\prod_{j\in [n]} \big(\Ex\bracks*{(-1)^{Y_j}\midv X_j=0,\Lambda_j=\lambda_j}+\\
                &\qquad\qquad i\Ex\bracks*{(-1)^{Y_j}\midv X_j=1,\Lambda_j=\lambda_j}\big)\Big),
            \end{aligned}
			& b=0,\\
            \begin{aligned}
                &-\imop\Big(\prod_{j\in [n]} \big(\Ex\bracks*{(-1)^{Y_j}\midv X_j=0,\Lambda_j=\lambda_j}+\\
                &\qquad\qquad\Ex\bracks*{(-1)^{Y_j}\midv X_j=1,\Lambda_j=\lambda_j}\big)\Big),
            \end{aligned}
			& b=1.\\
		\end{cases}
		\label{eq:F-form-1}
	\end{equation}

	Since each term $\E\bracks*{\parens*{-1}^{Y_j}\midv X_j=a,\Lambda_j=\lambda_j}\in [-1,1]$,
	it is easy to see that
	\begin{equation}
		\abs*{F_b}\leq \parens*{\sqrt{2}}^n=2^{n/2}.
		\label{eq:F-bound}
	\end{equation}

	On the other hand, by calculation, we obtain
	\begin{equation}
		F_b=\sum_{x\in \calX_b}\parens*{-1}^{\abs*{x}/2+b/2}\prod_{j}\Ex\bracks*{\parens*{-1}^{Y_j}\midv X_j=x_j,\Lambda_j=\lambda_j}.
		\label{eq:F-b}
	\end{equation}
	Since $\E\bracks*{(-1)^{Y_j}\midv X_j=x_j,\Lambda_j=\lambda_j}=\Pr\bracks*{Y_j=0\midv X_j=x_j,\Lambda_j=\lambda_j}-\Pr\bracks*{Y_j=1\midv X_j=x_j,\Lambda_j=\lambda_j}$, we further have
	\begin{align}
		\Cref{eq:F-b} &=\sum_{x\in \calX_b}\parens*{-1}^{\abs*{x}/2+b/2}\sum_{y\in \calY} \parens*{-1}^{\abs*{y}}\Pr\bracks*{Y=y\midv X=x,\Lambda=\lambda} \\
					  &=\sum_{x\in \calX_b}\Ex\bracks*{\parens*{-1}^{\abs*{x}/2+ \abs*{Y}+b/2}\midv X=x,\Lambda=\lambda}.
					  \label{eq:F-form-2}
	\end{align}
	As $\Lambda$ is independent of $X$,
	$\Pr\bracks*{X=x\midv \Lambda=\lambda}=\Pr\bracks*{X=x}=\frac{1}{2^{n-1}}$ for any $x\in \calX$.
	Consequently, 
	\begin{align*}
		\Cref{eq:F-form-2}&=2^{n-1}\sum_{x\in \calX_b}\Ex\bracks*{\parens*{-1}^{\abs*{x}/2+\abs*{Y}+b/2}\midv X=x,\Lambda=\lambda}\Pr\bracks*{X=x\midv \Lambda=\lambda}\\ %\label{eq:F-form-2}
						  &=2^{n-1}\E\bracks*{\parens*{-1}^{\abs*{X}/2+\abs*{Y}+b/2}\midv \Lambda=\lambda}.
	\end{align*}
	Finally, combining the above with \Cref{eq:F-bound} yields \Cref{eq:mermin-suffice}.
\end{proof}

\subsection{Proof of Lemmas about Flexibility}
\label{sub:proof_of_lemmas_about_flexibility}

In this subsection, we present detailed proofs of several lemmas about flexibility in \Cref{sub:comparison_of_flexibility}.

\subsubsection{Proof of \Cref{thm:flex-hierarchy}}
\label{sub:proof_of_flex_hierarchy}

Let us first prove the remaining parts of \Cref{thm:flex-hierarchy},
which can be broken into the following two lemmas for $\mathsf{M}=\GRP$ and $\mathsf{M}=\ENT$, respectively.
%Let us prove them one by one.

%\begin{lemma}
%	\label{lmm:flex-subsystem}
%	For any $k\geq 2$, $\SUBSYS^{k-1}<\SUBSYS^{k}$.
%\end{lemma}

\begin{lemma}
	\label{lmm:flex-grp}
	For any $k\geq 2$, $\GRP^{k-1}<\GRP^k$.
\end{lemma}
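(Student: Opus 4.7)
The plan is to establish the two directions $\GRP^{k-1}\leq \GRP^k$ (easy) and $\GRP^k\not\leq \GRP^{k-1}$ (hard), mirroring the structure of the $\SUBSYS$ case of \Cref{thm:flex-hierarchy}. For the easy direction I will take any $\calA\in \GRP^{k-1}$ and reinterpret its group attribute $G:\Objq\to [k-1]$ as a map into $[k]$ via the inclusion $[k-1]\subset [k]$, leaving all other attributes and rules untouched. Since the authorization rule of \Cref{def:group-control} depends only on equality of group labels (together with $\Mcc$ and $\Mq$), this relabelling preserves every authorization decision and thus yields an equivalent system in $\GRP^k$.

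For $\GRP^k\not\leq \GRP^{k-1}$ I will exhibit an explicit witness system. Take $\Sub=\{u\}$, $\Objc=\emptyset$, $\Objq=\{X_1,\ldots,X_{2k}\}$, $\Rtc=\emptyset$, $\Rtq=\{\texttt{CNOT}\}$, $\Mcc=\emptyset$, $\Mq[u,X_i]=\{\texttt{CNOT}\}$ for every $i$, and pair-wise labels $G[X_{2j-1}]=G[X_{2j}]=j$ for $j\in[k]$. Since $\Objc=\emptyset$, all attributes are static. I will then consider two families of single-request programs: for each $j$, the program $P^{(j)}$ that applies $\mathit{CNOT}$ to $\{X_{2j-1},X_{2j}\}$, which is authorized in $\calA$ because both registers share label $j$; and for each $j\neq j'$, the program $Q^{(j,j')}$ that applies $\mathit{CNOT}$ to $\{X_{2j-1},X_{2j'-1}\}$, which is denied in $\calA$ because $j\neq j'$.

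Now assume for contradiction that some $\calA'\in \GRP^{k-1}$ satisfies $\calA'\simeq \calA$. First, I will argue as in the $\SUBSYS$ step of \Cref{thm:flex-hierarchy} that $\Objq'=\Objq$ and $\Rtq'=\Rtq$, for otherwise the above programs could not generate valid histories in both systems. Authorization of each $P^{(j)}$ in $\calA'$ forces $\texttt{CNOT}\in \Mq'[u,X_i]$ for every $i$, as well as $G'[X_{2j-1}]=G'[X_{2j}]$ for every $j\in[k]$. Denial of each $Q^{(j,j')}$ in $\calA'$ then forces $G'[X_{2j-1}]\neq G'[X_{2j'-1}]$ whenever $j\neq j'$, because the first conjunct of $p_{\textup{q}}$ in \Cref{def:group-control} is already satisfied. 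Hence $G'$ must assign $k$ pairwise distinct labels to the representatives $X_1,X_3,\ldots,X_{2k-1}$, contradicting $G':\Objq\to [k-1]$ by pigeonhole.

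The main obstacle is arranging the witness system so that, in the putative $\calA'$, the group-label check is the \emph{only} remaining discriminator available to reject the cross-pair requests; otherwise the pigeonhole step does not close. This motivates granting $\texttt{CNOT}$ uniformly in $\Mq$ on every quantum register of $\calA$, which, by equivalence, propagates to $\Mq'$ and isolates $G'$ as the unique obstruction in the $\GRP^{k-1}$ authorization rule. A minor subtlety is justifying $\Objq'=\Objq$ and $\Rtq'=\Rtq$, which is handled by the same inequivalence observation used in the corresponding $\SUBSYS$ argument.
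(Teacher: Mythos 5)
Your proposal is correct and follows essentially the same route as the paper's proof: the same witness system in $\GRP^k$ with $2k$ quantum registers partitioned into $k$ pairs, uniform $\Mq$ so that the group attribute is the only possible discriminator in the putative $\GRP^{k-1}$ system, and a pigeonhole contradiction on the number of available labels; the easy inclusion $\GRP^{k-1}\leq\GRP^k$ via relabelling is also identical. The only cosmetic difference is the direction of the pigeonhole step (the paper first finds three registers forced into one $G'$-group and then exhibits a single discriminating execution, whereas you use the denial of all cross-pair requests to force $k$ distinct labels), which changes nothing of substance.
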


\begin{proof}
	\qquad
	\begin{enumerate}
		\item 
			We first prove that $\GRP^k \not\leq \GRP^{k-1}$.
			The proof idea is by noticing that
			a system in $\GRP^k$ can assign all quantum registers into $k$ groups,
			while a system in $\GRP^{k-1}$ can only assign them into $k-1$ groups.
			Using the pigeonhole principle, 
			there will be two quantum registers that belong to different groups in the former system,
			and to the same group in the latter system.
			Then, intuitively, we can show that latter system authorise strictly more requests than the former.

			Let us consider a system $\calA=\parens*{\Sub,\Obj,\Rt, \Atr,\Rule}\in \GRP^k$,
			where $\Rtc=\emptyset$, $\Rtq=\braces*{\texttt{CNOT}}$,
			$\Sub=\braces*{u,v}$, $\Objc=\emptyset$, and $\Objq=\braces*{X_1,\ldots, X_{2k}}$.
			Here, $\texttt{CNOT}$ means the ability to perform a $\mathit{CNOT}$ gate.
			%Denote subsystem $q=\Objq$.
			Attributes $\Mcc,\Mq, G\in \Atr$ are initialised as follows.
			Since $\Objc=\emptyset$, we set $\Mcc=\emptyset$.
			For $s\in \Sub, o\in \Objq$:
			\begin{equation*}
				\Mq\bracks*{s,o}=
				\begin{cases}
					\braces*{\texttt{CNOT}}, & s=u,\\
					0, & o.w.
				\end{cases}
			\end{equation*}
			Let $G\bracks*{X_j}=\floor*{\parens*{j+1}/2}$ for $j\in [2k]$.

			Assume for contradiction that there exists another system $\calA'=\parens*{\Sub,\Obj',\Rt', \Atr',\Rule'}\in \GRP^{k-1}$ 
			with $\Mcc',\Mq', G'\in \Atr$ such that $\calA\simeq \calA'$. % Our goal is to prove $\calA\not\simeq \calA'$.
			We can further assume that $\Objc'=\emptyset$ and $\Rtc'=\emptyset$,
			because otherwise $\calA$ and $\calA'$ will be obviously inequivalent.
			Using similar reasoning to that in the proof of $\SUBSYS^{k}\not\leq\SUBSYS^{k-1}$ in \Cref{thm:flex-hierarchy},
			we can also restrict that $\Rtq'=\braces*{\texttt{CNOT}}$.

			Consider an execution $\parens*{S,P}$
			with 
            \begin{equation*}
                P_u\equiv \mathbf{for}\ l\in [k]\ \mathbf{do}\ \mathit{CNOT}\bracks*{X_{2l-1},X_{2l}}\ \mathbf{od}
            \end{equation*}
			and $P_v\equiv \bot$.
			By our construction of $\calA$,
			the history generated by $\parens*{S,P}$
			in $\calA$ is $\parens*{u,\braces*{X_{1},X_{2}}, \texttt{CNOT}},\ldots, \parens*{u,\parens*{X_{2k-1},X_{2k}},\texttt{CNOT}}$
			and authorised according to \Cref{def:group-control}.
			Since we assume $\calA\simeq\calA'$, the history generated by $\parens*{S,P}$ in $\calA'$
			is also authorised, which implies $\texttt{CNOT}\in \Mq'\bracks*{u,X_j}$ for $j\in [2k]$.

			Observe that by the pigeonhole principle,
			there must exist distinct $j_1,j_2,j_3\in [2k]$ such that 
			$G'\bracks*{X_{j_1}}=G'\bracks*{X_{j_2}}=G'\bracks*{X_{j_3}}$
			and $G\bracks*{X_{j_1}}\neq G\bracks*{X_{j_2}}$.

			Now consider another execution $\parens*{S,P'}$ with
			$P_u'\equiv \mathit{CNOT}\bracks*{X_{j_1},X_{j_2}}$ and $P_v'\equiv \bot$.
			The histories generated by $\parens*{S,P'}$ in $\calA$ and $\calA'$
			are the same $\parens*{u, \braces*{X_{j_1}, X_{j_2}},\texttt{CNOT}}$.
			By our construction of $\calA$, this history is unauthorised in $\calA$ as $G\bracks*{X_{j_1}}\neq G\bracks*{X_{j_2}}$ 
			(see the authorisation rule in \Cref{def:group-control}).
			However, it is authorised in $\calA'$ because $G'\bracks*{X_{j_1}}= G'\bracks*{X_{j_2}}$.
			Hence, we obtain a contradiction and the conclusion follows.
		\item
			Next we prove that $\GRP^{k-1}< \GRP^k$.

			Suppose that $\calA=\parens*{\Sub, \Obj,\Rt,\Atr, \Rule}\in \GRP^{k-1}$ with $\Mcc,\Mq,G\in \Atr$.
			Then, we can define another system $\calA'=\parens*{\Sub,\Obj,\Rt, \Atr',\Rule}\in \GRP^{k}$ with $\Mcc',\Mq',G'\in \Atr$,
			such that $\Mcc'=\Mcc$, $\Mq'=\Mq$, 
			and $G'\bracks*{o}=G\bracks*{o}$ for any $o\in \Objq$.
			It is easy to see that $\calA\simeq \calA'$ in this case.
	\end{enumerate}
\end{proof}

\begin{lemma}
	\label{lmm:flex-ent}
	$\ENT^1< \ENT^2$.
\end{lemma}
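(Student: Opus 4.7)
The plan is to prove the two inclusions $\ENT^1\leq \ENT^2$ and $\ENT^2\not\leq \ENT^1$ separately. For the first direction, I would show that any $\calA\in \ENT^1$ can be simulated by a system $\calA'\in \ENT^2$ with the same $\Sub$, $\Obj$, and $\Rt$. The construction uses the encoding $\Me'[\{X,Y\}]:=\Me[X]\wedge \Me[Y]$ and $D'[\{X,Y\}]:=D[X]\wedge D[Y]$, maintained as an invariant by translating each access that updates $\Me[X]$ or $D[X]$ in $\calA$ into coordinated updates of the $O(|\Objq|)$ entries $\Me'[\{X,Y\}]$, $D'[\{X,Y\}]$ in $\calA'$. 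The classical and quantum access matrices $\Mcc$ and $\Mq$ carry over unchanged. This encoding matches authorisation clause $p_{\textup{q}}$: the $\ENT^1$ requirement $\bigwedge_{X\in o}\Me[X]$ for $\abs*{o}>1$ coincides exactly with the $\ENT^2$ requirement $\bigwedge_{X\neq Y\in o}\Me'[\{X,Y\}]$; and the post-update rule $\PU$ on $D'$ mirrors that on $D$ under the same correspondence.

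For the direction $\ENT^2\not\leq \ENT^1$, I would exploit the finer pairwise granularity of $\ENT^2$. Consider $\calA\in \ENT^2$ with $\Sub=\braces*{u}$, $\Objc=\emptyset$, $\Objq=\braces*{X_1,X_2,X_3}$, $\Rtc=\emptyset$, and $\Rtq=\braces*{\texttt{CNOT}}$, where $\texttt{CNOT}$ denotes the right to apply a $\mathit{CNOT}$ gate. Initialise $\Mq[u,X_j]=\braces*{\texttt{CNOT}}$ for $j\in[3]$, $\Me[\braces*{X_1,X_2}]=\Me[\braces*{X_1,X_3}]=\mathit{true}$, $\Me[\braces*{X_2,X_3}]=\mathit{false}$, and every $D$ entry equal to $\mathit{true}$. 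Assume for contradiction that some $\calA'\in \ENT^1$ with $\Mcc',\Mq',\Me',D'\in \Atr'$ satisfies $\calA'\simeq \calA$; we may restrict $\calA'$ to have the same $\Sub$, $\Objq$, and $\Rtq$ (otherwise inequivalence is immediate). For each pair $(i,j)\in \braces*{(1,2),(1,3),(2,3)}$, let $P^{(ij)}$ be the program in which $u$ issues the single request $\parens*{u,\braces*{X_i,X_j},\texttt{CNOT}}$ and $v$ terminates. In $\calA$, the histories of $P^{(12)}$ and $P^{(13)}$ are authorised while that of $P^{(23)}$ is not. By equivalence, the same holds in $\calA'$. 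The authorisations of $P^{(12)}$ and $P^{(13)}$ in $\calA'$, combined with clause $p_{\textup{q}}$ of \Cref{def:1-ent-control}, force $\Me'[X_j]=\mathit{true}$ and $\texttt{CNOT}\in \Mq'[u,X_j]$ for every $j\in[3]$. But these conditions also authorise the request $\parens*{u,\braces*{X_2,X_3},\texttt{CNOT}}$ in $\calA'$, contradicting the unauthorised status of $P^{(23)}$.

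The hard part will be making the first direction fully rigorous under the semantic definition of equivalence in \Cref{def:equivalent-sys}, since in $\calA'$ the attributes $\Me'$ and $D'$ are indexed by $\calP_2\parens*{\Objq}$ rather than $\Objq$, so a program that modifies the classical object $\Me[X]$ in $\calA$ cannot literally execute on $\calA'$. The cleanest resolution is to restrict attention to programs whose attribute-modifying requests are expressed at the granularity shared by both models (i.e., modifications to $\Me[X]$ are realised as the corresponding batch of updates to $\Me'[\{X,Y\}]$ for all $Y$, and analogously for $D$), which is consistent with how earlier proofs in the paper handle the $\leq$ direction between models with different attribute domains. Under this convention, the invariant $\Me'[\{X,Y\}]=\Me[X]\wedge \Me[Y]$ (and similarly for $D$) is preserved by the $\PU$ rule, and every request receives the same authorisation decision in both systems, yielding $\calA\simeq \calA'$ and hence $\ENT^1\leq \ENT^2$; combined with the pigeonhole-style separation above we conclude $\ENT^1<\ENT^2$.
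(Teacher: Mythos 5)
Your overall strategy is the same as the paper's: prove $\ENT^1\leq\ENT^2$ by an attribute encoding, and $\ENT^2\not\leq\ENT^1$ by exhibiting a pairwise $\Me$ pattern that no per-register $\Me'$ can reproduce. Your separating instance (a triangle on $X_1,X_2,X_3$ with exactly one forbidden edge) differs from the paper's (two disjoint allowed pairs among four registers), but the argument is identical in spirit and is correct, modulo the same informal restrictions on $\Rt'$ that the paper also invokes.

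There is, however, a concrete error in your $\leq$ direction: the connective in your encoding of $D$ is wrong. You set $D'\bracks*{\braces*{X,Y}}:=D\bracks*{X}\wedge D\bracks*{Y}$, whereas the semantics of $D$ (``promised disentangled from \emph{all} other registers'') dictates $D'\bracks*{\braces*{X,Y}}:=D\bracks*{X}\vee D\bracks*{Y}$, which is what the paper uses. To see that your choice breaks the equivalence, suppose $D\bracks*{X}=\Me\bracks*{X}=\mathit{true}$ and there is some $Y$ with $D\bracks*{Y}=\mathit{false}$ and $\Me\bracks*{Y}=\mathit{true}$. In $\calA$ the request $\parens*{s,\Me\bracks*{X},\texttt{write}}$ passes $p_{\textup{e}}$ of \Cref{def:1-ent-control}, because $\neg D\bracks*{X}$ fails. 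But in your $\calA'$ the corresponding batch contains a write to $\Me'\bracks*{\braces*{X,Y}}$, which is denied by $p_{\textup{e}}$ of \Cref{def:2-ent-control}, since $D'\bracks*{\braces*{X,Y}}=\mathit{false}$ while $\Me'\bracks*{\braces*{X,Y}}=\mathit{true}$. So an authorised history in $\calA$ maps to an unauthorised one in $\calA'$, contradicting $\calA\simeq\calA'$. Replacing $\wedge$ by $\vee$ in the $D'$ encoding repairs this. (Your closing remark about the mismatch of attribute domains is fair; the paper is equally informal on that point, asserting the equivalence of the two systems without spelling out how programs that modify $\Me$ are translated.)
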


\begin{proof}
	\qquad
	\begin{enumerate}
		\item 
			First we prove $\ENT^2\not\leq \ENT^1$.
			The proof idea is by observing that 
			the attribute $\Me$ of a system in $\ENT^2$ records whether two quantum registers can be entangled,
			while $\Me$ of a system in $\ENT^1$ only records whether a quantum register can be entangled with others.
			Therefore, a system in $\ENT^2$ has a more fine-grained control of entanglement than a system in $\ENT^1$.

			Let us consider a system $\calA=\parens*{\Sub,\Obj, \Rt,\Atr,\Rule}\in \ENT^2$,
			where $\Rtc=\emptyset$, $\Rtq=\braces*{\texttt{CNOT},\texttt{measure}}$,
			$\Sub=\braces*{u,v}$, $\Objc=\emptyset$, and $\Objq=\braces*{X_1,X_2,X_3,X_4}$.
			Here, $\texttt{measure}$ means the ability to perform a complete measurement.
			Attributes $\Mcc,\Mq,\Me,D\in \Atr$ are initialised as follows.
			As $\Objc=\emptyset$, we set $\Mcc=\emptyset$.
			%\begin{equation*}
			%	\Mcc\bracks*{s,o} =
			%	\begin{cases}
			%		\braces*{\texttt{read},\texttt{write}}, & s=u,\\
			%		\emptyset, & o.w.
			%	\end{cases}
			%\end{equation*}
			%for $s\in \Sub, o\in \Objc$,
			%and 
			For $s\in \Sub, o\in \Objq$:
			$\Mq\bracks*{s,o}=\braces*{\texttt{CNOT},\texttt{measure}}$
			and $D\bracks*{o}=\mathit{true}$.
			For $s\in \Sub,o\in \calP_2\parens*{\Objq}$:
			\begin{equation*}
				\Me\bracks*{o}=
				\begin{cases}
					\mathit{true}, & o=\braces*{X_1,X_2}\vee o=\braces*{X_3,X_4},\\
					\mathit{false}, & o.w.
				\end{cases}
			\end{equation*}

			Assume for contradiction that there exists another system $\calA'=\parens*{\Sub,\Obj',\Rt',\Atr',\Rule'}\in \ENT^1$
			with $\Mcc',\Mq',\Me',D'\in \Atr'$ such that $\calA\simeq$
			We can further assume that $\Objc'=\emptyset$ and $\Rtc'=\emptyset$,
			because otherwise $\calA$ and $\calA'$ will be obviously inequivalent.
			Using similar reasoning to that in the proof of $\SUBSYS^{k}\not\leq\SUBSYS^{k-1}$ in \Cref{thm:flex-hierarchy},
			we can also restrict that $\Rtq'=\braces*{\texttt{CNOT},\texttt{measure}}$.

			Consider an execution $\parens*{S,P}$ with
            \begin{equation*}
                P_u\equiv \mathit{CNOT}\bracks*{X_1,X_2}; \mathit{CNOT}\bracks*{X_3,X_4}
            \end{equation*}
			and $P_v\equiv \bot$. 
			By our construction of $\calA$,
			the history generated by $\parens*{S,P}$ in $\calA$ is 
			$\parens*{u,\braces*{X_1,X_2},\texttt{CNOT}},\parens*{u,\braces*{X_3,X_4},\texttt{CNOT}}$
			and authorised according to \Cref{def:2-ent-control}.

			On the other hand, since we assume $\calA\simeq \calA'$,
			the history generated by $\parens*{S,P}$ in $\calA'$ is also authorised.
			By \Cref{def:1-ent-control}, this implies that $\Me'\bracks*{o}=\mathit{true}$ for $o\in \braces*{X_1,X_2,X_3,X_4}$
			(meaning any quantum register in $\calA'$ can be entangled with others)
			and $\texttt{CNOT}\in \Mq'\bracks*{u,X_1}\cap \Mq'\bracks*{u,X_3}$.
			%Moreover, there are two cases:
			%either $\texttt{CNOT}\in \Mq\bracks*{u,X_1},\Mq\bracks*{u,X_3}$; or $\Rt'$ contains some rights to performs
			%quantum circuits $U_1,\ldots, U_m$ such that $U_m\ldots U_1=\mathit{CNOT}$. 
			%The latter case is similar to the one in the proof of \Cref{lmm:flex-subsystem},
			%and will yields $\calA\simeq\calA'$. It suffices to consider the former case.

			Now consider another execution $\parens*{S,P'}$
			with $P_u'\equiv \mathit{CNOT}\bracks*{X_1,X_3}$ and $P_v'\equiv \bot$.
			The histories generated by $\parens*{S,P'}$ in $\calA$ and $\calA'$ 
			are the same $\parens*{u,\braces*{X_1,X_3},\texttt{CNOT}}$.
			By our construction of $\calA$, this history is unauthorised in $\calA$ 
			because $\Me\bracks*{X_1,X_3}=\mathit{false}$ (see the authorisation rule in \Cref{def:2-ent-control}).
			However, it is authorised in $\calA'$ due to $\Me'\bracks*{X_1}=\Me'\bracks*{X_2}=\mathit{true}$
			and $\texttt{CNOT}\in \Mq'\bracks*{u,X_1}\cap \Mq'\bracks*{u,X_3}$ (see the authorisation rule in \Cref{def:1-ent-control}).
			Hence, we obtain a contradiction and the conclusion follows.
		\item
			Next we prove $\ENT^1\leq \ENT^2$.
			Consider a system $\calA=\parens*{\Sub,\Obj,\Rt,\Atr,\Rule}\in \ENT^1$ with $\Mcc,\Mq,\Me,D\in \Atr$.
			Then, we can define $\calA'=\parens*{\Sub,\Obj,\Rt,\Atr',\Rule'}\in \ENT^2$ with $\Mcc',\Mq',\Me',D'\in \Atr'$
			such that $\Mcc'=\Mcc$, $\Mq'=\Mq$, and for any $o_1\neq o_2\in \Objq$: 
			$\Me'\bracks*{o_1,o_2}=\Me\bracks*{o_1}\wedge \Me\bracks*{o_2}$
			and $D'\bracks*{o_1,o_2}=D\bracks*{o_1}\vee D\bracks*{o_2}$.
			It is easy to see that $\calA\simeq\calA'$.
	\end{enumerate}
\end{proof}

\subsubsection{Proof of \Cref{thm:flex-compare}~\Cref{thm:subsys-not-leq-grp-ent,thm:grp-not-leq-ent-subsys}}
\label{sub:proof_of_flex_compare}

Now we prove \Cref{thm:subsys-not-leq-grp-ent,thm:grp-not-leq-ent-subsys} of \Cref{thm:flex-compare}.
First, \Cref{thm:subsys-not-leq-grp-ent} in \Cref{thm:flex-compare} can be restated as the following lemma.

\begin{lemma}
	\label{lmm:subsys-comp-grp-ent}
	$\SUBSYS\not\leq \GRP,\ENT$.
\end{lemma}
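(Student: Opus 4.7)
The plan is to establish the two non-inclusions $\SUBSYS \not\leq \GRP$ and $\SUBSYS \not\leq \ENT$ separately, each by exhibiting a system $\calA \in \SUBSYS^2$ whose pattern of authorised multi-register requests cannot be matched by any system in the target model. Both constructions will follow the style of the main-text proof of $\ENT \not\leq \SUBSYS$: pick a small counterexample system, and then produce two programs whose generated histories disagree on authorisation between $\calA$ and any purported equivalent $\calA'$.

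For $\SUBSYS \not\leq \GRP$, the structural obstruction I will exploit is that the attribute $G$ of any system in $\GRP^k$ induces a partition on $\Objq$, forcing ``$X$ and $Y$ can be jointly operated on'' to be a transitive relation. I will construct $\calA \in \SUBSYS^2$ with $\Sub = \braces*{u,v}$, $\Objc = \emptyset$, $\Objq = \braces*{X_1, X_2, X_3}$, $\Rtc = \emptyset$, $\Rtq = \braces*{\texttt{CNOT}}$, and initialise $\Mq\bracks*{u, \braces*{X_1, X_2}} = \Mq\bracks*{u, \braces*{X_2, X_3}} = \braces*{\texttt{CNOT}}$ while $\Mq\bracks*{u, \braces*{X_1, X_3}} = \emptyset$ (every other entry empty). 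For any supposedly equivalent $\calA' \in \GRP$, I will first normalise as in the paper's other flexibility proofs (assuming $\Objc' = \emptyset$ and $\Rtq' = \braces*{\texttt{CNOT}}$, since otherwise an obvious inequivalence already arises). A program $P_u$ that issues CNOT on $\braces*{X_1, X_2}$ and then on $\braces*{X_2, X_3}$ yields an authorised history in $\calA$, and by equivalence also in $\calA'$; by \Cref{def:group-control} this forces $G'\bracks*{X_1} = G'\bracks*{X_2} = G'\bracks*{X_3}$ and $\texttt{CNOT} \in \Mq'\bracks*{u, X_i}$ for $i=1,2,3$. A second program issuing a single CNOT on $\braces*{X_1, X_3}$ is then authorised in $\calA'$ but not in $\calA$, the required contradiction.

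For $\SUBSYS \not\leq \ENT$, the obstruction I will exploit is that in both $\ENT^1$ and $\ENT^2$ the attribute $\Mq$ is indexed by individual quantum registers rather than by subsystems, so any right allowed on a multi-register subsystem is automatically inherited by every singleton request involving a member of that subsystem. I will construct $\calA \in \SUBSYS^2$ with $\Sub = \braces*{u,v}$, $\Objq = \braces*{X_1, X_2}$, $\Objc = \emptyset$, $\Rtc = \emptyset$, and a single quantum right $r \in \Rtq$, setting $\Mq\bracks*{u, \braces*{X_1, X_2}} = \braces*{r}$ while $\Mq\bracks*{u, \braces*{X_1}} = \Mq\bracks*{u, \braces*{X_2}} = \emptyset$. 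For any supposedly equivalent $\calA' \in \ENT$, a program that issues the request $(u, \braces*{X_1, X_2}, r)$ must be authorised, which by \Cref{def:1-ent-control,def:2-ent-control} forces $r \in \Mq'\bracks*{u, X_1} \cap \Mq'\bracks*{u, X_2}$ together with the relevant $\Me'$ entries being $\mathit{true}$. A second program issuing the singleton request $(u, \braces*{X_1}, r)$ then triggers no $\Me'$ check, because the predicate $p_{\textup{q}}$ imposes the $\Me'$ condition only when $\abs*{o} > 1$, so this request is authorised in $\calA'$ yet remains unauthorised in $\calA$, giving the required contradiction.

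The main technical obstacle I anticipate is ruling out additional machinery in $\calA'$, such as extra classical objects in $\Objc'$ that would allow dynamic modification of $G$, $\Mq$, or $\Me$ during an execution, which could in principle flip authorisation decisions between the two counterexample programs and evade the static contradiction above. I plan to dispatch this the same way as in the paper's earlier flexibility proofs: any such dynamic modification would itself have to be triggered by some prefix of requests present in $\calA'$ but absent in $\calA$ (or conversely), directly breaking the history-by-history equivalence condition of \Cref{def:equivalent-sys}; hence $\calA'$ may be normalised to a form in which the relevant attributes are static, reducing the task to the static analysis sketched above.
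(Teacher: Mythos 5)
Your proof of $\SUBSYS\not\leq\GRP$ is essentially identical to the paper's: the same three-register system with $\Mq\bracks*{u,\braces*{X_1,X_2}}=\Mq\bracks*{u,\braces*{X_2,X_3}}=\braces*{\texttt{CNOT}}$, the same pair of programs, and the same transitivity-of-$G$ obstruction. For $\SUBSYS\not\leq\ENT$, however, you take a genuinely different and considerably simpler route. The paper's argument is dynamic: it grants all individual rights, runs $H;\mathit{CNOT}$ to make $D'$ false, and then shows that $\ENT$ cannot revoke joint access without an intervening measurement (via the interaction of $p_{\textup{e}}$, $D$ and the post-update rule), whereas $\SUBSYS$ can revoke it by simply rewriting $\Mq$. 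Your argument is static: since $\Mq$ in $\ENT^1$ and $\ENT^2$ is indexed by individual registers and $p_{\textup{q}}$ requires $r\in\Mq\bracks*{s,X}$ for every $X\in o$ (with the $\Me$ check applying only when $\abs*{o}>1$), any $\ENT$ system that authorises a pair request must authorise the corresponding singleton requests, so a $\SUBSYS$ system granting $r$ on $\braces*{X_1,X_2}$ but not on $\braces*{X_1}$ has no $\ENT$ counterpart. Both obstructions are real and both yield the lemma; yours is shorter and avoids the case analysis, while the paper's exhibits the inequivalence on the more ``natural'' ground of operation control versus entanglement control (and survives even if one were to re-index $\Mq$ by subsystems). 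Two small points to tighten in your version: choose the right $r$ concretely (e.g., $r=\texttt{all}$, as the paper does elsewhere) so that both the request $\parens*{u,\braces*{X_1,X_2},r}$ and the singleton request $\parens*{u,\braces*{X_1},r}$ are actually generatable by programs under the paper's informal request-generation semantics; and state the normalisation $\Rtq'=\Rtq$ for the $\ENT$ case as well, exactly as the paper does by appeal to the $\SUBSYS^k\not\leq\SUBSYS^{k-1}$ argument. Your closing remark about dynamic attribute modification is correct but is not even needed for your constructions, since every distinguishing request is evaluated against the initial attributes and neither $\GRP$ nor the quantum-only prefixes you use trigger any attribute update.
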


\begin{proof}
	\quad
	\begin{itemize}
		\item 
			We first prove that $\SUBSYS\not\leq \GRP$.
			The proof idea is similar to that for proving $\ENT^2\not\leq \ENT^1$. 
			Intuitively, the attribute $\Mq$ in a system in $\SUBSYS$ records 
			information about subsystems which consists of multiple quantum registers,
			while the attributes $\Mq, G$ in a system in $\GRP$ only records
			information about each individual quantum register.
			In some cases, the former provides a more fine-grained control of quantum operations than the latter.

			Let us consider a system $\calA=\parens*{\Sub, \Obj, \Rt, \Atr, \Rule}\in \SUBSYS$, 
			where $\Rtc=\emptyset$, $\Rtq=\braces*{\texttt{CNOT}}$,
			$\Sub= \braces*{u,v}$, $\Objc=\emptyset$, and $\Objq=\braces*{X_1,X_2,X_3}$.
			Attributes $\Mcc,\Mq\in \Atr$ are initialised as follows.
			As $\Objc=\emptyset$, we set $\Mcc=\emptyset$.
			For $s\in \Sub,o\subseteq \Objq$.
			\begin{equation*}
				\Mq\bracks*{s,o}= 
				\begin{cases}
					\braces*{\texttt{CNOT}}, & s=u \wedge \parens*{o= \braces*{X_1,X_2}\vee o =\braces*{X_2,X_3}},\\
					\emptyset, & o.w.
				\end{cases}
				%\label{eq:sub-minus-grp}
			\end{equation*}

			Assume for contradiction that there exists another system 
			$\calA'=\parens*{\Sub,\Obj',\Rt',\Atr',\Rule'}\in \GRP$
			with $\Mcc',\Mq',G'\in \Atr'$ such that $\calA'\simeq \calA$.
			We can further assume that $\Objc'=\emptyset$ and $\Rtc'=\emptyset$,
			because otherwise $\calA$ and $\calA'$ will be obviously inequivalent.
			Using similar reasoning to that in the proof of $\SUBSYS^{k}\not\leq\SUBSYS^{k-1}$ in \Cref{thm:flex-hierarchy},
			we can also restrict that $\Rtq'=\braces*{\texttt{CNOT}}$.

			Consider an execution $\parens*{S,P}$
			with 
            \begin{equation*}
                P_u\equiv \mathit{CNOT}\bracks*{X_1,X_2}; \mathit{CNOT}\bracks*{X_2,X_3}
            \end{equation*}
			and $P_v\equiv \bot$.
			By our construction of $\calA$,
			the history generated by $\parens*{S,P}$
			in $\calA$ is $\parens*{u,\braces*{X_1,X_2},\texttt{CNOT}},\braces*{u,\braces*{X_2,X_3}, \texttt{CNOT}}$ and authorised.
			Since we assume $\calA\simeq \calA'$,
			the history generated by $\parens*{S,P}$ in $\calA'$
			is also be authorised.
			By the authorisation rule in \Cref{def:group-control}, 
			this implies that $\texttt{CNOT}\in \Mq'\bracks*{X_1}\cap \Mq'\bracks*{X_3}$ and $G'\bracks*{X_1}=G'\bracks*{X_2}=G'\bracks*{X_3}$.
			%Moreover, there are two cases:
			%either $\texttt{CNOT}\in\Mq\bracks*{u,X_1}, \Mq\bracks*{u,X_3}$;
			%or $\Rt'$ contains some rights to perform quantum circuits $U_1,\ldots, U_m$
			%such that $U_m\ldots U_1=\mathit{CNOT}$.
			%The latter case is similar to the one in the proof of \Cref{lmm:flex-subsystem},
			%and will yield $\calA'\simeq \calA$,
			%so it suffices to consider the former case.

			Now we consider another execution $\parens*{S,P'}$,
			with $P_u'\equiv \mathit{CNOT}\bracks*{X_1,X_3}$ and $P_v'\equiv \bot$.
			The histories generated by $\parens*{S,P'}$ in $\calA$ and $\calA'$ 
			are the same $\parens*{u,\braces*{X_1,X_3}, \texttt{CNOT}}$.
			By our construction of $\calA$, this history is unauthorised in $\calA$ as $\texttt{CNOT}\notin \Mq\bracks*{u,\braces*{X_1,X_3}}$ (see the authorisation rule in \Cref{def:subsys-control}).
			However, it is authorised in $\calA'$ because $\texttt{CNOT}\in \Mq'\bracks*{X_1}\cap \Mq'\bracks*{X_3}$ and $G'\bracks*{X_1}=G'\bracks*{X_3}$ (see the authorisation rule in \Cref{def:group-control}).
			Hence, we obtain a contradiction and the conclusion follows.
		\item
			Next we prove that $\SUBSYS \not\leq \ENT$.
			The proof idea is essentially using the difference between control of quantum operations and control of entanglement.
			In particular, a system in $\SUBSYS$ controls whether a quantum operation is authorised or unauthorised,
			and does not force a measurement to be applied before modifying $\Mq$.
			In contrast, a system in $\ENT$ controls whether entanglement is allowed to exist between quantum registers,
			and a measurement has to be applied if there is no promise of disentanglement,
			before we modify $\Me$ to disentangle two objects.

			Specifically, let us prove $\SUBSYS \not\leq \ENT^2$.
			Consider a system $\calA=\parens*{\Sub, \Obj, \Rt, \Atr, \Rule}\in \SUBSYS$, 
			where $\Rtc=\braces*{\texttt{read},\texttt{write}}$, $\Rtq=\braces*{\texttt{H},\texttt{CNOT},\texttt{measure}}$,
			$\Sub=\braces*{u,v}$, $\Objc=\braces*{\Mq}$, and $\Objq=\braces*{X_1,X_2,X_3}$.
			Here, $\Objc=\braces*{\Mq}$ implies that $\Mq$ can be dynamically modified.
			Attributes $\Mcc,\Mq\in \Atr$ are initialised as follows.
			For $s\in \Sub, o\in \Objc$:
			\begin{equation*}
				%\label{eq:sub-minus-ent-mc}
				\Mcc\bracks*{s,o} =
				\begin{cases}
					\braces*{\texttt{read},\texttt{write}}, & s=u,\\
					\emptyset, & o.w.
				\end{cases}
			\end{equation*}
			For $s\in \Sub, o\subseteq \Objq$:
			$\Mq\bracks*{s,o}=\braces*{\texttt{H},\texttt{CNOT}}$.
			%\begin{equation*}
			%	%\label{eq:sub-minus-ent-mq}
			%	\Mq\bracks*{s,o} = 
			%	\begin{cases}
			%		%\braces*{\texttt{H}}, & o = \braces*{X_1},\\
			%		\braces*{\texttt{CNOT}}, & o= \braces*{X_1,X_2}\vee o= \braces*{X_2,X_3},\\
			%		\emptyset, & o.w.
			%	\end{cases}
			%\end{equation*}

			Assume for contradiction that there exists another system $\calA'=\parens*{\Sub,\Obj',\Rt',\Atr',\Rule'}\in \ENT^2$
			with $\Obj'=\Objc'\cup \Objq$ and $\Mcc',\Mq',\Me',D'\in \Atr'$ such that $\calA'\simeq \calA$.
			Using similar reasoning to that in the proof of $\SUBSYS^{k}\not\leq\SUBSYS^{k-1}$ in \Cref{thm:flex-hierarchy},
			we can further restrict that $\Rtq'=\braces*{\texttt{H},\texttt{CNOT},\texttt{measure}}$.

			Consider an execution $\parens*{S,P}$.
			Here, the scheduler $S$ is defined by
			$S\parens*{\alpha\parens*{0},\ldots, \alpha\parens*{t-1}}=s\parens*{t}$,
			where $s\parens*{0}=s\parens*{1}=s\parens*{2}=u$ and $s\parens*{3}=s\parens*{4}=v$.
			The program $P$ is defined by
			\begin{equation*}
				P_u\equiv H\bracks*{X_1};\mathit{CNOT}\bracks*{X_1,X_2}; \mathit{forbid}\parens*{v,\braces*{X_1,X_2}}
			\end{equation*}
			and $P_v\equiv \bot$,
			where $\mathit{forbid}\parens*{v,\braces*{X_1,X_2}}$ 
			means to modify attributes such that future request $\parens*{v,\braces*{X_1,X_2},r}$
			will be unauthorised for any right $r$.
			By our construction of $\calA$,
			the history $\alpha$ generated by $\parens*{S,P}$
			in $\calA$ is 
			\begin{equation*}
                \begin{split}
                    &\parens*{u, \braces*{X_1}, \texttt{H}}, \parens*{u,\braces*{X_1,X_2},\texttt{CNOT}},\\
                    &\qquad\braces*{u,\Mq\bracks*{v,\braces*{X_1,X_2}},\texttt{read}}, \parens*{u, \Mq\bracks*{v,\braces*{X_1,X_2}},\texttt{write}}
                \end{split}
			\end{equation*}
			and authorised.
			%However, it is not authorised in $\calA'$, no matter how we choose $\Mq',\Me',E'$.
			
			On the other hand, since we assume $\calA'\simeq \calA$,
			the history $\alpha'$ generated by $\parens*{S, P}$ in $\calA'$ is also authorised.
			Note that according to the authorisation rule in~\Cref{def:2-ent-control},
			whether the future request $\parens*{v,\braces*{X_1,X_2},r}$ will be authorised in $\calA'$ is determined 
			by the attributes $\Me'\bracks*{X_1,X_2}$, $\Mq'\bracks*{v,X_1}$ and $\Mq'\bracks*{v,X_2}$.
			As $P_u$ contains $\mathit{forbid}\parens*{v,\braces*{X_1,X_2}}$,
			the above implies the following two cases:
			\begin{itemize}
				\item 
					Either there exists $t_3\in\N$ such that $\alpha'\parens*{t_3}=\parens*{u,\Me'\bracks*{X_1,X_2},r}$ 
					for some right $r$ that modifies (e.g., \texttt{write}) $\Me'\bracks*{X_1,X_2}$ to $\mathit{false}$.
					In this case, after $\mathit{CNOT}\bracks*{X_1,X_2}$ in $P_u$ is executed,
					the quantum state of $X_1,X_2$ becomes 
                    \begin{equation*}
                        \frac{1}{\sqrt{2}}\parens*{\ket{0}_{X_1}\ket{0}_{X_2}+\ket{1}_{X_1}\ket{1}_{X_2}},
                    \end{equation*}
					%Since $u$ performs $\mathit{CNOT}\bracks*{X_1,X_2}$ at the beginning,
					and we also have $D'\bracks*{X_1,X_2}=\mathit{false}$ at some time $t_1<t_3$
					(meaning that $X_1,X_2$ are not promised to be disentangled),
					due to the post-update rule in \Cref{def:2-ent-control}.
					Moreover, according to~\Cref{def:2-ent-control},
					this implies that there exists $t_2\in (t_1,t_3)$ with $\alpha'\parens*{t_2}=\parens*{s,X,\texttt{measure}}$
					for some $s\in \Sub$ and $X\in \braces*{X_1,X_2}$.
					However, such $\alpha'$ cannot be generated from program $P$,
					because $P$ does not contain measurement.
				\item
					Or there exists $t\in \N$ such that $\alpha'\parens*{t}=\parens*{u,\Mq'\bracks*{v,X},r}$
					for some right $r$ that modifies (e.g., \texttt{write}) $\Mq'\bracks*{v,X}$ and removes $\texttt{CNOT}$ from it,
					where $X\in \braces*{X_1,X_2}$.
					In this case, after the final request of $\alpha'$, we have $\texttt{CNOT}\notin\Mq'\bracks*{v,X}$.

					Now consider another execution $\parens*{S,P'}$ with $P_u'= P_u$
					and $P_v'\equiv \mathit{CNOT}\bracks*{X_2,X_3}; \mathit{CNOT}\bracks*{X_1,X_3}$.
					By the construction of the scheduler $S$,
					the history generated by $\parens*{S,P'}$ in $\calA$ is $\beta=\alpha,\parens*{v,\braces*{X_2,X_3},\texttt{CNOT}},\parens*{v,\braces*{X_1,X_3},\texttt{CNOT}}$,
					where $\alpha$ is the history generated by $\parens*{S,P}$ in $\calA$ previously.
					Since after $\alpha$, we still have $\texttt{CNOT}\in \Mq\bracks*{v,\braces*{X_1,X_3}}\cap \Mq\bracks*{v,\braces*{X_2,X_3}}$,
					the history $\beta$ is authorised in $\calA$ (see the authorisation rule in \Cref{def:subsys-control}).
					Similarly, the history generated by $\parens*{S,P'}$ in $\calA'$ is
                    \begin{equation*}
                        \beta'=\alpha',\parens*{v,\braces*{X_2,X_3},\texttt{CNOT}},\parens*{v,\braces*{X_1,X_3},\texttt{CNOT}},
                    \end{equation*}
					where $\alpha'$ is the history generated by $\parens*{S,P}$ in $\calA'$ previously.
					However, $\beta'$ is unauthorised in $\calA'$,
					because after $\alpha'$, we have $\texttt{CNOT}\notin\Mq'\bracks*{v,X}$ for some $X\in \braces*{X_1,X_2}$ 
					(see the authorisation rule in \Cref{def:2-ent-control}).
			\end{itemize}
			In either case, we obtain a contradiction and the conclusion follows.
	\end{itemize}
\end{proof}

Second, \Cref{thm:grp-not-leq-ent-subsys} in \Cref{thm:flex-compare} can be restated as the following lemma.

\begin{lemma}
	\label{lmm:grp-comp-ent-subsys}
	$\GRP \not\leq \ENT,\SUBSYS^{<N}$ and $\GRP\leq \SUBSYS$,
\end{lemma}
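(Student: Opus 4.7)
The lemma bundles three sub-claims, and the plan is to attack them separately, imitating the proof techniques already used in \Cref{thm:flex-hierarchy} and in the proof of $\ENT\not\leq\SUBSYS$.

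For $\GRP\leq\SUBSYS$, I would give a direct embedding. Given $\calA\in\GRP^k$ with attributes $\Mcc,\Mq,G$, construct $\calA'\in\SUBSYS^{\abs*{\Objq}}$ with the same $\Sub,\Obj,\Rt$, set $\Mcc'=\Mcc$, and define $\Mq'[s,o]=\bigcap_{X\in o}\Mq[s,X]$ whenever all registers in $o$ share a common value of $G$, and $\Mq'[s,o]=\emptyset$ otherwise. Under this choice the authorization rule in \Cref{def:subsys-control} collapses exactly to the rule in \Cref{def:group-control}, witnessing $\calA\simeq\calA'$. The case in which $G$ is dynamically modifiable via $G\in\Objc$ is handled by also putting $\Mq'$ in $\Objc'$ and arranging writes to $G$ in $\calA$ to mirror the induced updates on the derived $\Mq'$ in $\calA'$, in the same spirit as attribute registers are treated in \Cref{sub:problem_setting}.

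For $\GRP\not\leq\SUBSYS^{<N}$, the plan is to exhibit a $\GRP$ system whose authorized operation ranges over all of $\Objq$ at once. Take $\calA\in\GRP$ with $\Objq=\{X_1,\ldots,X_N\}$ all placed in a single group, a single user $u$, $\Rtq=\{\texttt{QFT}_N\}$ and $\Mq[u,X_j]=\{\texttt{QFT}_N\}$; the program $P_u\equiv\mathit{QFT}_N[X_1,\ldots,X_N]$ then generates in $\calA$ the single authorized request $(u,\{X_1,\ldots,X_N\},\texttt{QFT}_N)$. Any supposedly equivalent $\calA'\in\SUBSYS^{<N}$ with the same $\Objq$ has subsystem parameter $k'<N$, so by \Cref{def:subsys-control} any request on an $N$-register subsystem is unconditionally denied; hence the same program in $\calA'$ must unfold into several requests $(u,o,r)$ with $\abs*{o}\leq k'$ and $r\neq\texttt{QFT}_N$. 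Mirroring the argument for $\SUBSYS^k\not\leq\SUBSYS^{k-1}$ in \Cref{thm:flex-hierarchy}, I would select the first such smaller request $(u,o,r)$, consider the program $P_u'\equiv U[o]$ that performs only the corresponding circuit $U$, and observe that it generates an authorized single-request history in $\calA'$ but cannot generate any valid history in $\calA$ because $r\notin\Rtq$, contradicting $\calA\simeq\calA'$.

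For $\GRP\not\leq\ENT$, the plan is to exploit a sharp asymmetry: in $\GRP$ writing to $G$ to re-shape the allowed subsystems is unconstrained beyond $\Mcc$, whereas in $\ENT$ tightening an $\Me$-entry to $\mathit{false}$ requires the disentanglement promise $D=\mathit{true}$, which, by the post-update rules in \Cref{def:1-ent-control,def:2-ent-control}, can only be re-established through an explicit complete measurement. I would construct $\calA\in\GRP$ with $\Sub=\{u\}$, $\Objc=\{G\}$, $\Objq=\{X_1,X_2,X_3\}$, $\Mcc[u,G]=\{\texttt{write}\}$, $\Mq[u,X_j]=\{\texttt{CNOT}\}$, and initial $G[X_1]=G[X_2]=1$, $G[X_3]=2$, then compare two programs in $\calA$: $P_u\equiv\mathit{CNOT}[X_1,X_2];\,G[X_2]:=2;\,\mathit{CNOT}[X_2,X_3]$, whose history is fully authorized because after the write $G[X_2]=G[X_3]=2$, and $P_u'\equiv\mathit{CNOT}[X_1,X_2];\,\mathit{CNOT}[X_2,X_3]$, whose second $\mathit{CNOT}$ is denied since $G[X_2]\neq G[X_3]$ initially. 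Assuming $\calA'\in\ENT$ is equivalent to $\calA$, the write to $G[X_2]$ in $\calA'$ is a classical write to an object unrelated to $\Me',D'$, and by \Cref{def:1-ent-control,def:2-ent-control} the post-update rule only updates $D'$ in response to quantum subsystem requests and never modifies $\Me'$ automatically; hence the authorization of the final $\mathit{CNOT}[X_2,X_3]$ must be identical under $P_u$ and $P_u'$ in $\calA'$, contradicting the different behaviour in $\calA$. The main obstacle I expect is ruling out the possibility that $\calA'$ encodes the $G$-information through some cleverly chosen object whose writes indirectly propagate to $\Me'$ or $D'$; this is resolved by appealing to the rigid syntactic form of the rules in \Cref{def:1-ent-control,def:2-ent-control}, since only requests of the explicit form $(s,\Me'[\cdot],r)$ affect $\Me'$ through $\Auth$ and only requests with $o\in\calP_+(\Objq)$ trigger the post-update on $D'$.
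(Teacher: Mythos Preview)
Your treatments of $\GRP\leq\SUBSYS$ and $\GRP\not\leq\SUBSYS^{<N}$ are correct and essentially identical to the paper's.

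The argument for $\GRP\not\leq\ENT$ has a genuine gap. Your pivotal claim that ``the write to $G[X_2]$ in $\calA'$ is a classical write to an object unrelated to $\Me',D'$'' is not justified, and in fact cannot hold: since that write changes future authorisation in $\calA$, and $\calA'\simeq\calA$, the corresponding requests generated in $\calA'$ \emph{must} touch the attributes governing authorisation there, namely $\Me'$ or $\Mq'$. Concretely, take $\calA'\in\ENT^2$ with $\Me'[X_1,X_2]=\mathit{true}$, $\Me'[X_2,X_3]=\mathit{false}$ initially, $\texttt{CNOT}\in\Mq'[u,X_j]$ for all $j$, and let the command $G[X_2]:=2$ generate the request $(u,\Me'[X_2,X_3],\texttt{write})$ setting that entry to $\mathit{true}$. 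This write is permitted by $p_{\textup{e}}$ (the guard only blocks writes when the current value of $\Me'$ is $\mathit{true}$ and $D'$ is $\mathit{false}$), and under this $\calA'$ your $P_u$ is fully authorised while your $P_u'$ has its second $\mathit{CNOT}$ denied---exactly matching $\calA$. So your two programs do not separate $\GRP$ from $\ENT$.

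The deeper issue is that your comparison tests only the \emph{enabling} side of the group change (allowing $X_2$--$X_3$ operations), which $\ENT$ can replicate freely by flipping an $\Me'$ entry from $\mathit{false}$ to $\mathit{true}$. What $\ENT$ cannot replicate without an intervening measurement is the \emph{disabling} side: forbidding $X_1$--$X_2$ operations after they have already been entangled, since $p_{\textup{e}}$ blocks setting $\Me'[X_1,X_2]:=\mathit{false}$ while $D'[X_1,X_2]=\mathit{false}$. The paper exploits exactly this asymmetry: it uses an abstract command $\mathit{newgrp}(X_1,X_4)$ whose semantics \emph{forbid} certain future entangling operations, then case-splits on whether $\calA'$ implements this via $\Me'$ (blocked by $p_{\textup{e}}$ after the prior $\mathit{CNOT}$, as the program contains no measurement) or via shrinking some $\Mq'[v,X]$ (ruled out by a second execution in which $v$ legitimately applies $\mathit{CNOT}$ to $X$ paired with a different register). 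Your construction could be salvaged along these lines, but it requires an additional program probing the forbid side and a real case analysis, not the blanket assertion that the write is ``unrelated'' to $\Me',D'$.
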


\begin{proof}
	\quad
	\begin{itemize}
		\item 
			We first prove that $\GRP\not\leq \SUBSYS^{< N}$.
			The proof idea is by noticing that a system in $\SUBSYS^{<N}$ only allows
			quantum operations on subsystem of size $<\abs*{\Objq}$,
			while a system in $\GRP$ can allow quantum operations on subsystem of size $\abs*{\Objq}$.

			Let us consider a system $\calA =\parens*{\Sub, \Obj, \Rt,\Atr, \Rule}\in \GRP$,
			where $\Rtc=\emptyset$, $\Rtq=\braces*{\texttt{QFT}_k}$,
			$\Sub=\braces*{u,v}$, $\Objc=\emptyset$, and $\Objq=\braces*{X_1,\ldots, X_k}$.
			Here, $\texttt{QFT}_k$ means the ability to perform a quantum Fourier transform circuit $\mathit{QFT}_k$ on $k$ qubits.
			Attributes $\Mcc,\Mq,G\in \Atr$ are initialised as follows.
			Since $\Objc=\emptyset$, we set $\Mcc=\emptyset$.
			For any $s\in \Sub, o\in \Objq$:
			\begin{equation*}
				%\label{eq:grp-minus-subsys-mq}
				\Mq\bracks*{s,o}=
				\begin{cases}
					\texttt{QFT}_k, & s=u,\\
					\emptyset, & o.w.
				\end{cases}
				%\label{eq:grp-minus-subsys-g}
			\end{equation*}
			and $G\bracks*{o}=1$.

			Consider another system $\calA'=\parens*{\Sub,\Obj,\Rt',\Atr',\Rule'}\in \SUBSYS^{<N}$
			with $\Mcc',\Mq'\in \Atr'$.
			Suppose that $\calA' \in \SUBSYS^{k'}$ for some $k'<\abs*{\Objq}=k$.

			Consider an execution $\parens*{S,P}$ with $P_u\equiv \mathit{QFT}_k$ and $P_v\equiv \bot$.
			By our construction of $\calA$,
			the history generated by $\parens*{S,P}$ in $\calA$ is 
			simply $\parens*{u,\braces*{X_1,\ldots, X_k},\texttt{QFT}_k}$ and authorised.
			Now consider the history $\alpha$ generated by $\parens*{S,P}$ in $\calA'$.
			There are two cases:
			either $\alpha$ contains multiple requests like in the proof of $\SUBSYS^k \not\leq \SUBSYS^{k-1}$ in~\Cref{thm:flex-hierarchy},
			and then we can derive $\calA\not\simeq \calA'$;
			or $\alpha=\parens*{u,\braces*{X_1,\ldots,X_k},\texttt{QFT}_k}$, which is unauthorised
			due to the authorisation rule in~\Cref{def:subsys-control} and $k=\abs*{\Objq}>k'$.
			In either case, $\calA\not\simeq \calA'$ and the conclusion follows.
		\item
			Next we prove $\GRP \not\leq \ENT$.
			Like in the proof of $\SUBSYS\not\leq \ENT$ in \Cref{lmm:subsys-comp-grp-ent},
			the idea is essentially using the difference between control of quantum operations and control of entanglement.
			Specifically, let us prove $\GRP \not\leq \ENT^2$.

			Consider a system $\calA=\parens*{\Sub, \Obj, \Rt, \Atr,\Rule}\in \GRP$,
			where $\Rtc=\braces*{\texttt{read},\texttt{write}}$, $\Rtq=\braces*{\texttt{H},\texttt{CNOT},\texttt{measure}}$,
			$\Sub=\braces*{u,v}$, $\Objc=\braces*{G}$, and $\Objq=\braces*{X_1,X_2,X_3,X_4}$.
			Note that $\Objc=\braces*{G}$ means $G$ can be dynamically modified. 
			Attributes $\Mcc,\Mq,G\in \Atr$ are initialised as follows.
			For $s\in \Sub, o\in \Objc$:
			\begin{equation}
				\Mcc\bracks*{s,o}= 
				\begin{cases}
					\braces*{\texttt{read},\texttt{write}}, & s=u,\\
					\emptyset,  & o.w.
				\end{cases}
			\end{equation}
			For $s\in \Sub,o\in \Objq$:
			$\Mq\bracks*{s,o}=\braces*{\texttt{H},\texttt{CNOT}}$.
			Let $G\bracks*{X_1}=G\bracks*{X_2}=G\bracks*{X_3}=1$ and $G\bracks*{X_4}=2$.

			Assume for contradiction that there exists another system $\calA'=\parens*{\Sub, \Obj', \Rt', \Atr', \Rule'}\in \ENT^2$
			with $\Obj'=\Objc'\cup \Objq$ and $\Mcc',\Mq',\Me',D'\in \Atr'$ such that $\calA'\simeq \calA$.
			Using similar reasoning to that in the proof of $\SUBSYS^{k}\not\leq\SUBSYS^{k-1}$ in \Cref{thm:flex-hierarchy},
			we can further restrict that $\Rtq'=\braces*{\texttt{H},\texttt{CNOT},\texttt{measure}}$.

			Consider an execution $\parens*{S,P}$.
			Here, the scheduler $S$ is defined by
			$S\parens*{\alpha\parens*{0},\ldots, \alpha\parens*{t-1}}=s\parens*{t}$,
			where $s\parens*{0}=s\parens*{1}=s\parens*{2}=u$ and $s\parens*{3}=s\parens*{4}=v$.
			The program $P$ is defined by
			\begin{equation*}
				P_u\equiv H\bracks*{X_1};\mathit{CNOT}\bracks*{X_1,X_2}; \mathit{newgrp}\parens*{X_1,X_4}
			\end{equation*}
			and $P_v\equiv \bot$, where $\mathit{newgrp}\parens*{X_1}$ means to modify attributes such that
			$X_1,X_4$ are put into a new group and any quantum operation on $X_1$ or $X_4$ should act within this group;
			i.e., for future request $\parens*{s,o,r}$, if $o\cap \braces*{X_1,X_4}\neq \emptyset$,
			then $o\subseteq \braces*{X_1,X_4}$.
			By our construction of $\calA$,
			the history generated by $\parens*{S,P}$ in $\calA$ is 
			\begin{equation*}
                \begin{split}
                    &\parens*{u,\braces*{X_1},\texttt{H}},\parens*{u,\braces*{X_1,X_2},\texttt{CNOT}},\parens*{u,G\bracks*{X_1},\texttt{read}},\\
                    &\qquad\parens*{u,G\bracks*{X_1},\texttt{write}},\parens*{u,G\bracks*{X_4},\texttt{read}},\parens*{u,G\bracks*{X_4},\texttt{write}}
                \end{split}
			\end{equation*}
			and authorised.

			The remaining reasoning is similar to the proof of $\SUBSYS\not\leq \ENT^2$ in \Cref{lmm:subsys-comp-grp-ent}.
			Since we assume $\calA\simeq \calA'$,
			the history $\alpha'$ generated by $\parens*{S,P}$ in $\calA'$ is also authorised.
			This implies initially $\Mq'\bracks*{X_1,X_2}=\mathit{true}$.
			According to the authorisation rule in \Cref{def:group-control},
			whether the future request $\parens*{s,o,r}$ with $o\cap \braces*{X_1,X_4}\neq \emptyset$
			will be authorised is determined by the attributes $\Me'\bracks*{X_1,X}$ for $X\neq X_1$,
			$\Me'\bracks*{X_4,X}$ for $X\neq X_4$,
			$\Mq'\bracks*{X_1}$ and $\Mq'\bracks*{X_4}$.
			As $P_u$ contains $\mathit{newgrp}\parens*{X_1,X_4}$, 
			which forbids future request like $\parens*{v,\braces*{X_1,X_2},r}$,
			the above implies the following two cases:
			\begin{itemize}
				\item 
					Either there exists $t\in \N$ such that $\alpha'\parens*{t}=\parens*{u,\Me'\bracks*{X_1,X_2},r}$ for some right $r$
					that modifies $\Me'\bracks*{X_1,X_2}$ to $\mathit{false}$.
					In this case, using the same reasoning as in the proof of $\SUBSYS\not\leq \ENT^2$,
					we can derive $\calA\not\simeq \calA'$.
				\item
					Or there exists $t\in \N$ such that $\alpha'\parens*{t}=\parens*{u, \Mq'\bracks*{v,X},r}$ for some right $r$
					that modifies (e.g., $\texttt{write}$) $\Mq'\bracks*{v,X}$ and removes $\texttt{CNOT}$ from it,
					where $X\in \braces*{X_1,X_2}$.
					In this case, after the final request of $\alpha'$, we have $\texttt{CNOT}\notin \Mq'\bracks*{v,X}$.

					Now consider another execution $\parens*{S,P'}$ with $P_u'=P_u$
					and $P_v'\equiv \mathit{CNOT}\bracks*{X_1, X_4}; \mathit{CNOT}\bracks*{X_2,X_3}$.
					By the construction of the scheduler $S$, 
					the history generated by $\parens*{S,P'}$ in $\calA$ is
                    \begin{equation*}
                        \beta=\alpha,\parens*{v,\braces*{X_1, X_4},\texttt{CNOT}},\parens*{v, \braces*{X_2,X_3},\texttt{CNOT}}.
                    \end{equation*}
					Since after $\alpha$, we still have $\texttt{CNOT}\in \Mq\bracks*{v,Y}$ for $Y\in \braces*{X_1,X_2,X_3,X_4}$,
					$G\bracks*{X_1}=G\bracks*{X_4}$ and $G\bracks*{X_2}=G\bracks*{X_3}$,
					the history $\beta$ is authorised in $\calA$ (see the authorisation rule in \Cref{def:group-control}).
					Similarly, the history $\beta'$ generated by $\parens*{S,P'}$ in $\calA'$ is
                    \begin{equation*}
                        \beta'=\alpha', \parens*{v, \braces*{X_1,X_4},\texttt{CNOT}},\parens*{v,\braces*{X_2,X_3},\texttt{CNOT}}.
                    \end{equation*}
					However, $\beta'$ is unauthorised in $\calA'$,
					because after $\alpha'$, we have $\texttt{CNOT}\notin \Mq'\bracks*{v,X}$ for some $X\in \braces*{X_1,X_2}$
					(see the authorisation rule in \Cref{def:2-ent-control}), .
			\end{itemize}
			In either case, we obtain a contradiction and the conclusion follows.
		\item
			Finally we prove $\GRP\leq \SUBSYS$.
			Consider a system $\calA=\parens*{\Sub, \Obj, \Rt,\Atr,\Rule}\in \GRP$ with $\Mcc,\Mq,G\in \Atr$.
			Then, we can define $\calA'=\parens*{\Sub, \Obj,\Rt,\Atr',\Rule'}\in \SUBSYS$,
			where $\Mcc',\Mq'\in \Atr'$ are defined by $\Mcc'=\Mcc$, and
			for $s\in \Sub,o\subseteq \Objq$:
			\begin{equation*}
				\Mq'\bracks*{s,o} = 
				\begin{cases}
					\bigcap_{X\in o}\Mq\bracks*{s,X}, & \forall X,Y\in o, G\bracks*{X}=G\bracks*{Y},\\
					\emptyset, & o.w.
				\end{cases}
			\end{equation*}
			It is easy to see that $\calA\simeq \calA'$ from this construction.
	\end{itemize}
\end{proof}

\section{Background on Probabilistic Graphical Models}
\label{sec:background_on_probabilistic_graphical_models}

In this section, we briefly introduce the notations and tools
in probabilistic graphical models, used in \Cref{sub:proof_of_classical-security}.
The readers are referred to the textbook~\cite{Pearl00} for a more thorough introduction.

\paragraph{Probabilities}

For a random variable $A$, we use $\Pr\bracks*{A=a}\in [0,1]$ to denote the probability of $A$ taking the value $a$.
It holds that $\sum_a \Pr\bracks*{A=a}=1$.
The joint probability 
\begin{equation*}
    \Pr\bracks*{A=a,B=b}=\Pr\bracks*{A=a\cap B=b}
\end{equation*}
denotes the probability of $A$ taking the value $a$
and another random variable $B$ taking the value $b$.
For simplicity, sometimes we simply write $A,B$ for the random variable $(A,B)$.

%If $\Pr\bracks*{B=b}>0$, the conditional probability $\Pr\bracks*{A=a\midv B=b}=\frac{\Pr\bracks*{A=a,B=b}}{\Pr\bracks*{B=b}}$
Let the conditional probability $\Pr\bracks*{A=a\mid B=b}$
be the probability of $A$ taking the value $a$ given that $B$ takes the value $b$. 
The joint probability $\Pr\bracks*{A=a,B=b}$ can be calculated by
\begin{equation}
    \label{eq:pr-product}
    \Pr\bracks*{A=a,B=b}=\Pr\bracks*{A=a\mid B=b}\cdot \Pr\bracks*{B=b}.
\end{equation}
By summing over $a$, we have the following decomposition of $\Pr\bracks*{A=a}$ by conditioning on different $B=b$:
\begin{equation}
    \Pr\bracks*{A=a}=\sum_b \Pr\bracks*{A=a\mid B=b}\cdot\Pr\bracks*{B=b}.
\end{equation}

A useful generalisation of \Cref{eq:pr-product} is the following \textit{chain rule}:
\begin{equation}
    \label{eq:chain-rule}
    \begin{split}
        &\Pr\bracks*{A_1=a_1,\ldots, A_n=a_n}\\
        =&\Pr\bracks*{A_n=a_n\mid A_{n-1}=a_{n-1},\ldots, A_1=a_1}\\
        &\qquad\qquad\ldots \Pr\bracks*{A_2=a_2\mid A_1=a_1}\Pr\bracks*{A_1=a_1}.
    \end{split}
\end{equation}

If $\Pr\bracks*{A=a\mid B=b}=\Pr\bracks*{B=b}$ whenever $\Pr\bracks*{B=b}>0$,
then $A$ and $B$ are said to be \textit{independent}, denoted by $A\CI B$.
More generally, if
\begin{equation*}
    \Pr\bracks*{A=a\mid B=b, C=c}=\Pr\bracks*{B=b\mid C=c}
\end{equation*}
whenever $\Pr\bracks*{B=b,C=c}>0$,
then $A$ and $B$ are said to be \textit{conditionally independent} given $C$, denoted by $A\CI B\mid C$.
Intuitively, it means that if we know the value of $C$, we cannot learn extra information about $A$ from learning the value of $B$.
We mention two useful properties of conditional independence:
\begin{itemize}
    \item 
        (Symmetry) $A\CI B\mid C$ implies $B\CI A \mid C$.
    \item 
        (Decomposition) $A\CI B,C \mid D$ implies $A\CI B\mid D$.
\end{itemize}
The concept of conditional independence plays an important role in probabilistic graphical models.
%$A$ and $B$ are said to be independent, if $\Pr\bracks*{A=a,B=b}=\Pr\bracks*{A=a}\cdot \Pr\bracks*{B=b}$
%for any $a$ and $b$.

As usual, we use $\E\bracks*{A}=\sum_a a\cdot \Pr\bracks*{A=a}$ to denote the expectation of $A$.

\paragraph{Probabilistic Graphical Models}

A graph can be used to represent the relations between multiple random variables.
Each vertex represents a random variable.
A \textit{directed} edge between random variables represents a \textit{causal relation};
i.e., $A\rightarrow B$ means that $A$ can influence $B$.
A \textit{bidirected} edge between random variables represent a \textit{mutual dependence},
often due to an unobserved common cause;
i.e., $A \leftrightarrow B$ means that $A$ and $B$ are dependent.
For example, in \Cref{fig:graphical},
the directed edge $A(t_1)\rightarrow B(t_v)$ comes from that 
$B(t_v)$ is written by user $v$, who reads the secret $A(t_1)$ written by user $u$;
the undirected edges between $X_1,\ldots, X_n$ comes from that
the values of these $X_j$ are randomly drawn by user $v$ from a distribution (see \Cref{fig:userv}).
Recall that in \Cref{fig:graphical}, vertices within a gray area are fully connected by bidirected edges. 

%If the graph is directed and acyclic, then
Based on the graph structure,
the joint probability of random variables corresponding to all vertices
can be decomposed into a product of conditional probabilities,
In particular, we can refine the chain rule in \Cref{eq:chain-rule} to 
\begin{equation}
    \begin{split}
        &\Pr\bracks*{A_1=a_1,\ldots, A_n=a_n}\\
        =&\prod_j \Pr\bracks*{A_j=a_j\midv \forall k, (A_k\rightarrow A_j\wedge k<j) \Rightarrow (A_k=a_k) },
    \end{split}
\end{equation}
where $A_k\rightarrow A_j$ denotes a directed edge.
%where $\operatorname{Parents}(X):=\braces*{Y:Y\rightarrow X}$ 
%stands for the set of parents of $X$.
We have used this decomposition rule to decompose \Cref{eq:joint-original} into \Cref{eq:joint-pr}
in \Cref{sub:proof_of_classical-security}.

Moreover, the graph structure allow us to
conveniently infer the conditional independence of random variables.
Let $X,Y,Z$ be sets of random variables. 
$X$ is said to be $d$-separated from $Y$ by $Z$ if,
for any (undirected) path $p$ from a node $A\in X$ to a node $B\in Y$,
one of the following conditions hold:
\begin{itemize}
    \item 
        $p$ contains $C\rightarrow D \rightarrow E$ or $C\leftarrow D\rightarrow E$ with $D\in Z$; or
    \item 
        $p$ contains $C\rightarrow D\leftarrow E$ such that for any $F$, if $D\rightarrow^* F$, then $F\notin Z$.
\end{itemize}
If $X$ is $d$-\textit{separated} from $Y$ by $Z$,
then we have $X\CI Y \mid Z$.
In \Cref{sub:proof_of_classical-security}, 
we have derived several conditional independence relations from \Cref{fig:graphical}
using this notion.

%\section{Further Discussion}
%\label{sec:further_discussion}
%
%How to generalise \texttt{write} on an integer register $X$?
%There are several choices.
%For example, we can generalise it to the set of rights: $\braces*{\texttt{reset}_x:x\in \mathbf{Int}}$,
%%\braces*{\sum_y \ket{x}\!\bra{y}\parens*{\cdot}\ket{y}\!\bra{x}:x\in \mathbf{Int}}.
%where each $\texttt{reset}_x$ means applying the quantum operation $\sum_y \ket{x}\!\bra{y}\parens*{\cdot}\ket{y}\!\bra{x}$ on $X$.
%Another choice is to generalise it to the set of rights:
%$\braces*{\texttt{reset}_{\psi}:\ket{\psi}\in \calH_{\mathbf{Int}}}$,
%%\braces*{\sum_{y} \ket{\psi}\!\bra{y}\parens*{\cdot}\ket{y}\!\bra{\psi}:\ket{\psi}\in \calH_{\mathbf{Int}}},
%where each $\texttt{reset}_{\psi}$ means applying the quantum operation $\sum_y \ket{\psi}\!\bra{y}\parens*{\cdot}\ket{y}\!\bra{\psi}$ on $X$.
%We can also generalise it to \texttt{swap},
%which means swapping the value in $X$ and the value in the local memory of the subject executing this operation.
%The above three generalisations are essentially different.
%There are definitely other reasonable choices.

\end{document}